  \providecommand\BibTeX{{%
    \normalfont B\kern-0.5em{\scshape i\kern-0.25em b}\kern-0.8em\TeX}}}
\newcommand{\sysName}{{\sf Veer}\xspace}
\newcommand{\sysOpt}{{\sf Veer}\textsuperscript{+}\xspace}
\definecolor{blush}{rgb}{0.87, 0.36, 0.51}
\definecolor{myorange}{RGB}{255, 181, 112}
\newcommand{\sadeem}[1]{{\color{black} #1}}
\newcommand{\boldstart}{\vspace{0.2em}\noindent\textbf}
\newtheorem*{problem}{Problem Statement} 
\newtheorem*{assume}{Assumption}
\newtheorem{example}{Example} 
\definecolor{myblue}{RGB}{236,243,253}
\definecolor{mydarkblue}{RGB}{191,206,228}
\DeclareRobustCommand\mytikzdot{\tikz \fill[myorange] (1ex,1ex) circle (0.8ex);}
\DeclareRobustCommand\mytikzbox{\tikz \filldraw[fill=myblue, draw=mydarkblue,thick,dashed] (1ex,1ex) rectangle (3.4ex,2.4ex);}
\begin{document}
\title{\sysName: Verifying Equivalence of  Dataflow Versions in Iterative Data Analytics (Extended Version)}

\author{Sadeem Alsudais, Avinash Kumar, and Chen Li}
\affiliation{%
  \institution{Department of Computer Science, UC Irvine, CA 92697, USA}
}
\email{{salsudai, avinask1, chenli}@ics.uci.edu}

\renewcommand{\shortauthors}{Alsudais and Kumar et al.}
\renewcommand{\shorttitle}{\sysName: Verifying Equivalence of Workflow Versions}

\begin{abstract}
Data analytics using GUI-based workflows is an iterative process in which an analyst makes many iterations of changes to refine the dataflow, generating a different version at each iteration. In many cases, the result of executing a dataflow version is equivalent to a result of a prior executed version. Identifying such equivalence between the execution results of different dataflow versions is important for optimizing the performance of a dataflow by reusing results from a previous run.  The size of the dataflows and the complexity of their operators (e.g., UDF and ML models) often make existing equivalence verifiers (EVs) unable to solve the problem.  In this paper, we present ``\sysName,'' which leverages the fact that two dataflow versions can be very similar except for a few changes. The solution divides the dataflow version pair into small parts, called {\em windows}, and verifies the equivalence within each window by using an existing EV as a black box.  We develop solutions to efficiently generate windows and verify the equivalence within each window.  Our thorough experiments on real dataflows show that  \sysName is able to not only verify the equivalence of dataflows that cannot be supported by existing EVs but also do the verification efficiently.
\end{abstract}

\begin{CCSXML}
<ccs2012>
   <concept>
       <concept_id>10003752.10010124</concept_id>
       <concept_desc>Theory of computation~Semantics and reasoning</concept_desc>
       <concept_significance>500</concept_significance>
       </concept>
 </ccs2012>
\end{CCSXML}

\ccsdesc[500]{Theory of computation~Semantics and reasoning}
\keywords{iterative data analysis, dataflow equivalence verification}

\maketitle

\section{Introduction} 
\label{sec:intro}

Big data processing platforms, especially GUI-based systems, enable users to quickly construct complex analytical dataflows~\cite{alteryx,journals/pvldb/KumarWNL20,databricks-tracking}.
These dataflows are refined in iterations, generating a new version at each iteration, before a final dataflow is constructed, due to the nature of exploratory and iterative data analytics~\cite{conf/sigmod/DerakhshanMKRM22,conf/sigmod/XuKAR22}.
For example, Figure~\ref{fig:running-example} shows a dataflow for finding the relevant Tweets by the top $k$ non-commercial influencers based on their tweeting rate on a specific topic.
After the analyst constructs the initial dataflow version (a) and executes it, she refines the dataflow to achieve the desired results. This yields the following edit operations highlighted in the figure, 1) deleting the filter `o' operator, 2) adding the filter `g' operator, and 3) adding the filter `h' operator.

\begin{figure}[hbt]
  \centering 
  \begin{subfigure}[t]{\linewidth}
         \centering
         \includegraphics[width=\linewidth]{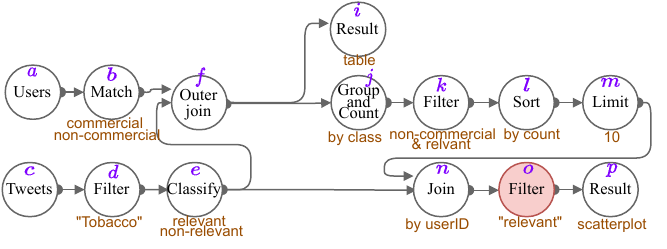}
                \vspace{-0.2in}
         \caption{Version 1: Initial dataflow with sinks $s_i$ of all users' tweets, and $s_p$ of top $k$ non-commercial influencers' relevant tweets. The highlighted operator indicates that it is deleted in a subsequent version.}
         \label{fig:running-example-v1}
     \end{subfigure}
     \begin{subfigure}[t]{\linewidth}
         \centering
\includegraphics[width=\linewidth]{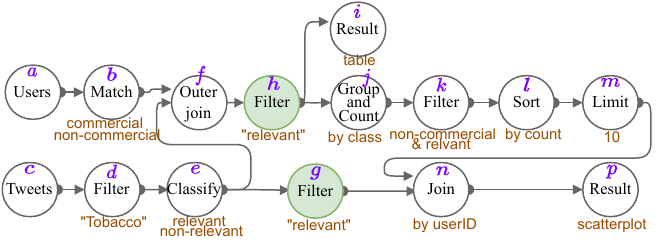}
         \caption{Version 2: Refined version to optimize the dataflow performance and filter on relevant tweets of all users. The highlighted operators are newly added in the new version.}
         \label{fig:running-example-v2}
     \end{subfigure}
       \vspace{-0.1in}
  \caption{Example dataflow and its evolution in two versions.}
  \label{fig:running-example}
  \vspace{-0.15in}
\end{figure}


There has been a growing interest recently in keeping track of these dataflow versions and their execution results~\cite{klaus_greff-proc-scipy-2017,conf/sigmod/VartakSLVHMZ16,databricks-tracking,conf/vldb/Alsudais22,conf/sc/WoodmanHWM11}.
In many applications, these dataflows have a significant amount of overlap and equivalence~\cite{journals/corr/abs-2004-00481, journals/pvldb/JindalKRP18,journals/pvldb/ZhouANHX19,conf/vldb/Alsudais22}. 
For example, $45\%$ of the daily jobs in Microsoft's analytics clusters have overlaps~\cite{journals/pvldb/JindalKRP18}. $27\%$ of $9,486$ dataflows to detect fraud transactions from Ant Financial have overlaps, $6\%$ of which is equivalent~\cite{journals/corr/abs-2004-00481}.
The final results of executing a dataflow are called the dataflow's ``sinks''.
In the running example, the edits applied on version (a) that led to a new version (b) had no effect on the result of the sink labeled `p'.  Identifying such equivalence between the execution results of different dataflow versions is important.  The following are two example use cases.

{\em Use case 1: Optimizing dataflow execution.}
dataflows can take a long time to run due to the size of the data and their computational complexity, especially when they have advanced machine learning operations~\cite{journals/corr/abs-2004-00481,journals/pvldb/KumarWNL20,misc/flink}.
Optimizing the performance of a dataflow execution has been studied extensively in the literature~\cite{conf/icde/PerezJ14,conf/sigmod/DursunBCK17}. 
One optimizing technique is by leveraging the iterative nature of data analytics to reuse previously materialized results~\cite{conf/sigmod/DerakhshanMKRM22,journals/pvldb/JindalKRP18}.  

{\em Use case 2: Reducing storage space.} The execution of a dataflow may produce a large number of results and storing the output of all generated jobs is impractical~\cite{journals/pvldb/ElghandourA12}. Due to the nature of the overlap and equivalence of consecutive versions, one line of work~\cite{journals/pvldb/AhmadKKN12,conf/sigmod/DerakhshanMKRM22} periodically performs a view de-duplication to remove duplicate stored results. Identifying the equivalence between the dataflow versions can be used to avoid storing duplicate results and helps in avoiding periodic clean-up of duplicate results.




These use cases show the need for effective and efficient solutions to decide the equivalence of two dataflow versions. 
We observe the following two unique traits of these GUI-based iterative dataflows.  ({\em T1}) these dataflows can be large and complex, with operators that are semantically rich~\cite{alteryx,conf/sigmod/XuKAR22,conf/sigmod/DerakhshanMKRM22}. 
For example, $8$ of randomly selected Alteryx's dataflows~\cite{alteryx} had an average of $48$ operators, with one of the dataflows containing $102$ operators, and comprised of mostly non-relational operators~\footnote{Details of these dataflows are in Appendix~\ref{sec:appendix-c}}.  Real dataflows in Texera~\cite{texera} had an average size of $23$ operators, and most of them had visualization and UDF operators.
Some operators are user-defined functions (UDF) that implement highly customized logic including machine learning techniques for analyzing data of different modalities such as text, images, audios, and videos~\cite{conf/sigmod/XuKAR22}.  For instance, the dataflows in the running example contain two non-relational operators, namely a {\sf Dictionary Matcher} and a {\sf Classifier}. 
 ({\em T2}) Those adjacent versions of the same dataflow tend to be similar, especially during the phase where the developer is refining the dataflow to do fine tuning~\cite{conf/sigmod/DursunBCK17,conf/sigmod/XuKAR22}.  For example, $50$\% of the dataflows belonging to the benchmarks that simulated real iterative tasks on video~\cite{conf/sigmod/XuKAR22} and TPC-H~\cite{conf/sigmod/DursunBCK17} data had overlap. The refinements between the successive versions comprised of only a few changes over a particular part of the dataflow. Thus, we want to study the following:
 
 %

\vspace{2mm}
\noindent\fbox{
    \parbox{0.9\linewidth}{
        \textbf{Problem Statement}: Given two similar versions of a complex dataflow, verify if they produce the same results.
    }
}
\vspace{2mm}

\boldstart{Limitations of existing solutions.}
dataflows include relational operators and UDFs~\cite{journals/pvldb/KumarWNL20}. Thus, we can view the problem of checking the equivalence of two dataflow versions as the problem of checking the equivalence of two SQL queries. The latter is undecidable in general~\cite{10.5555/551350} (based on the reduction from First-order logic).  There have been many Equivalence Verifiers (EVs) proposed to verify the equivalence of two SQL queries~\cite{journals/pvldb/ChuMRCS18,journals/pvldb/ZhouANHX19,journals/corr/abs-2004-00481}.
These EVs have {\em restrictions} on the type of operators they can support, and mainly  focus on relational operators such as SPJ, aggregation, and union. 
They cannot support many semantically rich operators common in dataflows, such as dictionary matching and classifier operators in the running example, and other operators such as unnest and sentiment analyzer. 
To investigate their limitations, we analyzed the SQL queries and dataflows from $6$ workloads, and generated an equivalent version by adding an empty filter operator. Then, we used EVs from the literature~\cite{journals/pvldb/ChuMRCS18,journals/pvldb/ZhouANHX19,journals/corr/abs-2004-00481,conf/sigmod/WangZYDHDT0022} to test the equivalence of these two versions. Table~\ref{tbl:supported} shows the average percentage of pairs for each workload that can be verified by these EVs, which is very low.

\begin{table*}[htbp]
\caption{Limitations of existing EVs to verify equivalence of dataflow versions from real workloads. \textmd{Details of the dataflows are in Appendix~\ref{sec:appendix-a}.} \label{tbl:supported}}
      \vspace{-0.15in}
\begin{adjustbox}{width=\linewidth,center}
\begin{tabular}{|l|r|r|r|r|r|r|} \hline
\textbf{Workload/EV} & \textbf{\# of pairs} & \textbf{UDP~\cite{journals/pvldb/ChuMRCS18}} &  \textbf{Equitas~\cite{journals/pvldb/ZhouANHX19}} & \textbf{Spes~\cite{journals/corr/abs-2004-00481}} & \textbf{WeTune~\cite{conf/sigmod/WangZYDHDT0022}} & \begin{tabular}[c]{@{}l@{}}\textbf{AVG. \% of pairs} \\ \textbf{supported by} \\ \textbf{existing EVs} \end{tabular} \\
\hline \hline
Calcite benchmark~\cite{calcitebenchmark:website} & $232$ & $39$ &  $91$ & $120$ & $73$ &  $34.81\%$  \\ \hline

Knime workflows hub~\cite{knimeworkflows:website} & $37$ & $1$ &  $1$ & $1$ & $1$ & $2.70\%$ \\ \hline
Orange workflows~\cite{orangeworkflows:website} & $32$ & $0$ &  $0$ & $0$ & $0$ & $0.00\%$ \\ \hline
IMDB sample workload~\cite{imdbload:website} & $5$  & $0$ &  $0$ & $0$ & $0$ & $0.00\%$  \\ \hline
TPC-DS benchmark~\cite{misc/tpcds} & $99$ & $2$ &  $2$ & $2$ & $2$ & $2.02\%$ \\ \hline
Texera dataflows~\cite{texera} & $105$ & $0$ &  $0$ & $0$ & $0$ & $0.00\%$ \\ \hline
\end{tabular}
\end{adjustbox}
\end{table*}

\boldstart{Our Approach.}
To solve the problem of verifying the equivalence of two dataflow versions, we leverage the fact that the two dataflow versions are almost identical except for a few local changes ({\em T2}). 
In this paper, we present \sysName\footnote{It stands for ``Versioned Execution Equivalence Verifier.''}, a verifier to test the equivalence of two dataflow versions.
It addresses the aforementioned problem by utilizing existing EVs as a black box. In \S\ref{sec:overview}, we give an overview of the solution, which divides the dataflow version pair into small parts, called ``windows'', so that each window satisfies the EV's restrictions in order to push testing the equivalence of a window to the EV. Our approach is simple yet highly effective in solving a challenging problem, making it easily applicable to a wide range of applications.

\boldstart{Why not develop a new EV?}
A natural question arises: why do we choose to use existing EVs instead of developing a new one? Since the problem itself is undecidable, any developed solution will inherently have limitations and incompleteness. Our goal is to create a general-purpose solution that maximizes completeness by harnessing the capabilities of these existing EVs. This approach allows us to effectively incorporate any new EVs that may emerge in the future, ensuring the adaptability and flexibility of our solution.

\boldstart{Challenges and Contributions.}
During the exploration of the proposed idea, we encountered several challenges in developing \sysName:
1) How can we enhance the completeness of the solution while maintaining efficiency and effectively handling the incompleteness of the EVs?
2) How do we efficiently handle dataflow versions with a single edit and perform the verification?
3) How can we effectively handle dataflow versions with multiple edits, and can the windows overlap?
We thoroughly investigate these challenges and present the following \textbf{contributions}.

\begin{enumerate}
\item We formulate the problem of verifying the equivalence of two complex dataflow versions in iterative data analytics. To the best of our knowledge, \sysName is the first work that studies this problem by incorporating the knowledge of user edit operations into the solution (\S\ref{sec:formal}).

\item We give an overview of the solution and formally define the ``window'' concept that is used in the equivalence verification algorithm (\S~\ref{sec:overview}).

\item We first consider the case where there is a single edit. We analyze how the containment between two windows is related to their equivalence results, and use this analysis to derive the concept of ``maximal covering window'' for an efficient verification. We give insights on how to use EVs and the subtle cases of the EVs' restrictions and completeness (\S\ref{sec:single-edit}).

\item We study the general case where the two versions have multiple edits. We analyze the challenges of using overlapping windows, and  propose a solution based on the ``decomposition'' concept. We discuss the correctness and the completeness of our algorithm (\S\ref{sec:multi-changes}).

\item We provide a number of optimizations in \sysOpt to improve the performance of the baseline algorithm (\S~\ref{sec:optimize}).

\item We report the results of a thorough experimental evaluation of the proposed solutions. The experiments show that the proposed solution is not only able to verify dataflows that cannot be verified by existing EVs, but also able to do the verification efficiently (\S~\ref{sec:experiment}).

\end{enumerate}
\section{Problem Formulation}\label{sec:formal}


In this section, we use an example dataflow to describe the setting. We also formally define the problem of verifying equivalence of two dataflow versions. 
Table~\ref{tbl:notation} shows a summary of the notations used in this section.

\begin{table}[htbp]
\caption{Notations used for a single dataflow. \label{tbl:notation}}
\begin{adjustbox}{width=\linewidth,center}
\begin{tabular}{|l|l|}
\hline
\textbf{Notation}              & \textbf{Description}                 \\ \hline \hline
$W$, DAG & A data processing workflow                  \\
$\mathbb D_w = \{D_1, \ldots, D_l\}$ & A set of data sources in the dataflow       \\
$\mathbb{S}_w = \{s_1, \ldots, s_n\}$ & A set of sinks in the dataflow             \\
$\mathcal{M}$ & An edit mapping between two versions\\
$\delta_j$ & A set of edit operations to \\  & transform DAG $v_j$ to $v_{j+1}$ \\
$\oplus$ & Applying aggregated edit operations \\ & on a dataflow version                          \\
$\mathcal{V}_w = [v_1, \ldots, v_m]$   & A list of dataflow versions \\
\hline 
\end{tabular}
\end{adjustbox}
\end{table}

\boldstart{Data processing workflow.} 
We consider a data processing workflow $W$ as a directed acyclic graph (DAG), where each vertex is an operator and each link represents the direction of data flow. Each operator contains a computation function, we call it a {\em property} such as a predicate condition, e.g., Price $< 20$. Each operator has outgoing links, and its produced data is sent on each outgoing link. An operator without any incoming links is called a {\sf Source}.
An operator without any outgoing links is called a {\sf Sink}, and it produces the final results as a table to be consumed by the user. A dataflow may have multiple data source operators denoted as $\mathbb D_W = \{D_1, \ldots, D_l\}$ and multiple sink operators denoted as $\mathbb S_W = \{s_1, \ldots, s_n\}$.

For example, consider a dataflow in Figure~\ref{fig:running-example-v1}. It has two source operators {\sf ``Tweets''} and {\sf ``Users''} and  two sink operators $s_i$ and $s_p$ to show a tabular result and a scatterplot visualization, respectively. The {\sf OuterJoin} operator has two outgoing links to push its processed tuples to the downstream {\sf Aggregate} and {\sf Sink} operators. The {\sf Filter} operator's properties include the boolean selection predicate.

\subsection{Dataflow Version Control}
A dataflow $W$ undergoes many edits from the time it was first constructed as part of the iterative process of data analytics~\cite{journals/debu/0001D18, conf/sc/WoodmanHWM11}. A dataflow $W$ has a list of versions $V_W = [v_1, \ldots, v_m]$ along a timeline in which the dataflow changes. Each $v_j$ is an immutable version of dataflow $W$ in one time point following version $v_{j-1}$, and contains a number of edit operations to transform $v_{j-1}$ to $v_j$.

\begin{definition}[Dataflow edit operation]
We consider the following edit operations on a dataflow:
\begin{itemize}
    \item An addition of a new operator.
    \item A deletion of an existing operator.
    \item A modification of the properties of an operator while the operator's type remains the same, e.g., changing the predicate condition of a {\sf Select} operator.
    \item An addition of a new link.
    \item A removal of an existing link. \footnote{We assume links do not have properties. Our solution can be generalized to the case where links have properties.}
\end{itemize}
\end{definition}\label{def:edit}

A combination of these edit operations is a {\em transformation}, denoted as $\delta_j$. 
The operation of applying the transformation $\delta_j$ to a dataflow version $v_j$ is denoted as $\oplus$, which produces a new version $v_{j+1}$. Formally, 

\begin{equation} 
\label{equ:transform} 
v_{j+1} \quad = \quad  v_j \quad \oplus \quad \delta_j.
\end{equation}

In the running example, the analyst makes edits to revise the dataflow version $v_1$ in Figure~\ref{fig:running-example-v1}. In particular, she (1) deletes the {\sf Filter\textsubscript{o}} operator; (2) adds a new {\sf Filter\textsubscript{h}} operator; (3) and adds a new {\sf Filter\textsubscript{g}} operator.
These operations along with the necessary link changes to add those operators correspond to a transformation, $\delta_1$ and applying it on $v_1$ will result in a new version $v_2$, illustrated in Figure~\ref{fig:running-example-v2}.

\boldstart{Dataflow edit mapping.} Given a pair of versions ($P, Q$) and an edit mapping $\mathcal M$, there is a corresponding transformation from $P$ to $Q$, which aligns every operator in $P$ to at most one operator in $Q$. Each operator in $Q$ is mapped onto by at most one operator in $P$. A link between two operators in $P$ maps to a link between the corresponding operators in $Q$. Those operators and links in $P$ that are not mapped to any operators and links in $Q$ are assumed to be deleted. Similarly, those operators and links in $Q$ that are not mapped onto by any operators and links in $P$ are assumed to be inserted.

Figure~\ref{fig:mapping} shows an example edit mapping between the two versions $v_1$ and $v_2$ in the running example.  As {\sf Filter\textsubscript{y}} from $v_1$ is deleted, the opertor is not mapped to any operator in $v_2$. 

\begin{figure}[hbt]
           \begin{subfigure}[t]{0.49\linewidth}
\includegraphics[width=\linewidth]{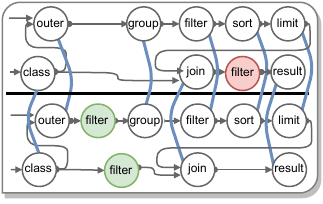}
         \caption{Mapping of operators.}
         \label{fig:ops}
         \end{subfigure}
           \begin{subfigure}[t]{0.49\linewidth}
             \includegraphics[width=\linewidth]{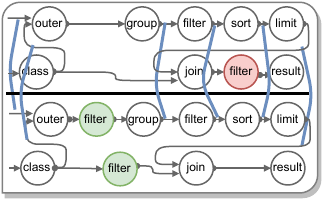}
         \caption{Mapping of links.}
         \label{fig:links}
         \end{subfigure}
         \caption{Example of an edit mapping between version $v_1$ and $v_2$. \textmd{Portions of the dataflows are omitted for clarity.}}
         \label{fig:mapping}
     \end{figure}

Figure~\ref{fig:completeness-mapping} illustrates an example to show the relation between edit mapping and edit operations. Suppose two versions $P$ and $Q$ as follows: $$P = \{Project(all) \rightarrow Filter(age > 24) \rightarrow Aggregate(\textrm{count by age}) \}.$$
$$Q = \{Aggregate(\textrm{count by age}) \rightarrow Filter(age > 24) \rightarrow Project(all) \}.$$ There can be many different mappings that correspond to different set of edit operations. Consider a mapping $\mathcal M_1$ where Project from $P$ is mapped to Aggregate in $Q$, Filter in $P$ is mapped to Filter in $Q$, and Aggregate in $P$ is mapped to Project in $Q$. This mapping yields the set of edit operations of (swapping project and aggregate in both versions). A different mapping $\mathcal M_2$, which maps Aggregation in $P$ to that in $Q$ yields the edit operations of deleting both Project and Filter in $P$ and inserting them after Aggregate in version $Q$.

\begin{figure}[hbtp]
\includegraphics[width=\linewidth]{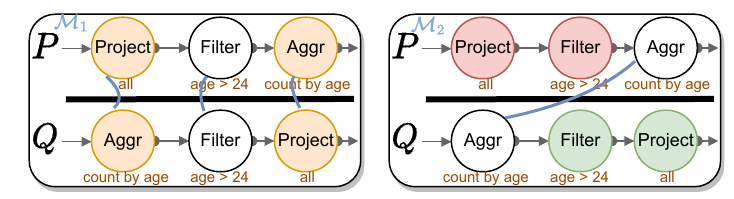}
         \caption{An example of two edit mappings ($\mathcal M_1$ on the left and $\mathcal M_2$ on the right) leading to two different sets of edit operations. \textmd{Update edit operation is highlighted in orange, removal in red, and addition in green.}}
         \label{fig:completeness-mapping}
\end{figure}

\subsection{Dataflow's Execution and Results}
A user submits an execution request to run a dataflow version. The execution produces {\em result} of each sink in the version. 
We make the following assumption: 
\begin{assume}
Multiple executions of a dataflow (or a portion of the dataflow) will always produce the same results~\footnote{This assumption is valid in many real-world applications as we detail in the experiment Section~\ref{sec:experiment}}.   
\end{assume} \label{def:assume}

\boldstart{Result equivalence of dataflow versions.}
The execution request for the version $v_j$ may produce a sink result equivalent to the corresponding sink of a previous executed version $v_{j-k}$, where $k < j$.  For example, in Figure~\ref{fig:running-example-v2}, executing the dataflow version $v_2$ produces a result of the scatterplot sink $s_2$ equivalent to the result of the corresponding scatterplot of $v_1$. In particular, $v_2$'s edit is pushing down the {\sf Filter} operator and the scatterplot result remains the same. 
Notice however that the result of $s_i$ in $v_2$ is not equivalent to the result of $s_i$ in $v_1$ because of the addition of the new {\sf Filter\textsubscript{h}} operator. Now, we formally define ``sink equivalence.''

\begin{definition}[Sink Equivalence and Version-Pair Equivalence] 
Consider two dataflow versions $P$ and $Q$  with a set of edits $\delta = \{c_1 \ldots c_n\}$ and the corresponding mapping $\mathcal M$ from $P$ to $Q$. Each version can have multiple sinks.
For each sink $s$ of $P$, consider the corresponding sink $\mathcal M(s)$ of $Q$.
We say $s$ is {\em equivalent to} $\mathcal M(s)$, denoted as ``$s \equiv \mathcal M(s)$,'' if for every instance of data sources of $P$ and $Q$, the two sinks produce the same result under the application's specified table semantics. In other words, for every tuple $t$ in $s$ \textit{exists} in $\mathcal M(s)$ under Set table semantics, every tuple $t$ in $s$ has an \textit{identical} tuple in $\mathcal M(s)$ under Bag table semantics, or every tuple $t$ in $s$ has an \textit{identical} tuple and the same \textit{order} in $\mathcal M(s)$ under Ordered Bag table semantics.
\sadeem{We say $s$ is {\em inequivalent to} $\mathcal M(s)$, denoted as ``$s \not \equiv \mathcal M(s)$,'' if there exists an instance of data sources of $P$ and $Q$ where the two sinks produce different results.}
The two versions are called \textit{equivalent}, denoted as ``$P \equiv Q$'', if each pair of their sinks under the mapping is equivalent.
\sadeem{The two versions are called \textit{inequivalent}, denoted as ``$P \not \equiv Q$, if any pair of their sinks under the mapping is inequivalent.}
\end{definition} 
\label{def:equiv} 

In this paper, we study the problem where the two versions have a single sink. We generalize the solution to the case of multiple sinks in a followup work~\cite{conf/hilda/AlsudaisK023}.

\boldstart{Expressive power of dataflows and SQL queries.}
Dataflows may involve complex operations, such as user-defined functions (UDFs). We consider dataflow DAGs that can be viewed as a class of SQL queries that do not contain recursion. \sadeem{In this paper, we focus on dataflows using relational operators including union, selection, projection, join, and aggregation (USPJA) possibly with arithmetic expressions, as well as UDFs with deterministic functions under the application's specific table semantics (sets, bags, or ordered bags).} Thus, the problem of testing the equivalence of two dataflow versions can be treated as testing the equivalence of two SQL queries without recursion.
In the remaining of the paper we use ``query'' and a ``dataflow version'' interchangeably.

\subsection{Equivalence Verifiers (EVs)}
An equivalence verifier (or ``EV'' for short) takes as an input a pair of SQL queries $\mathbf{Q_1}$ and $\mathbf{Q_2}$. \sadeem{An EV returns {\tt True} when $\mathbf{Q_1} \equiv \mathbf{Q_2}$, {\tt False} when $\mathbf{Q_1} \not \equiv \mathbf{Q_2}$, or {\tt Unknown} when the EV cannot determine the equivalence of the pair under a specific table semantics~\cite{conf/sigmod/WangZYDHDT0022, journals/pvldb/ChuMRCS18, journals/pvldb/ZhouANHX19, journals/corr/abs-2004-00481, conf/tacas/MouraB08, conf/cav/GrossmanCIRS17}.}
For instance, {\sf UDP}~\cite{journals/pvldb/ChuMRCS18} and {\sf Equitas}~\cite{journals/pvldb/ZhouANHX19} are two EVs. The former uses U-expressions to model a query while the latter uses a symbolic representation. Both EVs internally convert the expressions to a first-order-logic (FOL) formula and then push the formula to a solver such as an SMT solver~\cite{conf/tacas/MouraB08} to decide its satisfiability. 
An EV requires two queries to meet certain requirements (called ``restrictions'') in order to test their equivalence. We will discuss these restrictions in detail in Section~\ref{subsec:ev}.


\begin{problem} 
Given an EV and two dataflow versions $P$ and $Q$ with their mapping $\mathcal M$, verify if the two versions are equivalent.
\end{problem}
\section{\sysName: Verifying equivalence of a version pair}
\label{sec:overview}

In this section, we give an overview of \sysName for checking equivalence of a pair of dataflow versions (Section~\ref{subsec:solution-overview}).  We formally define the concepts of ``window'' and ``covering window'' (Section~\ref{subsec:windows}).

\subsection{\sysName: Overview} 
\label{subsec:solution-overview}

To verify the equivalence of a pair of sinks in two dataflow versions, \sysName leverages the fact that the two versions are mostly identical except for a few places with edit operations. It uses existing EVs as a black box. Given an EV, our approach is to break the version pair into multiple ``windows,'' each of which includes local changes and satisfies the EV's restrictions to verify if the pair of portions of the dataflow versions in the window is equivalent, as illustrated in Figure~\ref{fig:overview}. 
We consider different semantics of equivalence between two tuple collections, including sets, bags, and lists, depending on the application of the dataflow and the given EV.
\sysName is agnostic to the underlying EVs, making it usable for any EV of choice. 

\begin{figure}[hbt]
  \centering 
         \includegraphics[width=\linewidth]{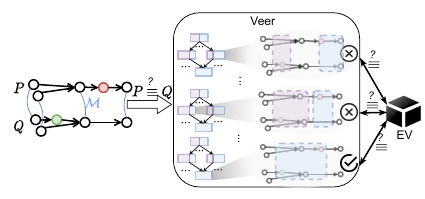}
  \caption{Overview of \sysName. Given an EV and two versions with their mapping, \sysName breaks (decomposes) the version pair into small windows, each of which satisfies the EV's restrictions. It finds different possible decompositions until it finds one with each of the windows verified as equivalent by the EV.}
  \label{fig:overview}
\end{figure}

Next we define concepts used in this approach.

\subsection{Windows and Covering Windows} 
\label{subsec:windows}

\begin{definition}[Window]
Consider two dataflow versions $P$ and $Q$  with a set of edits $\delta = \{c_1 \ldots c_n\}$ from $P$ to $Q$ and a corresponding mapping $\mathcal M$ from $P$ to $Q$. A {\em window}, denoted as $\omega$, is a pair of sub-DAGs $\omega(P)$ and $\omega(Q)$, where $\omega(P)$ (respectively $\omega(Q)$) is a connected induced sub-DAG of $P$ (respectively $Q$). Each pair of operators/links under the  mapping $\mathcal M$ must be either both in $\omega$ or both outside $\omega$. 
\end{definition} \label{def:window}

The operators in the sub-DAGs $\omega(P)$ and $\omega(Q)$ without outgoing links are called their {\em sinks}.  Recall that we assume each dataflow has a single sink. However, the sub-DAG $\omega(P)$ and $\omega(Q)$ may have more than one sink. This can happen, for example, when the window contains a {\sf Replicate} operator. 
Figure~\ref{fig:window-r} shows a window $\omega$, where each sub-DAG includes the {\sf Classifier} operator and two downstream operators {\sf Left-Outerjoin} and {\sf Join}, which are two sinks of the sub-DAG. We omit portions of the dataflows in all the figures throughput the paper for clarity.

\begin{figure}[htbp]
        \vspace{-0.15in}
\includegraphics[width=\linewidth]{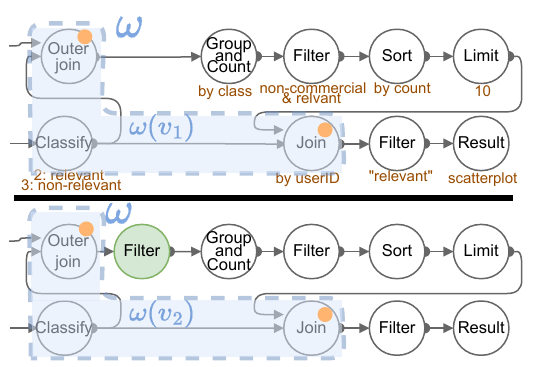}
         \caption{An example window $\omega$ (shown as ``\mytikzbox'') and each sub-DAG of $\omega(v_1)$ and $\omega(v_2)$ contains two sinks (shown as ``\mytikzdot''). \textmd{$v_1$ is above the horizontal line and $v_2$ is below the line.}}
         \label{fig:window-r}
         \end{figure}

\begin{definition}[Neighboring Window]
Consider two dataflow versions $P$ and $Q$  with a set of edits $\delta = \{c_1 \ldots c_n\}$ from $P$ to $Q$ and a corresponding mapping $\mathcal M$ from $P$ to $Q$. We say two windows, $w_1$ and $w_2$, are \textit{neighbors} if there exists a sink (or source) operator of the sub-DAGs in one of the windows that is a direct upstream (respectively downstream) operator in the original DAG to a source (respectively sink) operator in of the sub-DAGs in the other window.
\end{definition}

Figure~\ref{fig:neighbor} shows two neighboring windows, $\omega_1$ and $\omega_2$. The sink operator of $\omega_1$ ({\sf Filter}) is a direct upstream operator to the source operator of $\omega_2$ ({\sf Sort}) in the original DAG.

\begin{figure}[htbp]
        \vspace{-0.1in}
\includegraphics[width=\linewidth]{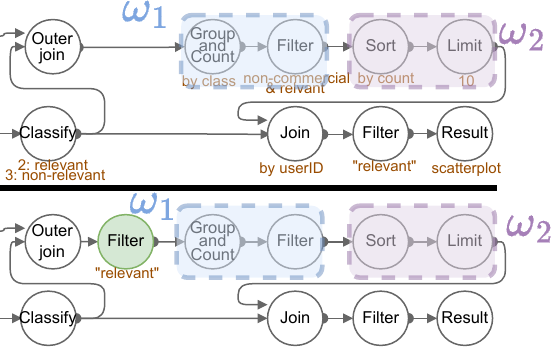}
         \caption{An example showing two neighboring windows.}
         \label{fig:neighbor}
         \end{figure}
         
\begin{definition}[Covering window]
Consider two dataflow versions $P$ and $Q$  with a set of edits $\delta = \{c_1 \ldots c_n\}$ from $P$ to $Q$ and a corresponding mapping $\mathcal M$ from $P$ to $Q$.  A {\em covering window}, denoted as $\omega_C$, is a window to cover a set of changes $C \subseteq \delta$. That is, the sub-DAG in $P$ (respectively sub-DAG in $Q$) in the window includes the operators/links of the edit operations in $C$.  
\end{definition}

When the edit operations are clear in the context, we will simply write $\omega$ to refer to a covering window. 
 Figure~\ref{fig:window} shows a covering window for the change of adding the operator {\sf Filter\textsubscript{h}} to $v_2$. The covering window includes the sub-DAG $\omega(v_1)$ of $v_1$ and contains the {\sf Aggregate} operator. It also includes the sub-DAG $\omega(v_2)$ of $v_2$ and contains the {\sf Filter\textsubscript{h}} and {\sf Aggregate} operators. 

\begin{figure}[htbp]
        \vspace{-0.15in}
        \includegraphics[width=\linewidth]{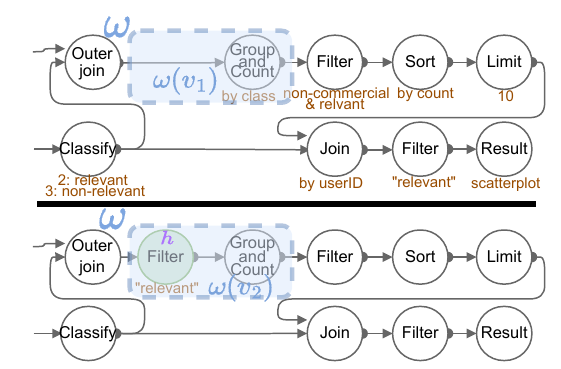}
                \vspace{-0.2in}
         \caption{A covering window $\omega$ for adding {\sf Filter\textsubscript{h}}.}
         \label{fig:window}
        \vspace{-0.1in}
\end{figure}

\begin{definition}[Equivalence of the two sub-DAGs in a window] 
\label{def:sub}
We say the two sub-DAGs $\omega(P)$ and $\omega(Q)$ of a window $\omega$ are {\em equivalent}, denoted as ``$\omega(P) \equiv \omega(Q)$,'' if they are equivalent as two stand-alone DAG's, i.e., without considering the constraints from their upstream operators.  That is, for every instance of source operators in the sub-DAGs (i.e., those operators without ancestors in the sub-DAGs), each sink $s$ of $\omega(P)$ and the corresponding $\mathcal M(s)$ in $\omega(Q)$ produces the same results.  In this case, for simplicity, we say this window is equivalent.
\end{definition} 

Figure~\ref{fig:equivalent} shows an example of a covering window $\omega'$, where its sub-DAGs $\omega'(v_1)$ and $\omega'(v_2)$ are equivalent.

\begin{figure}[htbp]
        \vspace{-0.05in}
\includegraphics[width=\linewidth]{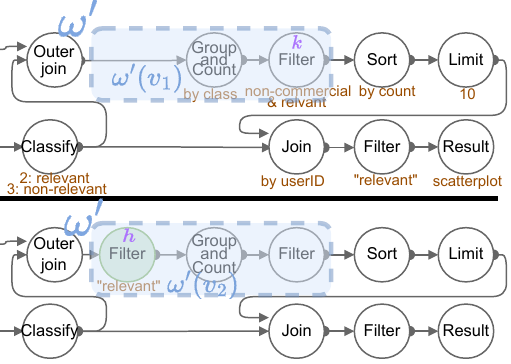}
         \caption{An example covering window $\omega'$ showing its pair of sub-DAGs are equivalent.}
         \label{fig:equivalent}
\end{figure}

Notice that for each sub-DAG in the window $\omega$, the results of its upstream operators are the input to the sub-DAG. The equivalence definition considers all instances of the sources of the sub-DAG, without considering the constraints on its input data as the results of upstream operators.  
For instance, consider the two dataflow versions in Figure~\ref{fig:constraints}. 
The two sub-DAGs of the shown window $\omega$ are clearly not equivalent as two general dataflows, as the top sub-DAG has a filter operator, while the bottom one does not. 
However, if we consider the constraints of the input data from the upstream operators, the sub-DAGs in $\omega$ are indeed equivalent, because each of them has an upstream filter operator with a predict $age < 50$, making the predicate $age < 55$ redundant. We use this definition of sub-DAG equivalence despite the above observation, because we treat the sub-DAGs in a window as a pair of stand-alone dataflow DAGs to pass  to the EV for verification (see Section~\ref{subsec:using-window}).

\begin{figure}[htbp]
          \vspace{-0.15in}
        \includegraphics[width=0.7\linewidth]{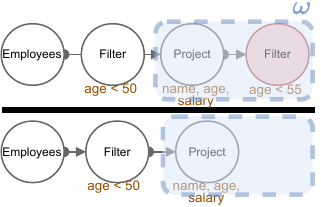}
         \caption{Two sub-DAGs in the window $\omega$ are not equivalent, as sub-DAG equivalence in  Definition~\ref{def:sub} does not consider constraints from the upstream operators. But the two complete dataflow versions are indeed equivalent.}
         \label{fig:constraints}
\end{figure}

\begin{definition}[Window containment] 
We say a window $\omega$ is {\em contained} in a window $\omega'$, donated as $\omega \subseteq_w \omega'$, if $\omega(P)$ (respectively $\omega(Q)$) of $\omega$ is a sub-DAG of the corresponding one in $\omega'$.
In this case, we call $\omega$ a {\em sub-window} of $\omega'$, and $\omega'$ a {\em super-window} of $\omega$.
\end{definition}

For instance, the window $\omega$ in Figure~\ref{fig:window} is contained in the window $\omega'$ in Figure~\ref{fig:equivalent}.

\section{Two Versions with a Single Edit} 
\label{sec:single-edit}

In this section, we study how to verify the equivalence of two dataflow versions $P$ and $Q$ with a single change $c$  of the corresponding mapping $\mathcal M$ from $P$ to $Q$. We leverage a given EV $\gamma$ to verify the equivalence of two queries. 
We discuss how to use the EV to verify the equivalence of the version pair in a window (Section~\ref{subsec:using-window}), and discuss the EV's restrictions (Section~\ref{subsec:ev}). We present a concept called ``maximal covering window'', which helps in improving the performance of verifying the equivalence  (Section~\ref{subsec:maximal}), and develop a method to find maximal covering windows to verify the equivalence of the two versions (Section~\ref{subsec:maximize}). 

\subsection{Verification Using a Covering Window}
\label{subsec:using-window}

We show how to use a covering window to verify the equivalence of a version pair. 

\begin{lemma}
\label{lem:equal}
Consider a version pair $(P,Q)$ with a single edit $c$ operation between them.  If there is a covering window $\omega = (\omega(P), \omega(Q))$ of the edit operation such that the sub-DAGs of the window are equivalent, then the version pair is equivalent. 
\end{lemma}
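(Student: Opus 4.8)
The plan is to exploit \emph{compositionality}: because the single edit $c$ lies entirely inside the covering window $\omega$, everything outside the window is structurally identical in $P$ and $Q$ under the mapping $\mathcal M$, so equivalence of the two sub-DAGs should let me substitute $\omega(P)$ by $\omega(Q)$ without disturbing the final sink. First I would split each version into three parts relative to the window: the \emph{upstream} portion (operators that are ancestors of the window's source operators, together with the boundary links feeding the window), the window sub-DAG itself, and the \emph{downstream} portion (operators reachable from the window's sink operators). Since $c$ is the only edit and $\omega$ is a covering window for $c$, none of the edit operations touch the upstream or downstream portions; hence $\mathcal M$ restricts to an isomorphism between the upstream (resp. downstream) portion of $P$ and that of $Q$ that preserves operator types, properties, and links, including the boundary links attaching to the window's sources and sinks.

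Next I would push data through this decomposition. Fix an arbitrary instance of the data sources of the whole dataflow. By the determinism Assumption, running the (identical) upstream portion produces a well-defined collection of tables on the boundary links, and this collection is \emph{the same} in $P$ and $Q$ because their upstream portions are isomorphic. These boundary tables are exactly an instance of the source operators of the window sub-DAGs. Now I invoke the hypothesis $\omega(P) \equiv \omega(Q)$ in the sense of Definition~\ref{def:sub}: since sub-DAG equivalence quantifies over \emph{all} instances of the window's sources, it holds in particular for this realized instance, so each sink of $\omega(P)$ and its image under $\mathcal M$ in $\omega(Q)$ emit identical tables. The argument is unaffected by a window having several sinks (e.g., via a Replicate operator), since Definition~\ref{def:sub} matches them sink-by-sink under $\mathcal M$. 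Finally, feeding these identical window outputs into the isomorphic downstream portions and again using determinism, the unique sink of $P$ and the unique sink of $Q$ produce the same result; as the source instance was arbitrary, $P \equiv Q$ by Definition~\ref{def:equiv}.

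The main obstacle is making the substitution step fully rigorous rather than merely intuitive: I must argue that sub-DAG equivalence, stated only for the window viewed as a \emph{stand-alone} DAG over all input instances, composes correctly inside the larger dataflow. Two points need care. First, the window-equivalence definition deliberately ignores any constraints that the upstream operators impose on the window's input (as emphasized in the discussion of Figure~\ref{fig:constraints}); this is harmless here and in fact convenient, because quantifying over all input instances is \emph{stronger} than what I need and immediately covers the single realized instance. Second, I must confirm that the interface between the window and the rest of the dataflow is identical on both sides, i.e., that each boundary link attaches to corresponding upstream/downstream operators under $\mathcal M$; this is precisely where the hypothesis of a \emph{single} edit contained in a \emph{covering} window is essential, and determinism is what guarantees that ``the data on each boundary link'' is a well-defined object that can be matched across the two versions.
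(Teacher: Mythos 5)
Your proposal is correct and follows essentially the same route as the paper's proof: decompose the pair into the isomorphic upstream portion, the covering window, and the isomorphic downstream portion, then push an arbitrary source instance through using the determinism assumption and invoke window equivalence (which quantifies over all window inputs, hence covers the realized boundary tables) for the middle. Your write-up is in fact somewhat more careful than the paper's --- explicitly checking the boundary-link correspondence under $\mathcal M$ and the multi-sink case --- but these are refinements of the same argument, not a different one.
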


\begin{proof}
Suppose $\omega(P) \equiv \omega(Q)$. From the definition of a covering window, every operator in one sub-DAG of the window $\omega$ has its corresponding mapped operator in the other sub-DAG of the window, and the change $c$ is included in the window. This means that the sub-DAGs of $P$ and $Q$ that precede the window $\omega$ are isomorphic (structurally identical) and the sub-DAGs of $P$ and $Q$ that follow the window are isomorphic as shown in Figure~\ref{fig:lemma}. Following the assumption that multiple runs of a dataflow produce the same result, this infers that given an instance of input sources $\mathbb D$, the sub-DAGs before the window would produce equivalent results according to definition~\ref{def:equiv}. This result becomes the input source for the window $\omega$ and given that the sub-DAGs in $\omega$ are equivalent, this means that each sink of $\omega(P)$ is equivalent to the corresponding sink (according to the mapping) of $\omega(Q)$. Hence, the output of the window, which is the input to the pair of sub-DAGs following the window, is identical, and since the operators are isomorphic, the result of the sub-DAGs following the window is equivalent. Thus, $P \equiv Q$.
\end{proof}

\begin{figure}[hbt]
           \begin{subfigure}[t]{0.49\linewidth}
\includegraphics[width=\linewidth]{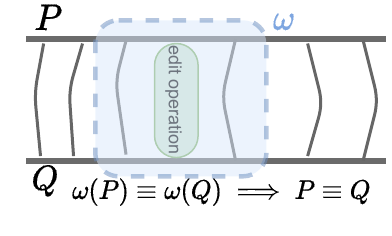}
         \caption{Using a covering window to check the equivalence of two versions.}
         \label{fig:lemma}
         \end{subfigure}
           \begin{subfigure}[t]{0.49\linewidth}
\includegraphics[width=\linewidth]{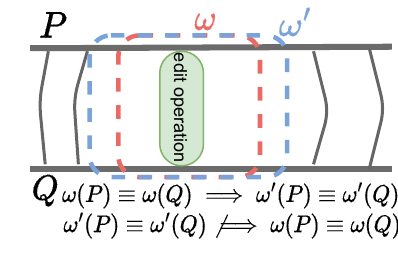}
         \caption{Subsumption of windows and their relation to version equivalence.}
         \label{fig:corollary}
         \end{subfigure}
         \caption{Conceptual examples to explain the relation between a ``covering window'' and version pair equivalence.}
         \label{fig:lemmas}
     \end{figure}

Based on this lemma, we can verify the equivalence of a pair of versions as follows: We consider a covering window and check the equivalence of its sub-DAGs by passing each pair of sinks and the sink's ancestor operators in the window (to form a query pair) to an EV. To pass a pair of sub-DAGs to the EV, we need to complete it by attaching virtual sources and sinks, which are extracted from the original DAGs. This step is vital for determining schema information. The sub-DAGs are then transformed into a representation understandable by the EV, e.g., logical DAG. If the EV shows that all the sink pairs of the sub-DAGs are equivalent, then the two versions are equivalent.


A key question is how to find such a covering window.  Notice that the two sub-DAGs in Figure~\ref{fig:window} are not equivalent. 
However, if we include the downstream {\sf Filter\textsubscript{k}} in the covering window to form a new window $\omega'$ (shown in Figure~\ref{fig:equivalent}) with a pair of sub-DAGs $\omega'(P)$ and $\omega'(Q)$, then the two sub-DAGs in $\omega'$ are equivalent. This example suggests that we may need to consider multiple windows in order to find one that is equivalent.

\subsection{EV Restrictions and Valid Windows} \label{subsec:ev}

We cannot give an arbitrary window to the EV, since each EV has certain restrictions on the sub-DAGs to verify their equivalence. 

\sadeem{
\begin{definition}[EV's restrictions]
{\em Restrictions} of an EV are a set of conditions such that for each query pair if this pair satisfies these conditions, then the EV is able to determine the equivalence of the pair without giving an {\tt Unknown} answer.
\end{definition}\label{def:restriction}

We will relax this definition in Section~\ref{sec:extension}, discuss the consequences of relaxing the definition, and propose solutions. }
There are two types of restrictions.

\begin{itemize}
    \item Restrictions due to the EV's explicit assumptions:  For example, {\sf UDP}~\cite{journals/pvldb/ChuMRCS18} and {\sf Equitas}~\cite{journals/pvldb/ZhouANHX19} support reasoning of certain operators, e.g., {\sf Aggregate} and {\sf SPJ}, but not other operators such as {\sf Sort}.
    \item Restrictions that are derived due to the modules used by the EV: For example, {\sf Equitas}~\cite{journals/pvldb/ZhouANHX19}, {\sf Spes}~\cite{journals/corr/abs-2004-00481}, and {\sf Spark Verifier}~\cite{conf/cav/GrossmanCIRS17} use an SMT solver~\cite{conf/tacas/MouraB08} to determine if a FOL formula is satisfiable or not.  
     SMT solver is not complete for determining the satisfiability of formulas when their predicates have non-linear conditions~\cite{journals/tocl/BorrallerasLROR19}. Thus, these EVs require the predicate conditions in their expressions to to be linear to make sure to receive an answer from the solver. 
\end{itemize}

As an example, the following is an example of the explicit and {\em derived} restrictions of the {\sf Equitas}~\cite{journals/pvldb/ZhouANHX19} to test the equivalence of two queries~\footnote{Applications that wish to use \sysName need to extend it to include their EV of choice if it is not Equitas~\cite{journals/pvldb/ZhouANHX19} or Spes~\cite{journals/corr/abs-2004-00481}, and incorporate the restrictions specific to those EVs.}.

\begin{itemize}
    \item[R1.] The table semantics has to be set semantics.\footnote{In this work, the application determines the desired table semantics, and \sysName decides to use an EV that supports the specified table semantics requested by the application by checking the restriction.}
    \item[R2.] All operators have to be any of the following types: SPJ, Outer join, and/or Aggregate.
    \item[R3.] The predicate conditions of SPJ operators have to be linear.
    \item[R4.] Both queries should have the same number of Outer join operators, if present.
    \item[R5.] Both queries should have the same number of Aggregate operators, if present.
    \item[R6.] If they use an Aggregate operator with an aggregation function that depends on the cardinality of the input tuples, e.g., {\sf COUNT}, then each upstream operator of the Aggregate operator has to be an SPJ operator, and the input tables are not scanned more than once.
\end{itemize}

\begin{definition} [Valid window w.r.t an EV] We say a window is {\em valid} with respect to an EV if it satisfies the EV's restrictions.
\end{definition}\label{def:valid}
In order to test if a window is valid, we pass it to a ``validator'', in which checks if the window satisfies the EV restrictions or not.
\sadeem{Testing window validity is challenging due to the complexity of exhaustively listing all potential sets of an EV's restrictions. These restrictions are designed to facilitate the evaluation of a broader range of covering windows using the provided EV (without encountering `Unknown' outcomes). To maximize completeness and identify valid windows, we can adopt a set of sufficient syntactic conditions, which may not be necessary conditions. 
If these conditions are met, we can test the equivalence of valid windows. However, this conservative approach may hinder the ability of \sysName to identify an equivalence due to the potential of more covering windows violating the rigid restrictions.
It is acknowledged that there may be additional conditions not covered that can lead to the consideration of more possible valid windows.}

 \subsection{Maximal Covering Window (MCW)} 
 \label{subsec:maximal}

A main question is how to find a valid covering window with respect to the given EV using which we can verify the equivalence of the two dataflow versions.  A naive solution is to consider all the covering windows of the edit change $c$.  For each of them, we check its validity, e.g., whether they satisfy the constraints of the EV.  If so, we pass the window to the EV to check the equivalence.  This approach is computationally costly, since there can be many covering windows. Thus our focus is to reduce the number of covering windows that need to be considered without missing a chance to detect the equivalence of the two dataflow versions. The following lemma helps us reduce the search space.
 
\begin{lemma}
\label{lemma:sub-window}
Consider a version pair ($P$, $Q$) with a single edit $c$ between them.
Suppose a covering window $\omega$ of $c$  is contained in another covering window $\omega'$. If the sub-DAGs in window $\omega$ are equivalent, then the sub-DAGs of $\omega'$ are also equivalent.
\end{lemma}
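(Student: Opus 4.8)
The plan is to reduce the statement to Lemma~\ref{lem:equal}, which already shows that a single equivalent covering window forces equivalence of the enclosing version pair. The crucial observation is that $\omega'(P)$ and $\omega'(Q)$ can themselves be treated as a stand-alone version pair. Since $c$ is the \emph{only} edit between $P$ and $Q$ and $c$ lies inside $\omega$ with $\omega \subseteq_w \omega'$, the pair $(\omega'(P),\omega'(Q))$ differs by exactly the single edit $c$. Within this restricted pair, $\omega$ is still a covering window of $c$: it contains the operators and links of $c$, and it remains a connected induced sub-DAG respecting the restriction of $\mathcal{M}$ to $\omega'$. Because $\omega(P)\equiv\omega(Q)$ is given by hypothesis, all premises of Lemma~\ref{lem:equal} hold for $(\omega'(P),\omega'(Q))$, and its conclusion is precisely $\omega'(P)\equiv\omega'(Q)$. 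Here I would note that the ``version-pair equivalence'' delivered by Lemma~\ref{lem:equal} coincides, for stand-alone DAGs, with sub-DAG equivalence in the sense of Definition~\ref{def:sub}, so the conclusion is exactly what the lemma asks for.

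To make the reduction airtight I would carry out three checks in order. First, I would confirm that the portion of $\omega'$ lying outside $\omega$ is isomorphic between $P$ and $Q$: by Definition~\ref{def:window} every mapped pair is either both inside or both outside a window, and the only non-identity entry of $\mathcal{M}$ is the edit $c$, which sits inside $\omega$; hence $\omega'(P)\setminus\omega(P)$ and $\omega'(Q)\setminus\omega(Q)$ are in structure-preserving one-to-one correspondence. Second, I would verify that $\omega$ satisfies the window and covering-window conditions relative to the restricted pair, which follows immediately from $\omega\subseteq_w\omega'$. Third, I would trace an arbitrary instance of the sources of $\omega'$ through the pipeline exactly as in the proof of Lemma~\ref{lem:equal}: the isomorphic operators upstream of $\omega$ feed identical inputs to the sources of $\omega$, the given $\omega(P)\equiv\omega(Q)$ makes the sinks of $\omega$ emit identical results, and the isomorphic operators downstream of $\omega$ then produce identical results at the sinks of $\omega'$, yielding $\omega'(P)\equiv\omega'(Q)$ by Definition~\ref{def:sub}.

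I expect the main obstacle to be the structural bookkeeping hidden behind the phrase ``upstream/downstream of $\omega$'' when $\omega'$ is a branching DAG rather than a linear pipeline, and when $\omega$ carries several sources and sinks (as in Figure~\ref{fig:window-r}). An operator of $\omega'$ outside $\omega$ need not be a strict ancestor or descendant of $\omega$; it may sit on a sibling branch. The robust formulation is a topological induction showing that every inter-operator link inside $\omega'$ carries identical data between $P$ and $Q$: operators outside $\omega$ are isomorphic and, fed inductively matching inputs, produce matching outputs, while $\omega$ converts matching inputs at its sources into matching outputs at its sinks. A secondary delicate point is that sub-DAG equivalence (Definition~\ref{def:sub}) quantifies over \emph{all} instances of a window's sources and deliberately ignores upstream constraints; I must therefore apply the given equivalence of $\omega$ to the particular inputs generated inside $\omega'$ as a special case of this universal quantification, so that no circularity or hidden assumption about realizable inputs creeps in. Once these structural facts are secured, the composition argument is identical to that of Lemma~\ref{lem:equal}.
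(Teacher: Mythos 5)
Your proof is correct and takes essentially the same route as the paper: the paper's own proof also reduces to Lemma~\ref{lem:equal} by treating $\omega'(P)$ and $\omega'(Q)$ as a stand-alone version pair with the single edit $c$, inside which $\omega$ remains an equivalent covering window. Your version is simply more careful than the paper's terse argument, spelling out the isomorphism of $\omega'\setminus\omega$, the covering-window conditions in the restricted pair, and the instantiation of the universal quantifier in Definition~\ref{def:sub} --- all of which the paper leaves implicit.
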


\begin{proof}
Suppose $\omega(P) \equiv \omega(Q)$. Suppose a window $\omega'$ consists of the sub-DAGs of the entire version pair, i.e. $\omega'(P) = P$ and $\omega'(Q) = Q$. This means that $\omega \subseteq \omega'$ as $\omega(P) \subseteq \omega'(P)$ and $\omega(Q) \subseteq \omega'(Q)$. Given that the sub-DAGs in $\omega$ are equivalent, from Lemma~\ref{lem:equal}, we can infer the version pair is equivalent, which means the sub-DAGs in the window $\omega'$ are equivalent.  
\end{proof}

Based on Lemma~\ref{lemma:sub-window}, we can just focus on covering windows that have as many operators as possible without violating the constraints of the EV.  If the EV shows that such a window is not equivalent, then none of its sub-windows can be equivalent.  
Based on this observation, we introduce the following concept.

\begin{definition} [Maximal Covering Window (MCW)] Given a dataflow version pair ($P, Q$) with a single edit operation $c$, a valid covering window $\omega$ is called {\em maximal} if it is not properly contained by another valid covering window.
\end{definition}

The change $c$ may have more than one MCW, For example, suppose the EV is {\sf Equitas}~\cite{journals/pvldb/ZhouANHX19}. Figure~\ref{fig:windows} shows two MCWs to cover the change of adding the {\sf Filter\textsubscript{h}} operator. One maximal window $\omega_1$ includes the change {\sf Filter\textsubscript{h}} and {\sf Left Outerjoin} on the left of the change. The window cannot include the {\sf Classifier} operator from the left side because {\sf Equitas} cannot reason its semantics~\cite{journals/pvldb/ZhouANHX19}. Similarly, the {\sf Aggregate} operator on the right cannot be included in $\omega_1$ because one of {\sf Equitas}~\cite{journals/pvldb/ZhouANHX19} restrictions is that the input of an {\sf Aggregate} operator must be an SPJ operator and the window already contains {\sf Left OuterJoin}. To include the {\sf Aggregate} operator, a new window $\omega_2$ is formed to  exclude {\sf Left OuterJoin} and include {\sf Filter\textsubscript{k}} on the right but cannot include {\sf Sort} because this operator cannot be reasoned by {\sf Equitas}~\cite{journals/pvldb/ZhouANHX19}. 

\begin{figure}[hbt]
  \centering 
\includegraphics[width=\linewidth]{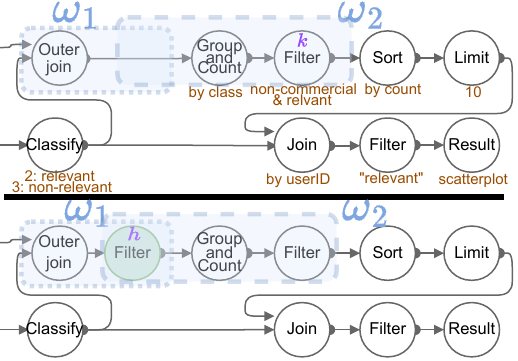}
         \caption{Two MCW $\omega_1$ and $\omega_2$ satisfying the restrictions of {\sf Equitas}~\cite{journals/pvldb/ZhouANHX19} to cover the change of adding {\sf Filter\textsubscript{h}} to $v_2$.}
         \label{fig:windows}
\end{figure}

The MCW $\omega_2$ is verified by {\sf Equitas}~\cite{journals/pvldb/ZhouANHX19} to be equivalent, whereas $\omega_1$ is not.  Notice that one equivalent covering window is enough to show the equivalence of the two dataflow versions. 

\begin{figure*}[hbt]
\includegraphics[width=\linewidth]{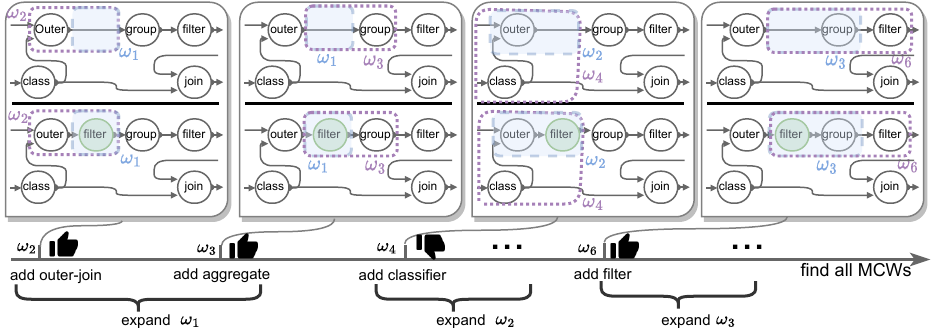}
         \caption{Example to illustrate the process of finding MCWs for the change of adding {\sf Filter\textsubscript{h}} to $v_2$.}
         \label{fig:maximize}
\end{figure*}

\subsection{Finding MCWs to Verify Equivalence}
\label{subsec:maximize}

Next we study how to efficiently find an MCW to verify the equivalence of two dataflow pairs. We present a method shown in Algorithm~\ref{alg:maximize}.  Given a version pair $P$ and $Q$ and a single edit operation $c$ based on the mapping $\mathcal M$, the method finds an MCW that is verified by the given EV $\gamma$ to be equivalent. 

\begin{algorithm}[htb]
\caption{Verifying equivalence of two dataflow versions with a single edit \label{alg:maximize}}
    \DontPrintSemicolon
    \KwIn{A version pair ($P, Q$); A single edit $c$; A mapping $\mathcal M$; An EV $\gamma$}
    \KwOut{A flag to indicate if they are equivalent} \tcp*[h]{a {\tt True} value to indicate the pair is equivalent, a {\tt False} value to indicate the pair is not equivalent, or {\tt Unknown} when the pair cannot be verified}

$\omega \leftarrow$ create an initial window to include the source and the corresponding target (operator/link) of the edit $c$\label{algl:initial}\;

$\Omega = \{ \omega \}$ \tcp*[h]{initialize a set for exploring widows} \label{algl:left}\;

\tcp*[h]{using memoization, a window is explored only once}\\
\While{$\Omega$ is not empty}{

$\omega_i \leftarrow $ remove one window from $\Omega$

\For{every neighbor of $\omega_i$ \label{algl:upstream}}{

\uIf{adding neighbor to $\omega_i$ meets EV's restrictions \label{algl:valid}}
{
add $\omega_i'$ (including the neighbor) to $\Omega$ \label{algl:add-list}\\
}
}
\uIf{none of the neighbors were added to $\omega_i$\label{algl:not-expandable}}{
\tcp*[h]{the window is maximal}\\

    \uIf{$\omega_i$ is verified equivalent by the EV \label{algl:equiv}}
    {
\Return {\tt True} \label{algl:true}
    }
        
       \uIf{$\omega_i$ is verified not equivalent by the EV and the window is the entire version pair\label{algl:notequiv}}
    {
\Return {\tt False} \label{algl:false}
    }
    }
}
      \Return {\tt Unknown} \label{algl:return}

\end{algorithm}
We use the example in Figure~\ref{fig:maximize} to explain the details of Algorithm~\ref{alg:maximize}.
The first step is to initialize the window to cover the source and target operator of the change only (line~\ref{algl:initial}). In this example, for the window $\omega_1$, its sub-DAG $\omega_1(v_2)$ contains only {\sf Filter\textsubscript{h}} and its corresponding operator using the mapping $\mathcal{M}$ in $\omega_1(v_1)$. 
Then we expand all the windows created so far, i.e.,  $\omega_1$ in this case (line~\ref{algl:left}).
To expand the window, we enumerate all possible combinations of including the neighboring operators on both $\omega_1(v_1)$ and $\omega_1(v_2)$ using the mapping.
For each neighbor, we form a new window and check if it has not been explored yet.  If not, then we check if the newly formed window is valid (lines~\ref{algl:upstream}-\ref{algl:valid}). 

In this example, we create the two windows $\omega_2$ and $\omega_3$ to include the operators {\sf Outer-join} and {\sf Aggregate} in each window,  respectively. We add those windows marked as valid in the traversal list to be further expanded in the following iterations (line~\ref{algl:add-list}).
We repeat the process on every window. After all the neighbors are explored to be added and we cannot expand the window anymore, we mark it as maximal (line~\ref{algl:not-expandable}). \sadeem{A subtle case arises when adding a single neighbor yields an invalid window, but adding a combination of neighbors yields a valid window. This is discussed in Section~\ref{subsec:complete}.} 
Then we test the equivalence of this maximal window by calling the EV. If the EV says it is equivalent, the algorithm returns {\tt TRUE} to indicate the version pair is equivalent (line~\ref{algl:equiv}). If the EV says that it is not equivalent and the window's sub-DAGs are the complete version pair, then the algorithm returns {\tt False} (line~\ref{algl:false}). Otherwise, we iterate over other windows until there are no other windows to expand. In that case, the algorithm returns {\tt Unknown} to indicate that the version equivalence cannot be verified as in line~\ref{algl:return}. 

Some EVs~\cite{journals/pvldb/ChuMRCS18, journals/pvldb/ZhouANHX19, journals/corr/abs-2004-00481} return {\tt False} to indicate that the equivalence of the version pair cannot be verified, but it does not necessarily mean that the pair is inequivalent. We take note of these EVs, and in the algorithm mentioned above, we only report {\tt False} if the EV is capable of proving the inequivalence of the pair, such as COSETTE~\cite{conf/cidr/ChuWWC17}.

\section{Two Versions with Multiple Edits} 
\label{sec:multi-changes}

In the previous section, we assumed there is a single edit operation to transform a dataflow version to another version. In this section, we extend the setting to discuss the case where multiple edit operations $\delta = \{c_1 \ldots c_n\}$ transform a version $P$ to a version $Q$.
A main challenge is finding covering windows for multiple edits (Section~\ref{subsec:challenge}).  We address the challenge by decomposing the version pair into a set of {\em disjoint} windows. We formally define the concepts of ``decomposition'' and ``maximal decomposition'' (Section~\ref{subsec:decomposition}). We explain how to find maximal decompositions to verify the equivalence of the version pair and prove the correctness of our solution (Section~\ref{subsec:find-maximal-decomp}). We analyze the completeness of the proposed algorithm (Section~\ref{subsec:complete}).

\subsection{Can we use overlapping windows?} 
\label{subsec:challenge}
When the two versions have more than one edit, they can have multiple covering windows. A natural question is whether we can use covering windows that overlap with each other to test the equivalence of the two versions.

\begin{definition}[Overlapping windows] We say that two windows, $\omega_1$ and $\omega_2$, \textit{overlap} if at least one operator is included in any of the sub-DAGs of both windows.
\end{definition} 
\label{def:window-overlap}

We will use an example to show that we cannot do that.  The example, shown  
Figure~\ref{fig:overlap}, is inspired from the NY Taxi dataset~\cite{nyc-taxi-data:website} to calculate the trip time based on the duration and starting time. Suppose the {\sf Select\textsubscript{x}} and {\sf Select\textsubscript{z}} operators are deleted from a version $v_1$ and {\sf Select\textsubscript{y}} operator is added to transform the dataflow to version $v_2$.  The example shows two overlapping windows $\omega$ and $\omega'$, each window is equivalent. 

\begin{figure}[hbt]
  \centering 
\includegraphics[width=\linewidth]{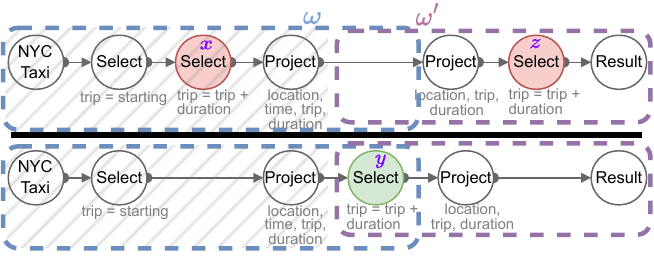}
         \caption{In this example, the blue window $\omega$ is equivalent and the purple window $\omega'$ is also equivalent. But the version pair is not equivalent. The shaded gray area is the input to window $\omega'$.}
         \label{fig:overlap}
\end{figure}

We cannot say the version pair in the example above is equivalent. The reason is that for the pair of sub-DAGs in $\omega'$ to be equivalent, the input sources have to be the same (the shaded area in grey in the example). However, we cannot infer the equivalence of the outcome of that portion of the sub-DAG. In fact, the pair of sub-DAGs in the shaded area in this example produce different results. This problem does not exist in the case of a single edit, because the input sources to any {\em covering} window (in a single edit case) will always be a one-to-one mapping of the two sub-DAGs and there is no other change outside the covering window. The solution in Section~\ref{sec:single-edit} finds {\em any} window such that its sub-DAGs are equivalent and cannot be directly used to solve the case of verifying the equivalence of the version pair when there are multiple edits.

To overcome this challenge and enable using windows to check the equivalence of the version pair, we require the covering windows to be \textit{disjoint}. In other words, each operator should be included in one and only one window. 
A naive solution is to do a simple exhaustive approach of decomposing the version pair into all possible combinations of disjoint windows. Next, we formally define a version pair decomposition and how it is used to check the equivalence of a version pair. 

\subsection{Version Pair Decomposition} \label{subsec:decomposition}
\begin{definition}[Decomposition]
    For a version pair $P$ and $Q$ with a set of edit operations $\delta = \{c_1 \ldots c_n\}$ from $P$ to $Q$, a {\em decomposition}, $\theta$ is a set of windows $\{\omega_1, \ldots, \omega_m\}$ such that:
    \begin{itemize}
        \item  Each edit is in one and only one window in the set;
        \item All the windows are disjoint;
        \item The union of the windows is the version pair.
    \end{itemize}
\end{definition}

Figure~\ref{fig:multi} shows a decomposition for the three changes in the running example. The example shows two covering windows $\omega_1$ and $\omega_2$, each covers one or more edits~\footnote{For simplicity, we only show covering windows of a decomposition in the figures throughout this section.}.
Next, we show how to use a decomposition to verify the equivalence of the version pair by generalizing Lemma~\ref{lem:equal} as follows.

\begin{figure}[hbt]
  \centering 
\includegraphics[width=0.8\linewidth]{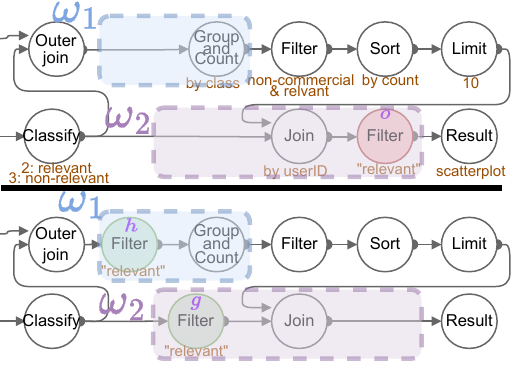}
         \caption{A decomposition $\theta$ with two covering windows $\omega_1$ and $\omega_2$ that cover the three edits.}
         \label{fig:multi}
\end{figure}

\begin{lemma} \label{lem:multi}
(Corresponding to Lemma~\ref{lem:equal}) For a version pair $P$ and $Q$ with a set of edit operations $\delta = \{c_1 \ldots c_n\}$ to transform $P$ to $Q$, if there is a decomposition $\theta$ such that every covering window in $\theta$ is equivalent, then the version pair is equivalent. 
\end{lemma}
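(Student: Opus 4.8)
The plan is to generalize the single-edit argument of Lemma~\ref{lem:equal} from one covering window to a whole decomposition $\theta = \{\omega_1, \ldots, \omega_m\}$. The key structural fact that made Lemma~\ref{lem:equal} work was that a covering window splits each version into three isomorphic-except-inside-the-window pieces: an upstream part, the window itself, and a downstream part. For a decomposition, the windows are disjoint, their union is the entire version pair, and every edit lives in exactly one window. So instead of three pieces I now have a partition of $P$ (and correspondingly of $Q$) into the sub-DAGs $\omega_1(P), \ldots, \omega_m(P)$, where the edits are confined to the interiors of the windows and everything connecting them is structurally identical across the two versions.

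**The argument** I would give proceeds by a topological induction along the data-flow order. First I would observe that because the decomposition partitions the operators and respects the mapping $\mathcal{M}$, the ``glue'' between windows is isomorphic in $P$ and $Q$: any link from a sink of one window to a source of a neighboring window exists identically in both versions. Then I would process the windows in a topological order consistent with the DAG, establishing the invariant that the input feeding into each window $\omega_i$ is identical in $P$ and $Q$. For a source window (one whose sources are original data sources $\mathbb{D}$), this invariant holds trivially by Assumption~\ref{def:assume} that runs on the same data produce the same result. For an interior window, its inputs are outputs of upstream windows; by the induction hypothesis those upstream windows received identical inputs, and since each was assumed equivalent in $\theta$, their sinks produce identical outputs, which flow unchanged through the isomorphic glue to become identical inputs to $\omega_i$. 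Given identical input and the hypothesis $\omega_i(P) \equiv \omega_i(Q)$, the sinks of $\omega_i$ produce identical results, completing the induction step. Once the invariant reaches the window containing the final sink of the version pair, its output is identical across $P$ and $Q$, which is exactly $P \equiv Q$.

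**The main obstacle** I expect is precisely the issue flagged in Section~\ref{subsec:challenge}: equivalence of a window $\omega_i(P) \equiv \omega_i(Q)$ is defined (Definition~\ref{def:sub}) over \emph{all} instances of the window's sources, as stand-alone DAGs. This free-input equivalence is exactly what lets me invoke it for whatever particular identical input the induction supplies, without worrying about constraints from upstream — and it is why disjointness is indispensable. The counterexample of Figure~\ref{fig:overlap} shows that with overlapping windows one cannot guarantee that the shared upstream region delivers identical inputs, so the induction's invariant would break. I would therefore take care to state that disjointness plus full coverage (the union being the whole pair) is what guarantees a well-founded topological processing order in which each window's input is completely determined by the outputs of strictly-upstream windows, with no shared interior operator whose two copies might disagree. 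The rest is a careful but routine formalization of ``isomorphic glue carries identical tuples,'' appealing to Assumption~\ref{def:assume} and Definition~\ref{def:equiv} at each step.
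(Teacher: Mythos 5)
Your proof is correct and takes essentially the same route as the paper's: the paper also propagates identical inputs along the version pair, handling non-covering windows by structural identity plus the determinism of Assumption~\ref{def:assume} and covering windows by their assumed stand-alone equivalence, which is exactly your induction invariant stated informally as ``the output acts as the input to the following portion \ldots this propagation continues along the pair of DAGs until the end.'' Your explicit topological induction and your remarks on why disjointness (versus the overlapping-window counterexample of Figure~\ref{fig:overlap}) is indispensable are just a more formal rendering of the paper's argument, not a different one.
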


\begin{proof}
Suppose every covering window $\omega_i$ in a decomposition $\theta$ is equivalent. Every other window that is not covering, its sub-DAGs are structurally identical, according to Definition~\ref{def:window}. Given an instance of input sources $\mathbb{D}$, we can have the following two cases. ({\tt CASE1:}) the input is processed by a pair of structurally identical sub-DAGs that are in a non-covering window. In this case, the pair of sub-DAGs produce an equivalent result since every operator is deterministic according to Assumption~\ref{def:assume}. ({\tt CASE2:}) the input is processed by a pair of sub-DAGs in a covering window. In this case, the pair of sub-DAGs produce equivalent result because we assumed each covering window is equivalent. In both cases, the output acts as the input to the following portion of the sub-DAGs (either non-covering or a covering window). This propagation continues along the pair of DAGs until the end, thus the version pair produces equivalent results as shown in Figure~\ref{fig:multi-change-lemma}.
\end{proof}

\begin{figure}[hbt]
  \centering 
\includegraphics[width=0.7\linewidth]{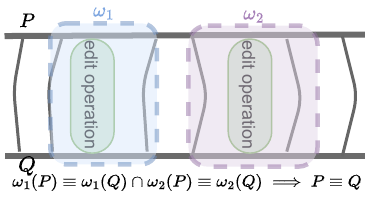}
         \caption{Using multiple covering windows on multiple edits to check the equivalence of two versions.}
         \label{fig:multi-change-lemma}
\end{figure}

A natural question is how to find a decomposition where each of its windows is equivalent.   We could exhaustively consider all the possible decompositions, but the number can grow exponentially as the size of the dataflow and the number of changes increase.  The following ``decomposition containment'' concept, defined shortly, helps us reduce the number of decompositions that need to be considered.

\begin{definition}[Decomposition containment] We say a decomposition $\theta$ is {\em contained} in another decomposition $\theta'$, denoted as $\theta \subseteq_d \theta'$, if  every window in $\theta$, there exists a window in $\theta'$ that  contains it. 
\end{definition} 
\label{def:decomp-contain}

Figure~\ref{fig:equiv-decomp} shows an example of a decomposition $\theta'$ that contains the decomposition $\theta$ in Figure~\ref{fig:multi}. We can see that in general, if a decomposition $\theta$ is contained in another decomposition $\theta'$, then each window in $\theta'$ is a concatenation of one or multiple windows in $\theta$. 

\begin{figure}[hbt]
  \centering 
\includegraphics[width=0.8\linewidth]{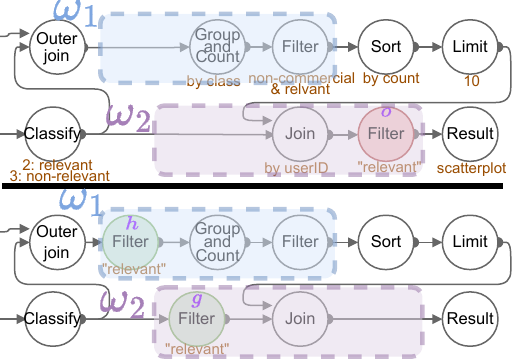}
         \caption{Example to show equivalent pair of sub-DAGs of every covering window in a decomposition $\theta'$.}
         \label{fig:equiv-decomp}
\end{figure}

The following lemma, which is a generalization of Lemma~\ref{lemma:sub-window}, can help us prune the search space by ignoring decompositions that are properly contained by other decompositions.

\begin{lemma} \label{lem:multi-contain}
(Corresponding to Lemma~\ref{lemma:sub-window}) Consider a version pair $P$ and $Q$ with a set of edit operations $\delta = \{c_1 \ldots c_n\}$ from $P$ to $Q$. Suppose a decomposition
$\theta$ is contained in another decomposition $\theta'$. If each window in $\theta$ is equivalent, then each window in $\theta'$ is also equivalent. 
\end{lemma}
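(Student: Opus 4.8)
The plan is to reduce the claim to Lemma~\ref{lem:multi} by applying it \emph{locally} inside each window of $\theta'$, treating that window's pair of sub-DAGs as a stand-alone version pair. First I would fix an arbitrary window $\omega' \in \theta'$ and determine which windows of $\theta$ compose it. Using the remark preceding the lemma --- that each window of $\theta'$ is a concatenation of windows of $\theta$ --- together with the fact that both $\theta$ and $\theta'$ are decompositions (hence partitions of the version pair into pairwise disjoint windows whose union is the whole pair), I would show that the windows of $\theta$ contained in $\omega'$ exactly tile $\omega'$. Concretely, every operator $o$ of $\omega'$ lies in a unique window $\omega \in \theta$; by containment $\omega \subseteq_w \omega''$ for some $\omega'' \in \theta'$, so $o \in \omega''$, and since $o$ also lies in the unique $\theta'$-window $\omega'$, disjointness of $\theta'$ forces $\omega'' = \omega'$, whence $\omega \subseteq_w \omega'$. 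The converse inclusion is immediate, so $\omega'$ is the disjoint union of $\{\,\omega \in \theta : \omega \subseteq_w \omega'\,\}$.

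Next I would view $(\omega'(P), \omega'(Q))$ as a version pair in its own right --- attaching the window's source operators as data sources, exactly as is already done when a window is passed to an EV --- and observe that the tiling windows $\omega_{i_1}, \ldots, \omega_{i_k}$ form a decomposition of this stand-alone pair: each edit inside $\omega'$ sits in exactly one of them, they are pairwise disjoint (inherited from $\theta$), and their union is $\omega'$. By hypothesis every window of $\theta$ is equivalent, so in particular every covering tile among $\omega_{i_1}, \ldots, \omega_{i_k}$ is equivalent as a stand-alone sub-DAG pair in the sense of Definition~\ref{def:sub} --- precisely the hypothesis Lemma~\ref{lem:multi} consumes. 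Applying Lemma~\ref{lem:multi} to the stand-alone pair $(\omega'(P), \omega'(Q))$ with this decomposition yields $\omega'(P) \equiv \omega'(Q)$, i.e., $\omega'$ is equivalent. Since $\omega'$ was arbitrary, every window of $\theta'$ is equivalent.

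The step I expect to require the most care is the alignment of definitions: I must check that ``window $\omega'$ is equivalent'' in the sense of Definition~\ref{def:sub} is literally the conclusion ``the stand-alone version pair $(\omega'(P), \omega'(Q))$ is equivalent'' that Lemma~\ref{lem:multi} delivers, and that the equivalence assumed on each tile (also Definition~\ref{def:sub}, stand-alone) is exactly what that lemma requires. This is sound because Definition~\ref{def:sub} quantifies over \emph{all} instances of a sub-DAG's sources, so constraints coming from operators upstream of $\omega'$ in the full DAGs are irrelevant and the local view is self-contained. If one prefers not to invoke Lemma~\ref{lem:multi} as a black box, the identical conclusion follows by replaying its propagation argument internally: feed an arbitrary instance of $\omega'$'s sources through the tiles in topological order; equal inputs to each tile produce equal outputs by that tile's equivalence together with the determinism Assumption~\ref{def:assume}, and this propagation reaches the sinks of $\omega'$, establishing $\omega'(P) \equiv \omega'(Q)$.
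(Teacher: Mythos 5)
Your proof is correct, but it takes a different route from the paper's. The paper's proof is two lines: it notes that, by the definition of decomposition containment, each window of $\theta$ sits inside some window of $\theta'$, and then cites Lemma~\ref{lemma:sub-window} (``if a window is equivalent, any window containing it is also equivalent'') once per window to conclude. You instead fix a window $\omega' \in \theta'$, prove via the disjointness/union properties of decompositions that the $\theta$-windows contained in $\omega'$ exactly tile it, and then apply Lemma~\ref{lem:multi} locally to the stand-alone pair $(\omega'(P), \omega'(Q))$ equipped with that tiling as a decomposition. Your route is more work but is arguably the more faithful argument: Lemma~\ref{lemma:sub-window} is stated for a \emph{single} edit and a pair of covering windows of that edit, whereas a window $\omega' \in \theta'$ may be the union of several $\theta$-windows covering \emph{different} edits, in which case the equivalence of any one contained window does not by itself yield the equivalence of $\omega'$ --- one needs all the tiles, which is exactly what your tiling-plus-local-Lemma~\ref{lem:multi} argument (or, equivalently, your in-line replay of its propagation using Definition~\ref{def:sub} and Assumption~\ref{def:assume}) supplies. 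In short, the paper buys brevity by leaning on Lemma~\ref{lemma:sub-window} beyond its literal single-edit statement, while you pay for rigor with the explicit tiling; your careful check that Definition~\ref{def:sub} quantifies over all source instances, making the local application of Lemma~\ref{lem:multi} self-contained, is precisely the alignment the short proof leaves implicit.
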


\begin{proof}
Suppose each window in a decomposition $\theta$ is equivalent and the decomposition is contained in another decomposition $\theta'$.
Based on the Definition of decomposition containment~\ref{def:decomp-contain}, we know that each window in $\theta$ is contained in a window in $\theta'$. According to Lemma~\ref{lemma:sub-window}, if a window is equivalent then a window that contains it is also equivalent. We can deduce that every window in $\theta'$ is equivalent, therefore the version pair is equivalent as per Lemma~\ref{lem:multi}.
\end{proof}

\subsection{Maximal Decompositions w.r.t.~an EV}

Lemma~\ref{lem:multi-contain} shows that we can safely find decomposition that contain other ones to verify the equivalence of the version pair.  At the same time, we cannot increase each window arbitrarily, since the equivalence of each window needs to be verified by the EV, and the window needs to satisfy the restrictions of the EV.  Thus we want decompositions that are as containing as possible while each window is still valid.  We formally define the following concepts.

 \begin{definition}[Valid Decomposition] We say a decomposition $\theta$ is {\em valid} with respect to an EV if each of its covering windows is valid with respect to the EV. 
 \end{definition} 
\label{def:decomp-valid}
 
\begin{definition}[Maximal Decomposition (MD)] We say a valid decomposition $\theta$ is {\em maximal} if no other valid decomposition $\theta'$ exists such that $\theta'$ properly contains $\theta$. 
\end{definition} 
\label{def:maximal}

The decompositions w.r.t~an EV form a unique graph structure, where each decomposition is a node. It has a single root corresponding to the decomposition that includes every operator as a separate window. A downward edge indicates a ``contained-in'' relationship. A decomposition can be contained in more than one decomposition. Each leaf node at the bottom of the hierarchy is an MD as there are no other decompositions that contain it and the hierarchy may not be balanced.
If the entire version pair satisfies the EV's restrictions, then the hierarchy becomes a lattice structure with a single leaf MD being the entire version pair. Each branching factor depends on the number of changes, the number of operators, and the EV's restrictions.
Figure~\ref{fig:decomposition} shows the hierarchical relationships of the valid decompositions of the running example when the EV is {\sf Equitas}~\cite{journals/pvldb/ZhouANHX19}. The example shows two MD $\theta_{12}$ and $\theta_{16}$.

\begin{figure}[hbt]
\includegraphics[width=\linewidth, height=1.6in]{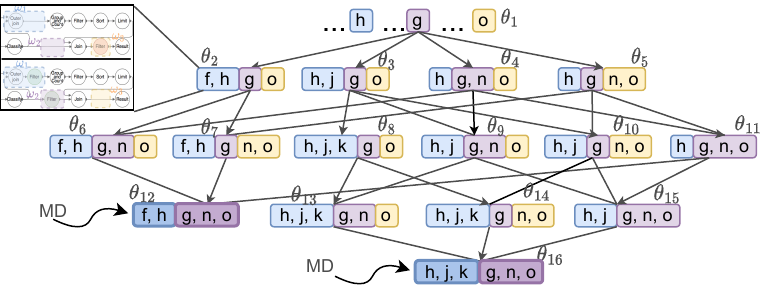}
         \caption{Hierarchy of valid decompositions w.r.t an EV. \textmd{Each letter corresponds to a pair of operators from the running example. We show the containment of covering windows and we omit details of containment of non-covering windows.}}
         \label{fig:decomposition}
\end{figure}



\subsection{Finding a Maximal Decomposition to Verify Equivalence (A Baseline Approach)} 
\label{subsec:find-maximal-decomp}
Now we present an algorithm for finding maximal decompositions shown in Algorithm~\ref{alg:decomp}.  We will explain it using the example in Figure~\ref{fig:decomposition}. We return {\tt True} to indicate the pair is equivalent if there are no changes and the two versions exactly match (Line~\ref{algl:emptydelta}-\ref{algl:return-emptydelta}).
Otherwise, we add the initial decomposition, which includes each operator as a window, to a set of decompositions to be expanded (line~\ref{algl:init-decomp}).  In each of the following iterations, we remove a decomposition from the set, and iteratively expand its windows. 
To expand a window, we follow the procedure as in Algorithm~\ref{alg:maximize} to expand its neighbors. The only difference is that the neighbors in this case are windows, and we merge windows if their union is valid (line~\ref{algl:merge}). If a window cannot be further expanded, then we mark the window as maximal to avoid checking it again (line~\ref{algl:mark}). \sadeem{A subtle case arises when the merge of two windows yields an invalid window but the merge of a combination of more windows produces a valid window. We discuss this in Section~\ref{subsec:complete}.}
If all of the windows in the decomposition are maximal, we mark the decomposition as maximal, and verify whether each covering window is equivalent by passing it to the given EV (line~\ref{algl:equiv-all}). If all of the windows are verified to be equivalent,  we return {\tt True} to indicate that the version pair is equivalent (line~\ref{algl:true-all}). 
If in the decomposition there is only a single window, which includes the entire version pair, and the EV decides that the window is not equivalent, then the algorithm returns {\tt False} (line~\ref{algl:false-all}).
Otherwise, we continue exploring other decompositions until there are no more decompositions to explore. In that case, we return {\tt Unknown} to indicate that the equivalence of the version pair cannot be determined (line~\ref{algl:phi-all}).  This algorithm generalizes Algorithm~\ref{alg:maximize} to handle cases of two versions with multiple edits.
For an efficient exploration, we only expand and maximize covering windows and verify them.

\begin{algorithm}[htb]
\caption{Verifying the equivalence of a dataflow version pair with one or multiple edits (Baseline) \label{alg:decomp}}
    \DontPrintSemicolon
        \KwIn{A version pair ($P$, $Q$); A set of edit operations $\delta$ and a mapping $\mathcal M$ from $P$ to $Q$; An EV $\gamma$}
    \KwOut{A version pair equivalence flag $EQ$} \tcp*[h]{A {\tt True} value indicates the pair is equivalent, a {\tt False} value indicates the pair is not equivalent, and an {\tt Unknown} value indicates the pair cannot be verified}

    \uIf{$\delta$ is empty \label{algl:emptydelta}}{
\Return {\tt True} \label{algl:return-emptydelta}
}

$\theta \leftarrow $ decomposition with each operator as a window \label{algl:init-decomp}

$\Theta = \{ \theta \}$ \tcp*[h]{initial set of decompositions} \label{algl:init}

\While{$\Theta$ is not empty} {

Remove a decomposition $\theta_i$ from $\Theta$ \label{algl:remove}
    
\For{every covering window $\omega_j$ (in $\theta_i$) not marked \label{algl:cover}}{
\For{each neighbor $\omega_k$ of $\omega_j$}{
\uIf{$\omega_k \cup \omega_j$ is valid and not explored before \label{algl:isvalid}}{
$\theta_i'$ $\leftarrow$ $\theta - \omega_k - \omega_j + \omega_k \cup \omega_j$ \label{algl:merge}

add $\theta_i'$ to $\Theta$
}
}
\uIf{none of the neighbor windows can be merged \label{algl:max}}
{mark $\omega_j$ \label{algl:mark}
}
}\label{algl:endloop}
\uIf{every covering $\omega \in \theta_i$ is marked \label{algl:dmaximal}}
{
\uIf{$\gamma$ verifies each covering window in $\theta_i$ to be equivalent \label{algl:equiv-all}} {
\Return {\tt True} \label{algl:true-all}
}
       \uIf{$\theta_i$ has only one window and $\gamma$ verifies it not to be equivalent \label{algl:notequiv-all}}
    {
\Return {\tt False} \label{algl:false-all}
}
}

}

\Return {\tt Unknown} \label{algl:phi-all}

\end{algorithm}



\begin{theorem}
(Correctness). Given a dataflow version pair ($P$, $Q$), an edit mapping, and a sound EV, 1) if \sysName returns {\tt True}, then $P \equiv Q$, and 2) if \sysName returns {\tt False}, then $P \not\equiv Q$.
\end{theorem}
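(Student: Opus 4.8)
The plan is to prove the two directions separately, leveraging the lemmas already established for decompositions. The theorem claims soundness in both directions: a {\tt True} answer implies equivalence, and a {\tt False} answer implies inequivalence. I would first trace exactly where in Algorithm~\ref{alg:decomp} each of these values can be returned, and then discharge the correctness obligation at each return site.

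\textbf{Direction 1 ({\tt True} $\Rightarrow P \equiv Q$).} The value {\tt True} is returned in two places. The first is Line~\ref{algl:return-emptydelta}, reached when $\delta$ is empty; here $P$ and $Q$ are identical under the mapping $\mathcal M$, so every pair of sub-DAGs is structurally isomorphic, and by Assumption~\ref{def:assume} (determinism) the single sink of $P$ and its image in $Q$ produce the same result on every data-source instance, giving $P \equiv Q$. The second is Line~\ref{algl:true-all}, reached only after the algorithm has found a \emph{maximal decomposition} $\theta_i$ in which the EV verifies every covering window to be equivalent. The key observation is that a decomposition partitions the version pair into disjoint windows whose union is the whole pair, and every \emph{non-covering} window has, by Definition~\ref{def:window}, structurally identical sub-DAGs. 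Thus the hypotheses of Lemma~\ref{lem:multi} are met: each covering window is equivalent (by the EV, which is assumed sound, so its {\tt True} verdicts are correct) and each non-covering window is equivalent (by determinism). Lemma~\ref{lem:multi} then yields $P \equiv Q$ directly. I expect this direction to be essentially a clean appeal to Lemma~\ref{lem:multi}, with the only subtlety being the need to justify that non-covering windows are indeed equivalent in the sense of Definition~\ref{def:sub} — which follows because isomorphic stand-alone DAGs produce identical outputs on identical inputs.

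\textbf{Direction 2 ({\tt False} $\Rightarrow P \not\equiv Q$).} The value {\tt False} is returned only at Line~\ref{algl:false-all}, under the guard that $\theta_i$ consists of a \emph{single} window spanning the entire version pair and that the EV verifies this window to be inequivalent. Here I would argue that when the window is the whole pair, sub-DAG equivalence in the sense of Definition~\ref{def:sub} coincides with version-pair equivalence in the sense of Definition~\ref{def:equiv}: there are no upstream operators to supply constraints, so the ``stand-alone'' reading of the window is exactly $P$ versus $Q$ over all data-source instances. Since the EV is sound, its {\tt False} (inequivalence) verdict is correct, so there exists a data-source instance on which the sinks differ, i.e.\ $P \not\equiv Q$. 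I would also note explicitly that, per the paragraph following Algorithm~\ref{alg:maximize}, the algorithm only reports {\tt False} when the underlying EV is one that actually certifies inequivalence (e.g.\ COSETTE~\cite{conf/cidr/ChuWWC17}), so a {\tt False} is never a disguised ``cannot prove equivalent.''

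\textbf{Main obstacle.} The delicate point is not either logical implication in isolation but the alignment between the \emph{local, stand-alone} equivalence notion of Definition~\ref{def:sub} and the \emph{global} equivalence of Definition~\ref{def:equiv}. For Direction~1 this is handled by Lemma~\ref{lem:multi}, which already does the propagation argument along the DAG; for Direction~2 the gap closes only because the single window coincides with the full pair, eliminating upstream constraints (recall the caveat illustrated in Figure~\ref{fig:constraints}, where a window that is a proper sub-part can be locally inequivalent yet globally fine). I would therefore take care to emphasize in Direction~2 that the single-window guard on Line~\ref{algl:notequiv-all} is exactly what licenses transferring the EV's local inequivalence verdict to a global one — without that guard the implication would fail.
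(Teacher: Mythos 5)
Your proposal is correct, and its substance coincides with the paper's: both hinge on Lemma~\ref{lem:multi} (plus EV soundness) for the {\tt True} direction and on the single-full-window guard for the {\tt False} direction. The organization differs, though: the paper argues by \emph{contraposition} --- it assumes $P \not\equiv Q$ and performs a four-way case analysis on how Algorithm~\ref{alg:decomp} can terminate, using Lemma~\ref{lem:multi} to rule out the case where some decomposition is verified equivalent --- whereas you argue \emph{directly}, discharging a proof obligation at each return site (Line~\ref{algl:return-emptydelta} and Line~\ref{algl:true-all} for {\tt True}, Line~\ref{algl:false-all} for {\tt False}). The two organizations are logically interchangeable, but your version is more explicit exactly where the paper is thin: the paper dispatches Direction~2 with ``we follow the same approach,'' while you spell out the one point that actually makes it work --- that when the decomposition's single window is the entire version pair, the stand-alone equivalence of Definition~\ref{def:sub} coincides with the global equivalence of Definition~\ref{def:equiv} because there are no upstream operators imposing input constraints (the very gap illustrated by Figure~\ref{fig:constraints} for proper sub-windows). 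You also handle the empty-$\delta$ return explicitly via isomorphism plus Assumption~\ref{def:assume}, which the paper only touches implicitly inside its contrapositive case analysis, and you correctly fold in the paper's side condition (stated after Algorithm~\ref{alg:maximize}) that {\tt False} is only ever reported when the EV genuinely certifies inequivalence. In short: same key lemma and same mathematical content, with your direct per-return-site structure yielding a somewhat more self-contained and rigorous write-up, particularly for the inequivalence direction.
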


\begin{proof}
1) Suppose  $P \not\equiv Q$. According to definition~\ref{def:equiv}, this means that for a given input sources $\mathbb D$, there is a tuple $t$ that exists in the sink of $P$ but does not exist in the sink of $Q$.
Following Assumption~\ref{def:assume} that multiple runs of a dataflow produce the same result, we can infer that there must be a set of edit operations $\delta = \{c_1, \ldots, c_n \}$ to transform $P$ to $Q$ that caused the sink of $P$ to contain the tuple $t$ but does not exist in the sink of $Q$. 
\sysName must find a valid maximal decomposition $\theta$ following Algorithm~\ref{alg:decomp}.
There are four cases the procedure terminates and returns the result:

{\tt CASE1}: The set of edits is empty and this is not the case as inferred above that there exists a change.

{\tt CASE2}: The set of maximal decompositions is empty, because none of the decompositions satisfies the EV's restrictions or none of the decompositions were verified equivalent. In this case, \sysName returns {\tt Unknown}.

{\tt CASE3}: There is a decomposition that is verified to be equivalent by a correct EV, which according to Lemma~\ref{lem:multi}, implies that the version pair is equivalent given the assumptions in our setting. However, this is not the case because we assumed $P \not\equiv Q$.

{\tt CASE4}: There is a single window in the decomposition and it is verified by the EV to be not equivalent, when the EV can verify the inequivalence of the pair, in this case \sysName returns {\tt False}. 

In all cases, \sysName did not return {\tt True}, by contraposition, this proves that $P \equiv Q$. 

2) We follow the same approach as above to prove the second case.
\end{proof}


\subsection{Improving the Completeness of Algorithm~\ref{alg:decomp}}
\label{subsec:complete}
In general, the equivalence problem for two dataflow versions is undecidable~\cite{conf/cav/GrossmanCIRS17,10.5555/551350} (reduced from First-order logic). So there is no verifier that is complete~\cite{conf/cidr/ChuWWC17}. However, there are classes of queries that are decidable such as SPJ~\cite{journals/corr/abs-2004-00481}. In this section, we show factors that affect the completeness of Algorithm~\ref{alg:decomp} and propose ways to improve its completeness.

\boldstart{1) Window validity.}
In line~\ref{algl:max} of Algorithm~\ref{alg:decomp}, if none of the neighbor windows of $\omega_j$ can be merged with $\omega_j$ to become a valid window, we mark $\omega_j$ and stop expanding it, hoping it might be a maximal window.  The following example shows that this approach could miss some opportunity to find the equivalence of two versions.

\begin{example}
Consider $\mathcal M_1$ of the two versions $P$ and $Q$ from Example~\ref{fig:completeness-mapping}. Suppose the EV is {\sf Equitas}~\cite{journals/pvldb/ZhouANHX19} and a covering window $\omega$ contains the $Project$ from $P$ and its mapped operator $Aggregate$ from $Q$. Consider the window expansion procedure in Algorithm~\ref{alg:decomp}. If we add filter operator of both versions to the window, then the merged window is not valid.  The reason is that it violates Equitas's restriction $R5$ in Section~\ref{subsec:ev}, i.e., both DAGs should have the same number of Aggregate operators.
The algorithm thus stops expanding the window. However, if we continue expanding the window till the end, the final window with three operators is still valid.
\end{example} \label{ex:monotonic}

Using this final window, we can see that the two versions are equivalent, but the algorithm missed this opportunity. This example shows that even though the algorithm is correct in terms of claiming the equivalence of two versions, it may miss opportunities to verify their equivalence.  A main reason is that the  {\sf Equitas}~\cite{journals/pvldb/ZhouANHX19} EV does not have the following property.

\begin{definition}[EV's Restriction Monotonicity]
We say an EV is {\em restriction monotonic} if for each version pair $P$ and $Q$, for each invalid window $\omega$, each window containing $\omega$ is also invalid.
\end{definition}

Intuitively, for an EV with this property, if a window is not valid (e.g., it violates the EV's restrictions), we cannot make it valid by expanding the window.  For an EV that has this property such as {\sf Spes}~\cite{journals/corr/abs-2004-00481}, when the algorithm marks the window $\omega_j$ (line~\ref{algl:mark}), this window must be maximal.  Thus further expanding the window will not generate another valid window, and the algorithm will not miss this type of opportunity to verify the equivalence. 

If the EV does not have this property such as {\sf Equitas}~\cite{journals/pvldb/ZhouANHX19}, we can improve the completeness of the algorithm as follows. We modify line~\ref{algl:isvalid} by not checking if the merged window $\omega_j \cup \omega_k$ is valid or not. We also modify line~\ref{algl:max} to test if the window $\omega_j$ is maximal with respect to the EV. This step is necessary in order to be able to terminate the expansion of a window.  We assume there is a procedure for each EV that can test if a window is maximal by reasoning the EV's restrictions.

\boldstart{2) Different edit mappings.}

Consider two different edit mappings, $\mathcal{M}_1$ and $\mathcal{M}_2$, as shown in Figure~\ref{fig:completeness-mapping}. Let us assume the given EV is {\sf Equitas}~\cite{journals/pvldb/ZhouANHX19}. If we follow the baseline Algorithm~\ref{alg:decomp}, mapping $\mathcal{M}_1$ results in a decomposition that violates {\sf Equitas}'s \textbf{R2} restriction. On the other hand, mapping $\mathcal{M}_2$ satisfies the restrictions and allows the EV to test their equivalence. This example shows that different edit mappings can lead to different decompositions. 
Notably, the edit distance of the first mapping is $2$, while the edit distance of the second one is $4$. This result shows that a minimum-distance edit mapping does not always produce the best decomposition to show the equivalence.

One way to address this issue is to enumerate all possible edit mappings~\cite{conf/gbrpr/RiesenEB13} and perform the decomposition search by calling Algorithm~\ref{alg:decomp} for each edit mapping. 
If the changes between the versions are tracked, then the corresponding mapping of the changes can be treated as the first considered edit mapping before enumerating all other edit mappings.
\section{Completeness of \sysName} 
\label{subsec:completeness}
In this section, we first discuss the completeness of \sysName and the dependence on the completeness of its internal components (\S\ref{subsec:dependence}). Then we show examples of using different EVs to illustrate the restrictions that a workflow version pair needs to satisfy for \sysName to be complete for verifying the pair's equivalence and formally prove its completeness (\S\ref{subsec:example}). 

\subsection{\sysName's Completeness Dependency on Internal Components} \label{subsec:dependence}

For any pair of workflow versions, if the pair is equivalent and there is a valid decomposition w.r.t.~to a given EV where each of its covering windows is verified as equivalent by the given EV, \sysName returns {\tt True}. 
Recall that \sysName considers all possible edit mappings and explores all possible valid decompositions for each mapping.  If there is a valid decomposition under a mapping, \sysName guarantees to find it. 
For any pair of workflow versions, if the pair is inequivalent and there exists a valid decomposition that includes a single window consisting of the entire version pair, and this window is verified as inequivalent by the EV, then \sysName returns {\tt False}. 
For simplicity, throughout this section, in both cases we say there is a valid decomposition whose equivalence is determined by the given EV.  In both cases, \sysName does not return {\tt Unknown}.
Note that the completeness of \sysName relies on the completeness of the given EV. If the EV is incomplete and returns {\tt Unknown} to all possible valid decompositions generated by \sysName, accordingly \sysName  returns {\tt Unknown}. 

\begingroup
\setlength{\columnsep}{8pt}
\begin{wrapfigure}{r}{0.25\linewidth}
  \centering 
\includegraphics[width=\linewidth]{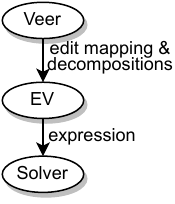}
           \vspace{-0.25in}
         \caption{Components related to the equivalence verification process.}
         \label{fig:complete}
\end{wrapfigure}
The completeness of modern EVs~\cite{journals/pvldb/ZhouANHX19,journals/corr/abs-2004-00481,journals/pvldb/ChuMRCS18,conf/cav/GrossmanCIRS17,conf/sigmod/WangZYDHDT0022} depends on the internal components used. For instance, most EVs~\cite{journals/pvldb/ZhouANHX19,journals/corr/abs-2004-00481,journals/pvldb/ChuMRCS18,conf/cav/GrossmanCIRS17,conf/sigmod/WangZYDHDT0022} model queries as expressions, such as FOL formulas, and utilize a solver, e.g., SMT, to determine the satisfiability of formulas. SMT solvers~\cite{conf/tacas/MouraB08} are complete for testing the satisfiability of linear formulas~\cite{journals/tocl/BorrallerasLROR19}. Therefore, EVs that use SMT solvers in their internal verification procedure are incomplete for verifying the equivalence of two queries (workflows or SQL) when the two queries include non-linear conditions in their predicates. Likewise, \sysName is complete for verifying the equivalence of two workflow versions that satisfy the EV's restrictions. Figure~\ref{fig:complete} illustrates the internal components \sysName uses and how these components contribute to \sysName's overall completeness. 

\endgroup

\subsection{Restrictions of Some EVs and \sysName's Completeness} \label{subsec:example}
We use the following examples on three EVs (summarized in Table~\ref{tbl:complete}) to explain \sysName's completeness process. Suppose a given EV is {\sf Spes}~\cite{journals/corr/abs-2004-00481}. {\sf Spes} determines the equivalence of two queries under the ``Bag'' table semantics. {\sf Spes} is complete for determining the equivalence of two queries that satisfy the following restrictions~\cite{journals/corr/abs-2004-00481}: 1) the two queries should contain only SPJ operators; 2) the selection predicates in every query should not include non-linear conditions. 

\begin{table*}[htbp]
\caption{Example EVs and their restrictions along with how \sysName is complete for verifying a version pair that satisfy the EV's restrictions. \label{tbl:complete}}
\begin{adjustbox}{width=\linewidth,center}
\begin{tabular}{|l|l|l|}
\hline
\textbf{EV}                                                                                 &       \textbf{\begin{tabular}[c]{@{}l@{}}EV's restrictions\end{tabular}}   &  \textbf{\begin{tabular}[c]{@{}l@{}}Explanation of why \sysName is complete\end{tabular}}                                                                                                                     \\ \hline
\textbf{{\sf Spes}~\cite{journals/corr/abs-2004-00481}}          & \begin{tabular}[c]{@{}l@{}}1. the pair should not include other than Select-Project-Join (SPJ.)\\ 2. queries should have only linear predicate conditions.\end{tabular} & \begin{tabular}[c]{@{}l@{}}\sysName finds all possible windows that satisfy the EV's \\ restrictions, and in this example, the given pair \\ satisfies the restrictions, so \sysName can find it.\end{tabular}  \\ \hline
  
\textbf{\begin{tabular}[c]{@{}l@{}}{\sf UDP}~\cite{journals/pvldb/ChuMRCS18} \end{tabular}} & \begin{tabular}[c]{@{}l@{}}1. the pair should not include other than Union-SPJ (USPJ.)\\ 2. the two versions must have a bijective mapping.\end{tabular}                 & \begin{tabular}[c]{@{}l@{}}\sysName finds all possible windows that satisfy the EV's \\ restrictions under all possible mappings, and in this \\ example, \sysName finds a bijective mapping if there exists one.\end{tabular} \\ \hline
   
\begin{tabular}[c]{@{}l@{}}\textbf{Spark} \\ \textbf{Verifier~\cite{conf/cav/GrossmanCIRS17}} \end{tabular}  & \begin{tabular}[c]{@{}l@{}}1. the pair should not include other than SPJ-Aggregate (SPJA.)\\ 2. there should not be more than one aggregate operator in each \\ version such that the aggregate is without grouping and \\ outputs a primitive, e.g., MAX and MIN.\end{tabular} & \begin{tabular}[c]{@{}l@{}}\sysName finds all possible windows that satisfy the EV's \\ restrictions, and in this example, the given pair \\ satisfies the restrictions, so \sysName finds it.\end{tabular}                                                   \\ \hline

\end{tabular}
\end{adjustbox}
\end{table*}

\begin{theorem}
\sadeem{(Completeness.) \sysName is complete for determining the equivalence of two workflow versions ($P, Q$) if the pair satisfies the restrictions of a given EV.} 
\end{theorem}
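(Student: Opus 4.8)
The plan is to reduce the completeness claim to a single structural observation: when the pair $(P,Q)$ satisfies the restrictions of the given EV, the trivial decomposition consisting of one window equal to the entire version pair is itself a valid decomposition. First I would unpack the hypothesis ``$(P,Q)$ satisfies the restrictions of the EV'' to mean precisely that the whole pair, viewed as a stand-alone query pair, is a valid window in the sense of Definition~\ref{def:valid}; hence the one-window decomposition $\theta^\star = \{(P,Q)\}$ is a valid decomposition by Definition~\ref{def:decomp-valid}. As the text already notes, in this situation the decomposition hierarchy collapses to a lattice whose single leaf maximal decomposition (MD) is exactly $\theta^\star$. Because $\theta^\star$ has a single window that coincides with the entire pair, window-equivalence here is literally pair-equivalence, so I will not even need to route through Lemma~\ref{lem:multi}.

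Next I would argue that \sysName's search is guaranteed to reach $\theta^\star$. Since Algorithm~\ref{alg:decomp} starts from the finest decomposition (each operator its own window) and repeatedly merges neighboring windows, exploring every valid decomposition reachable by such merges, it will eventually produce $\theta^\star$ and pass its single window to the EV. This is the step I expect to be the main obstacle: the baseline Algorithm~\ref{alg:decomp} merges one neighbor at a time and may stall at an individually invalid intermediate window even though the fully merged whole-pair window is valid (the non-monotonic-EV phenomenon highlighted in Example~\ref{ex:monotonic}). I would therefore invoke the improved expansion rule from Section~\ref{subsec:complete}, which tests maximality \emph{with respect to the EV} rather than rejecting each locally invalid merge, to certify that the climb up the lattice to $\theta^\star$ is never blocked. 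Combined with the fact that \sysName ranges over all edit mappings, this guarantees $\theta^\star$ is generated and verified.

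Finally I would close by a case analysis on the EV's verdict on the single window of $\theta^\star$. By completeness of the EV on inputs meeting its restrictions, the EV cannot answer {\tt Unknown} on $\theta^\star$. If $P \equiv Q$, the EV reports the window equivalent and \sysName returns {\tt True} at line~\ref{algl:true-all}; if $P \not\equiv Q$, then since $\theta^\star$ consists of the lone window spanning the entire pair and the EV can certify inequivalence, the EV reports not-equivalent and \sysName returns {\tt False} at line~\ref{algl:false-all}. In neither case does the loop fall through to the final {\tt Unknown} at line~\ref{algl:phi-all}. The correctness of these {\tt True}/{\tt False} answers is already supplied by the Correctness Theorem, so together with the non-{\tt Unknown} guarantee this yields that \sysName determines the equivalence of $(P,Q)$, which is exactly the asserted completeness.
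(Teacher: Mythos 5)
Your proposal is correct and follows essentially the same route as the paper's proof: the hypothesis makes the single-window decomposition consisting of the entire pair a valid decomposition, \sysName's exhaustive search over edit mappings and valid decompositions reaches it, and Definition~\ref{def:restriction} rules out an {\tt Unknown} verdict from the EV on that window, so \sysName returns the EV's {\tt True}/{\tt False} answer. Your one addition is making explicit—via the non-monotonicity fix of Section~\ref{subsec:complete}—why the merge-based expansion cannot stall at locally invalid intermediate windows before reaching the whole-pair window, a reachability detail the paper's brief proof asserts without argument.
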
 \label{thrm:complete}

\begin{proof}
\sadeem{Suppose the two versions satisfy the restrictions of the given EV. Since \sysName considers all possible mappings and all possible decompositions that satisfy the EV's restrictions, it will find a decomposition with a single window that includes the entire pair because the given pair satisfies the EV's restriction. According to Definition~\ref{def:restriction}, the EV is able to determine the equivalence of a pair if the pair satisfies the EV's restriction and the EV is sound. 
\sysName returns the equivalence result from the EV.
}
\end{proof}

\section{\sysOpt: Improving Verification Performance}
\label{sec:optimize}

In this section, we develop four techniques to improve the performance of the baseline algorithm for verifying the equivalence of two dataflow versions.
We show how to reduce the search space of the decompositions by dividing the version pair into segments (Section~\ref{subsec:segment}). We present a way to detect and prune decompositions that are not equivalent (Section~\ref{subsec:pruning}).
We also discuss how to rank the decompositions to efficiently explore their search space (Section~\ref{subsec:ranking}). Lastly, we propose a way to efficiently identify the inequivalence of two dataflow versions (Section~\ref{subsec:inequiv}).

\subsection{Reducing Search Space Using Segmentations} \label{subsec:segment}

The size of the decomposition structure in Figure~\ref{fig:decomposition} depends on a few factors, such as the number of operators in the dataflow, the number of changes between the two versions, and the EV's restrictions. When the number of operators increases, the size of the possible decompositions increases.  Thus we want to reduce the search space to improve the performance of the algorithm.

The purpose of enumerating the decompositions is to find all possible cuts of the version pair to verify their equivalence. In some cases a covering window of one edit operation will never overlap with a covering of another edit operation, as shown in  Figure~\ref{fig:abstract-segment}.  In this case, we can consider the covering windows of those never overlapping separately.  Based on this observation, we introduce the following concepts.

 \begin{figure}[htbp]
\includegraphics[width=0.7\linewidth]{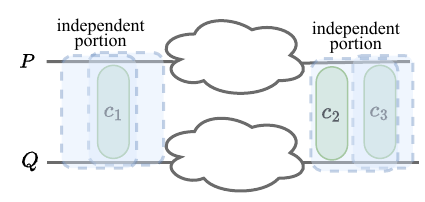}
         \caption{An example where any covering window of an edit operation $c_1$ never overlaps with a covering window of another edit operation $c_2$ or $c_3$.}
         \label{fig:abstract-segment}
         \end{figure}
\begin{definition}[Segment and segmentation]
Consider two dataflow versions $P$ and
$Q$ with a set of edits $\delta = \{c_1, \ldots, c_n\}$ from $P$ to $Q$ and a corresponding
mapping $\mathcal M$ from $P$ to $Q$. 
A {\em segment $\mathcal S$} is a window of $P$ and $Q$ under the mapping $\mathcal M$. A {\em segmentation $\psi$} is a set of disjoint segments, such that they contain all the edits in $\delta$, and there is no valid covering window that includes operators from two different segments.
 
\end{definition}
\label{def:segmentation}

A version pair may have more than one segmentation. For example, consider a version pair with a single edit. One segmentation has a single segment, which includes the entire version pair. Another segmentation includes a segment that was constructed by finding the union of MCWs of the edit.

\boldstart{Computing a segmentation.}  We present two ways to compute a segmentation. {\em 1) Using unions of MCWs:}
For each edit $c_i \in \delta$, we compute all its MCWs, and take their union, denoted as window $U_i$. We iteratively check for each window $U_i$ if it overlaps with any other window $U_j$, and if so, we merge them. We repeat this step until no window overlaps with other windows. Each remaining window becomes a segment and this forms a segmentation. Notice that a segment may not satisfy the restrictions of the given EV.
{\em 2) Using  operators not supported by the EV:} 
We identify the  operators not supported by the given EV. For example, a {\sf Sort} operator cannot be supported by {\sf Equitas}~\cite{journals/pvldb/ZhouANHX19}. Then we mark these operators as the boundaries of segments. The window between two such operators forms a segment. Compared to the second approach, the first one produces fine-grained segments, but is computationally more expensive.

\boldstart{Using a segmentation to verify the equivalence of the version pair.} As there is no valid covering window spanning over two segments, we can divide the problem of checking the equivalence of $P$ and $Q$ into sub-problems, where each sub-problem is to check the equivalence of the two sub-DAGs in a segment. Then to prove the equivalence of a version pair, each segment in a segmentation needs to be equivalent. A segment is equivalent, if there is any decomposition such that every covering window in the decomposition is equivalent. We can organize the components of the version pair verification problem as an AND/OR tree as shown in the Figure~\ref{fig:and-or-tree}.

 \begin{figure}[htbp]
\includegraphics[width=0.7\linewidth]{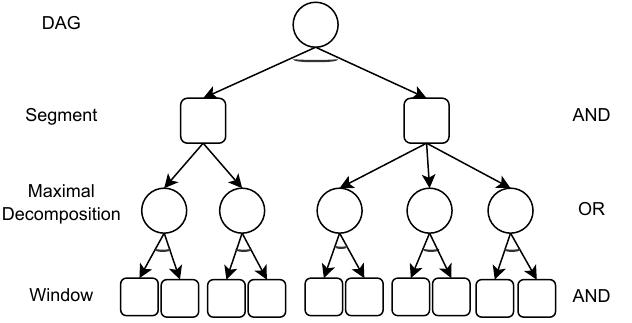}
         \caption{A sample abstract AND/OR tree to organize the components of the version pair verification problem.}
         \label{fig:and-or-tree}
         \end{figure}
         
\begin{lemma} \label{lem:segment}
For a version pair $P$ and $Q$ with a set of edit operations $\delta = \{c_1 \ldots c_n\}$ from $P$ to $Q$, if every segment $\mathcal S$ in a segmentation $\psi$ is equivalent, then the version pair is equivalent. 
\end{lemma}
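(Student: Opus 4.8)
The plan is to prove Lemma~\ref{lem:segment} by reducing it to the already-established multi-edit result, Lemma~\ref{lem:multi}. The key structural fact I would exploit is the definition of a segmentation: the segments are disjoint, they collectively contain every edit in $\delta$, and crucially there is no valid covering window spanning two segments. This last property means the edits are cleanly partitioned across segments, so a decomposition witnessing each segment's equivalence can be assembled into a global decomposition of the whole version pair without any overlap conflicts.

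First I would unpack the hypothesis. By Definition~\ref{def:segmentation}, a segmentation $\psi$ is a set of disjoint segments $\{\mathcal S_1, \ldots, \mathcal S_k\}$ whose union covers all edits and all operators of the version pair. The assumption is that each $\mathcal S_i$ is equivalent, which by the definition given just before the lemma means: for each segment there exists a decomposition $\theta_i$ of $\mathcal S_i$ such that every covering window in $\theta_i$ is verified equivalent. I would then form the union $\theta = \bigcup_{i=1}^{k} \theta_i$ and argue that $\theta$ is itself a valid decomposition of the entire version pair $(P,Q)$. This requires checking the three conditions of a Decomposition: every edit lies in exactly one window (inherited from the fact that each edit belongs to exactly one segment, and within that segment to exactly one window of $\theta_i$), all windows are disjoint (within a segment by the $\theta_i$ being decompositions, and across segments because the segments themselves are disjoint), and the union of all windows is the full version pair (since the segments partition the pair and each $\theta_i$ tiles its segment).

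Once $\theta$ is shown to be a genuine decomposition of $(P,Q)$, the conclusion is immediate: every covering window of $\theta$ comes from some $\theta_i$ and is therefore equivalent by hypothesis, so by Lemma~\ref{lem:multi} the version pair $P$ and $Q$ is equivalent. I would also remark that any non-covering windows needed to complete the tiling have structurally identical sub-DAGs (Definition~\ref{def:window}) and so pose no obstacle, exactly as in the proof of Lemma~\ref{lem:multi}.

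The main obstacle I anticipate is justifying that the cross-segment windows glue together correctly, i.e.\ that no covering window of $\theta$ secretly straddles a segment boundary in a way that would invalidate the disjointness or covering-window structure. This is precisely where the defining property ``there is no valid covering window that includes operators from two different segments'' does the work: it guarantees that each $\theta_i$'s covering windows stay strictly inside $\mathcal S_i$, so the merged $\theta$ inherits a clean partition of the edits. I would make sure to invoke this property explicitly rather than treating the gluing as obvious, since it is the one place where the argument could fail if segments were allowed to share a covering window. With that in hand the lemma follows as a corollary of Lemma~\ref{lem:multi}, and no new case analysis on the table semantics or operator determinism is needed beyond what Lemma~\ref{lem:multi} already supplies.
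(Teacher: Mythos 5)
Your overall strategy---assembling the per-segment decompositions into one global decomposition and invoking Lemma~\ref{lem:multi}---is sound and close in spirit to the paper's proof, which likewise treats the segments as (covering) windows and re-runs the propagation argument of Lemma~\ref{lem:multi}. However, there is one concrete incorrect step: you assert that the segments' union ``covers all edits and all operators of the version pair'' and use this to verify the third decomposition condition (``the union of the windows is the full version pair, since the segments partition the pair''). Definition~\ref{def:segmentation} does not say this. A segmentation is only required to (i) consist of disjoint segments, (ii) contain all the edits in $\delta$, and (iii) admit no valid covering window spanning two segments; it is \emph{not} required to tile the version pair. Indeed, in the paper's running example (Figure~\ref{fig:segment-divide}), the segmentation is built by cutting at the {\sf Sort} operator, which is unsupported by the EV and lies outside \emph{both} segments. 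So $\theta = \bigcup_i \theta_i$ as you construct it is in general not a decomposition of $(P,Q)$, and the appeal to Lemma~\ref{lem:multi} does not yet apply.

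The gap is easily patched, and the patch is precisely the observation the paper's proof leans on: since every edit is contained in some segment, every operator outside all segments is untouched by $\delta$ and hence mapped one-to-one under $\mathcal M$ to a structurally identical counterpart. Add each such operator pair to $\theta$ as a singleton (non-covering) window; these windows are trivially disjoint from the segments and from each other, and they complete the tiling of the version pair. All covering windows of the completed $\theta$ still come from the $\theta_i$'s and are equivalent by hypothesis, so Lemma~\ref{lem:multi} then yields $P \equiv Q$. One further small misattribution worth fixing: the property ``no valid covering window includes operators from two different segments'' is not what keeps the windows of $\theta_i$ inside $\mathcal S_i$---that holds by construction, since $\theta_i$ is a decomposition \emph{of} $\mathcal S_i$. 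That property's real job is to justify the \emph{completeness} of searching each segment independently (no equivalent decomposition is missed by refusing to form cross-segment windows); the correctness direction you are proving needs only disjointness and the fact that all edits lie inside segments.
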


\begin{proof}
Suppose every segment $S_i$ in a segmentation $\psi$ is equivalent. Since according to Definition~\ref{def:segmentation} a segment is a window and each change is covered in all of the segments in a segmentation, then we can infer that any part of the version pair that is not in a segment is structurally identical. Following the same procedure of the proof for Lemma~\ref{lem:multi}, we can say that the result is either from a structurally identical pair of sub-DAGs or from a segment, which is said to be equivalent. We can deduce that the version pair is equivalent.
\end{proof}

Algorithm~\ref{alg:segments} shows how to use a segmentation to check the equivalence of two versions.
We first construct a segmentation. 
For each segment we find if its pair is equivalent by calling   Algorithm~\ref{alg:decomp}. If any segment is not equivalent, we can terminate the procedure early. We repeat this step until all of the segments are verified equivalent and we return {\tt True}. Otherwise we return {\tt Unknown}. For the case where there is a single segment consisting of the entire version pair and Algorithm~\ref{alg:decomp} returns {\tt False}, the algorithm returns {\tt False}.

\begin{algorithm}[htb]
\caption{Using segments to verify the equivalence\label{alg:segments}}
    \DontPrintSemicolon
        \KwIn{A version pair ($P$, $Q$); A set of edit operations $\delta$ and a mapping $\mathcal M$ from $P$ to $Q$; An EV $\gamma$}
    \KwOut{A version pair equivalence flag $EQ$} \tcp*[h]{A {\tt True} value indicates the pair is equivalent, a {\tt False} value to indicate the version pair is not equivalent, and an {\tt Unknown} value indicates the pair cannot be verified}
    
$\psi \leftarrow$
constructSegmentation $(P, Q, \mathcal M)$ \label{algl:segment}

\For{every segment $\mathcal S_i \in \psi$} {
$result_i \leftarrow$ Algorithm~\ref{alg:decomp} ($\mathcal S_i$, $\delta$, $\mathcal M$, $\gamma$)

\uIf{$result_i$ is not {\tt True} \label{algl:segment-false}}{
\BREAK
}
}
\uIf{every $result_i$ is {\tt True} \label{algl:segment-true}}{
\Return {\tt True}
}
\uIf{$result_i$ is {\tt False} and there is only one segment including the entire version pair \label{algl:segment-allfalse.}}{
\Return {\tt False}
}
\Return {\tt Unknown} \label{algl:phi-segment}

\end{algorithm}

Figure~\ref{fig:segment-divide} shows the segments of the running example when using {\sf Equitas}~\cite{journals/pvldb/ZhouANHX19} as the EV. Using the second approach for computing a segmentation, we know {\sf Equitas}~\cite{journals/pvldb/ZhouANHX19} does not support the {\sf Sort} operator, so we divide the version pair into two segments.
         \begin{wrapfigure}{r}{0.5\linewidth}
\includegraphics[width=\linewidth]{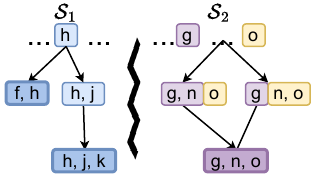}
         \caption{Two segments to reduce the decomposition-space of the running example.}
         \label{fig:segment-divide}
         \end{wrapfigure} 
The first one $\mathcal S_1$ includes those operators before {\sf Sort}, and the second one $\mathcal S_2$ includes those operators after the {\sf Sort}.
The example shows the benefit of using segments to reduce the decomposition-space to a total of $8$ (the sum of number of decompositions in every segment) compared to $16$ (the number of all possible combinations of decompositions across segments) when we do not use segments.


\subsection{Pruning Stale Decompositions} 
\label{subsec:pruning}

Another way to improve the performance is to prune {\em stale} decompositions, i.e., those that would not be verified equivalent even if they are further expanded. 
For instance, Figure~\ref{fig:pruning} shows part of the decomposition hierarchy of the running example. Consider the decomposition $\theta_2$. Notice that the first window, $\omega_1 (f, h)$, cannot be further expanded and is marked ``maximal'' but the decomposition can still be further expanded by the other two windows, thus the decomposition is not maximized. After expanding the other windows and reaching a maximal decomposition, we realize that the decomposition is not equivalent because one of its windows, e.g., $\omega_1$, is not equivalent. 

Based on this observation, if one of the windows in a decomposition becomes maximal, we can immediately test its equivalence.  If it is not equivalent, we can terminate the traversal of the decompositions after this one.
To do this optimization, we modify Algorithm~\ref{alg:decomp} to test the equivalence of a maximal window after Line~\ref{algl:mark}\footnote{We can test the equivalence of the other windows for early termination.}. 
If the window is equivalent, we continue the search as before.

         \begin{figure}[htbp]
\includegraphics[width=\linewidth]{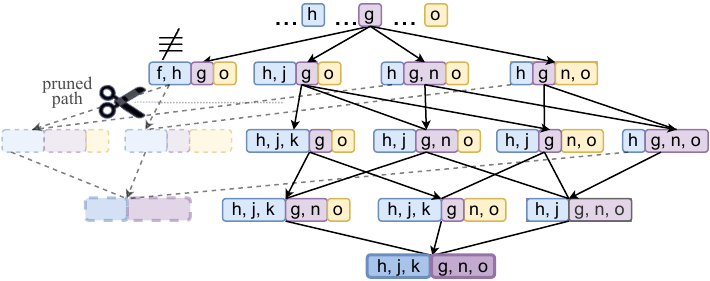}
         \caption{Example to show the pruned paths after verifying the maximal window highlighted in blue to be not equivalent.}
         \label{fig:pruning}
     \end{figure}

\subsection{Ranking-Based Search} 
\label{subsec:ranking}


\boldstart{Ranking segments within a segmentation.} Algorithm~\ref{alg:segments} needs an order to verify those segments in a segmentation one by one. If any segment is not equivalent, then there is no need for verifying the other segments. We want to rank the segments such that we first evaluate the smallest one to get a quick answer for a possibility of early termination. We consider different signals of a segment $S$ to compute its score. Various signals and ranking functions can be used. An example scoring function is $\mathcal{F}(S)= m_S + n_S$, where $m_S$ is its number of operators and $n_S$ is its number of changes. A segment should be ranked higher if it has fewer changes. The reason is that fewer changes lead to a smaller number of decompositions, and consequently, testing the segment's equivalence takes less time. Similarly, if a segment's number of operators is smaller, then the number of decompositions is also smaller and would produce the result faster.

For instance, the numbers of operators in $\mathcal S_1$ and $\mathcal S_2$ in Figure~\ref{fig:segment-divide} are $4$ and $3$, respectively.  Their numbers of changes are $1$ and $2$, respectively. The ranking score for both segments is the total of both metrics, which is $5$. Then any of the two segments can be explored first, and indeed the example shows that the number of decompositions in both segments is the same.



\boldstart{Ranking decompositions within a segment.} For each segment, we use Algorithm~\ref{alg:decomp} to explore its decompositions. The algorithm needs an order (line~\ref{algl:remove}) to explore the decompositions. The order, if not chosen properly, can lead to exploring many decompositions before finding an equivalent one. We can optimize the performance by ranking the decompositions and performing a best-first search exploration. Again, various signals and ranking functions can be used to rank a decomposition. An example ranking function for a decomposition $d$ is $\mathcal{G}(d)= o_d - w_d$, where $o_d$ is the average number of operators in its covering windows, and $w_d$ is the number of its unmerged windows (those windows that include a single operator and are not merged with a covering window). A decomposition is ranked higher if it is closer to reaching an MD for a chance of finding an equivalent one. Intuitively, if the number of operators in every covering window is large, then it may be closer to reaching an MD. Similarly, if there are only a few remaining unmerged windows, then the decomposition may be close to reaching its maximality.

For instance, decomposition $\theta_3$ in Figure~\ref{fig:decomposition} has $11$ unmerged windows, and the average number of operators in its covering windows is $1$. While $\theta_6$ has $10$ unmerged windows, and the average number of operators in its covering windows is $2$. Using the example ranking function, the score of $\theta_3$ is $1-11 = -10$ and the score of $\theta_6$ is $2-10 = -8$. Thus, $\theta_6$ is ranked higher, and it is indeed closer to reaching an MD.

\subsection{Identifying Inequivalent Pairs Efficiently}\label{subsec:inequiv}

In this section, we use an example to show how to quickly detect the inequivalence between two dataflow versions using a symbolic representation to represent partial information of the result of the sink of each version. Consider the case where two dataflow versions $P$ and $Q$ are inequivalent, as shown in Figure~\ref{fig:inequiv}. The approach discussed so far attempts to find a decomposition in which all its windows are verified equivalent. However, in cases where the version pair is inequivalent, as in this example, such a decomposition does not exist, and the search framework would continue to look for one \sadeem{unsuccessfully}. Moreover, detecting the inequivalence of the pair happens if there exists a decomposition that includes the entire version pair and satisfies the given EV's restrictions. Although the cost of maximizing the window and testing if it is valid could be low, testing the equivalence of maximal decompositions by pushing it to the EV incurs an overhead due to the EV's reasoning about the semantics of the window. Thus, we want to avoid sending a window to the EV if we can quickly determine beforehand that the version pair is not equivalent.

\begin{figure}[hbt]
\centering
           \begin{subfigure}[t]{0.49\linewidth}
           \centering
\includegraphics[width=0.75\linewidth]{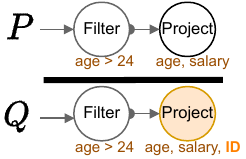}
         \caption{A sample of two inequivalent dataflow versions.}
         \label{fig:inequiv}
         \end{subfigure}
         \hfill
           \begin{subfigure}[t]{0.49\linewidth}
           \centering
\includegraphics[width=0.85\linewidth]{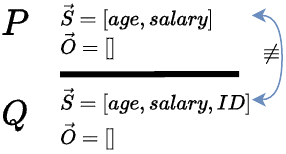}
         \caption{A partial symbolic representation of two versions showing the projected columns are different.}
         \label{fig:symbolic}
         \end{subfigure}
         \caption{Example of two inequivalent dataflow versions and their partial symbolic representation.}
         \label{fig:inequiv-ex}
     \end{figure}
     
To quickly identify the inequivalence between two dataflow versions, our approach is to create a lightweight representation that allows us to partially reason about the semantics of the version pair. This approach relies on a symbolic representation similar to other existing methods~\cite{conf/sigmod/LeFevreSHTPC14,journals/pvldb/ZhouANHX19}, denoted as $(\vec{S}, \vec{O})$. In this representation, $\vec{S}$ and $\vec{O}$ are lists that represent the projected columns in the table and the columns on which the result table is sorted, respectively. To construct the representation, we follow the same techniques in existing literature~\cite{journals/pvldb/ChuMRCS18,journals/pvldb/ZhouANHX19} by using predefined transformations for each operator. Operators inherit the representation from their upstream/parent operator and update the fields based on their internal logic. In this way, if the list of projected columns (based on $\vec{S}$) of version $P$ is different from those in $Q$, as in Figure~\ref{fig:symbolic}, we can know the two versions do not produce the same results. We can apply the same check to the sorted columns.
 


\section{Extensions} \label{sec:extension}
In this section, we discuss relaxing the definition of EV's restrictions then discuss the consequences of relaxing the definition and propose extending Algorithm~\ref{alg:decomp} to handle incomplete EVs and handle multiple given EVs.

\boldstart{Relaxing EV's restrictions.}
Recall an EV's restrictions are conditions that a query pair must satisfy to guarantee the EV's completeness for determining the equivalence of the query pair. This definition of restrictions limits the opportunity to cover more query pairs. 
%
         %
Thus, we relax the definition of EV's restrictions as follows.

\begin{definition}[Relaxed EV's restrictions]
{\em Restrictions} of an EV are a set of conditions such that for each query pair if this pair satisfies these conditions, then the EV can attempt to determine the equivalence of the pair.
\end{definition}\label{def:relaxed-restriction}

However, relaxing the definition of an EV's restriction may not guarantee the completeness of the EV and may introduce the following implication.

\boldstart{Handling greedy window verification when an EV is incomplete.}
In Line~\ref{algl:dmaximal} of Algorithm~\ref{alg:decomp} we push testing the equivalence of a decomposition to the given EV only when the decomposition is marked maximal. The following example shows that this approach could miss some opportunity to find the equivalence of two versions, because the EV is not able to verify the equivalence of the two sub-DAGs in the maximal window.

\begin{example}
Consider two dataflow versions $P$ and $Q$ with a single edit $c$ based on a given mapping $\mathcal M$. Let $\gamma$ be a given EV. Suppose a covering window $\omega_c$ satisfies the restrictions of $\gamma$, and the EV is able to verify the equivalence of the two sub-DAGs in $\omega_c$. According to Algorithm~\ref{alg:decomp}, we do not check the equivalence of this window if it is not marked maximal. Let $\omega'$ be the only MCW that contains $\omega$. Following Line~\ref{algl:dmaximal} in Algorithm~\ref{alg:decomp}, if a window is maximal ($\omega'$ in this example) we push testing its equivalence to the EV. However, suppose in this case, the EV returns {\tt Unknown}, because EVs are mostly incomplete~\cite{conf/cidr/ChuWWC17} for verifying two general relational expressions. Since there is no other MCW to test, \sysName accordingly returns {\tt Unknown} for verifying the equivalence of the sub-DAGs in $\omega'$. However, if we pushed testing the equivalence of the smaller window $\omega_c$, then \sysName would have been able to verify the equivalence of the pair.
\end{example}

This example highlights the significance of verifying the equivalence of sub-DAGs within smaller windows before expanding to larger windows. The challenge arises when an EV can verify the equivalence of a small window but fails to do so for a larger one. To address this, we modify Line~\ref{algl:dmaximal} in Algorithm~\ref{alg:decomp} to check the equivalence of smaller windows by backtracking when the maximal window is not verified. This modification ensures that we do have more opportunities to verify the equivalence of the version pair. Note that this approach may introduce a computational overhead due to the repeated checking of each window and not just the maximal ones.

\boldstart{Using multiple EVs.}
As mentioned earlier, the problem of verifying the equivalence of two relational expressions is undecidable~\cite{10.5555/551350}. Thus, any given EV would have limitations and is incomplete for solving the problem of deciding the equivalence of two queries. To harness the capabilities of different EVs, we extend \sysName to take in a set of EVs and their associated restrictions. We do not modify Algorithm~\ref{alg:decomp}. However, we extend the `isValid' function in Line~\ref{algl:isvalid} to encode a window is valid w.r.t which EV so that when we verify the equivalence of the sub-DAGs in the window in Line~\ref{algl:equiv-all}, we call the corresponding EV the window satisfies.
\section{Experiments} 
\label{sec:experiment}

In this section, we present an experimental evaluation of the proposed solutions. We want to answer the following questions:
Can our solution verify the equivalence of versions in a real-world pipelines workload? How does our solution perform compared to other verifiers? How the optimization techniques help in the performance? What are the parameters that affect the performance?

\subsection{Experimental Setup}
\boldstart{Synthetic workload.} We constructed four dataflows $W1-W4$ on TPC-DS~\cite{misc/tpcds} dataset as shown in Table~\ref{tbl:workload}. 
For example, dataflow $W1$'s first version was constructed based on TPC-DS Q40, which contains $17$ operators including an outer join and an aggregate operators.
dataflow $W2$'s first version was constructed based on TPC-DS Q18, which contains $20$ operators.
We omit details of other operators included in the dataflows such as {\sf Unnest}, {\sf UDF}, and {\sf Sort} as these do not affect the performance of the experimental result as we explain in each experiment.

\boldstart{Real workload.}
We analyzed a total of $179$ real-world pipelines from Texera~\cite{texera}. Among the dataflows, $81\%$ had deterministic sources and operators, and we focused our analysis on these dataflows. Among the analyzed dataflows, $8\%$ consisted primarily of $8$ operators, and another $8$ had $12$ operators. Additionally, $33\%$ of the dataflows contained $3$ different versions, while $19\%$ had $35$ versions. $58\%$ of the versions had a single edit, while $22\%$ had two edits. We also observed that the {\sf UDF} operator was changed in $17\%$ of the cases, followed by the {\sf Projection} operator ($6\%$ of the time) and the {\sf Filter} operator ($6\%$ of the time). From this set of dataflows, we selected four as a representative subset, which is presented as $W5 \ldots W8$ in Table~\ref{tbl:workload} and we used IMDB~\cite{imdbdataset:website} and Twitter~\cite{twitter-api} datasets.

\begin{table}[htbp]
\caption{Workloads used in the experiments. \label{tbl:workload}}
      \vspace{-0.15in}
\begin{adjustbox}{width=\linewidth,center}
\begin{tabular}{|l|l|l|r|r|r|} \hline
\begin{tabular}[c]{@{}l@{}}\textbf{Work} \\ \textbf{flow\#}\end{tabular} & \textbf{Description} & \textbf{Type of operators}& \begin{tabular}[c]{@{}l@{}}\textbf{\# of } \\ \textbf{operators} \end{tabular} &   \begin{tabular}[c]{@{}l@{}}\textbf{\# of } \\ \textbf{links} \end{tabular}  & \begin{tabular}[c]{@{}l@{}}\textbf{\# of} \\\textbf{versions} \end{tabular}  \\
\hline \hline
$W1$ & TPC-DS $Q40$ & \begin{tabular}[c]{@{}l@{}} 4 joins and 1 aggregate \\ operators \end{tabular} & $17$         & $16$ & $5$  \\ \hline

$W2$ & TPC-DS $Q18$ & \begin{tabular}[c]{@{}l@{}} 5 joins and 1 aggregate \\ operators \end{tabular} &    $20$      & $20$ & $9$  \\ \hline
$W3$ & TPC-DS $Q71$ & \begin{tabular}[c]{@{}l@{}} 1 replicate, 1 union, \\ 5 joins, and 1 aggregate \\ operators \end{tabular}
&      $23$        &  $23$ & $4$   \\ \hline
$W4$ & TPC-DS $Q33$  & \begin{tabular}[c]{@{}l@{}} 3 replicates, 1 union, \\ 9 joins, and 4 aggregates \\ operators \end{tabular}
&      $28$        & $34$ & $3$ \\ \hline

$W5$ & \begin{tabular}[c]{@{}l@{}}IMDB ratio of\\non-original to\\ original movie titles\end{tabular}  &   \begin{tabular}[c]{@{}l@{}} 1 replicate, 2 joins, \\ 2 aggregate operators \end{tabular} &   12     & 12 & 3 \\ \hline

$W6$ & \begin{tabular}[c]{@{}l@{}}IMDB all movies\\of directors with\\ certain criteria\end{tabular} & \begin{tabular}[c]{@{}l@{}} 2 replicates, 4 joins, \\ 2 unnest operators \end{tabular} &  18   & 20 & 3 \\ \hline
$W7$ & \begin{tabular}[c]{@{}l@{}}Tobacco Twitter \\analysis\end{tabular}  & \begin{tabular}[c]{@{}l@{}} 1 outer join, 1 aggregate, \\ classifier \end{tabular} &  14   & 13 &  3\\ \hline

$W8$ & \begin{tabular}[c]{@{}l@{}}Wildfire Twitter\\ analysis\end{tabular} & \begin{tabular}[c]{@{}l@{}}  1 join, 1 UDF\end{tabular} & 13  & 12 & 3 \\ \hline
\end{tabular}
\end{adjustbox}
\end{table}

\boldstart{Edit operations.}
For each real-world dataflow, we used the edits performed by the users. For each synthetic dataflow, we constructed versions by performing edit operations. We used two types of edit operations.

(1) Calcite transformation rules~\cite{calcitebenchmark:website} for equivalent pairs: These edits are common for rewriting and optimizing dataflows, so these edits would produce a version that is {\em equivalent} to the first version. For example, `testEmptyProject' is a single edit of adding an empty projection operator. In addition, `testPushProjectPastFilter' and `testPushFilterPastAgg' are two example edits that produce more than a single change, in particular, one for deleting an operator and another is for pushing it past other operator. We used a variation of different numbers of edits, different placements of the edits, etc., for each experiment. Thus, we have numbers of pairs as shown in Table~\ref{tbl:workload}. For each pair of versions, one of the versions is always the original one.

(2) TPC-DS V2.1~\cite{misc/tpcds} iterative edits for inequivalent pairs: These edits are common for exploratory and iterative analytics, so they may produce a version that is {\em not equivalent} to the first version. Example edits are adding a new filtering condition or changing the aggregate function as in TPC-DS queries. 
We constructed one version for each dataflow using two edit operations from this type of transformations to test our solution when the version pair is not equivalent.

We randomized the edits and their placements in the dataflow DAG, such that it is a valid edit.
Unless otherwise stated, we used two edit operations from Calcite in all of the experiments. Appendix~\ref{sec:appendix-b} shows a sample of these dataflows.

\boldstart{Implementation.}
We implemented the baseline (\sysName) and an optimized version (\sysOpt) in Java8 and Scala. We implemented {\sf Equitas}~\cite{journals/pvldb/ZhouANHX19} as the EV in Scala.
We implemented \sysOpt by including the optimization techniques presented in Section~\ref{sec:optimize}.
We evaluated the solution by comparing \sysName and \sysOpt against a state of the art verifier (Spes~\cite{journals/corr/abs-2004-00481}). We ran the experiments on a MacBook Pro running the MacOS Monterey operating system with a
2.2GHz Intel Core i7 CPU, 16GB DDR3 RAM, and 256GB SSD.
Every experiment was run three times.

\subsection{Comparisons with Other EVs}
To our best knowledge, \sysName is the first technique to verify the equivalence of complex dataflows. To evaluate its performance, we compared \sysName and \sysOpt against {\sf Spes}~\cite{journals/corr/abs-2004-00481}, known for its proficiency in verifying query equivalence compared to other solutions. We chose one equivalent pair and one inequivalent pair of versions with two edits from each dataflow. Among the $8$ dataflows examined, {\sf Spes}~\cite{journals/corr/abs-2004-00481} failed to verify the equivalence and inequivalence of any of the pairs, because all of the dataflow versions included operators not supported by {\sf Spes}. In contrast, \sysName and \sysOpt successfully verified the equivalence of $62\%$ ($W1, W2, W3, W5$) and $78\%$ ($W1 \ldots W6$), respectively, of the equivalent pairs. Both \sysName and \sysName did not verify the equivalence of the versions in $W7$ because none of the constructed decompositions were verified as equivalent by the EV. Moreover, \sysName and \sysName did not verify the equivalent pairs of $W8$ because the change to its versions was made on a UDF operator, resulting in the absence of a valid window that satisfies the EV's restrictions used in our experiments. \sysOpt was able to detect the inequivalence (using the heuristic discussed in Section~\ref{subsec:inequiv}) of about 50\% of the inequivalent pairs ($W5 \ldots W8$). We note that \sysName and \sysOpt can be made more powerful if we employ an EV that can reason the semantics of a UDF operator.
Table~\ref{tbl:comparison} summarizes the evaluation of the compared techniques.

\begin{table}[htbp]
\caption{Comparison evaluation of \sysName and \sysOpt against {\sf Spes}~\cite{journals/corr/abs-2004-00481}. \label{tbl:comparison}}
      \vspace{-0.15in}
\begin{adjustbox}{width=\linewidth,center}
\begin{tabular}{|l|r|r|r|r|} \hline
\begin{tabular}[c]{@{}l@{}}\textbf{Verifier}\end{tabular} & \begin{tabular}[c]{@{}l@{}}\textbf{\% of proved}\\ \textbf{equivalent} \\ \textbf{pairs}\end{tabular} & \textbf{Avg. time (s)}& \begin{tabular}[c]{@{}l@{}}\textbf{\% of proved} \\ \textbf{inequivalent} \\ \textbf{pairs} \end{tabular} &   \begin{tabular}[c]{@{}l@{}}\textbf{Avg. time (s)}\end{tabular} \\
\hline \hline
{\sf Spes} & $0.0$ & NA & $0.0$ & NA  \\ \hline

\sysName & $62.5$ & $32.1$ &    $0.0$      & $44.5$  \\ \hline

\sysOpt & $75.0$ & $0.1$
&      $50.0$        &  $4.1$    \\ \hline
\end{tabular}
\end{adjustbox}
\end{table}

\subsection{Evaluating \sysOpt Optimizations} 
\label{subsec:exp-opt}
We used dataflow $W3$ for evaluating the first three optimization techniques discussed in Section~\ref{sec:optimize}. We used three edit operations: one edit was after the Union operator (which is not supported by {\sf Equitas}~\cite{journals/pvldb/ZhouANHX19}) and two edits (pushFilterPastJoin) were before the Union.
We used the baseline to verify the equivalence of the pairs, and we tried different combinations of enabling the optimization techniques.
We want to know the effect of these optimization techniques on the performance of verifying the equivalence.

Table~\ref{tbl:drove-plus} shows the result of the experiments.
The worst performance was the baseline itself when all of the optimization techniques were disabled, resulting in a total of $19,656$ decompositions explored in $27$ minutes.
When only ``pruning'' was enabled, it was slower than all of the other combinations of enabling the techniques because it tested $108$ MCWs for possibility of pruning them. Its performance was better than the baseline thanks to the early termination, where it resulted in $3,614$ explored decompositions in $111$ seconds.
When ``segmentation'' was enabled, there were only two segments, and the total number of explored decompositions  was lower. In particular, when we combined ``segmentation'' and  ``ranking'', one of the segments had $8$ explored decompositions  while the other had $13$. If ``segmentation'' was enabled without ``ranking'', then the total number of explored decompositions was $430$, which was only $2\%$ of the number of explored decompositions when ``segmentation'' was not enabled. The time it took to construct the segmentation was negligible.
When ``ranking'' was enabled, the number of decompositions explored was around $21$. It took an average of $0.04$ seconds for exploring the decompositions and $0.40$ for testing the equivalence by calling the EV.
 Since the performance of enabling all of the optimization techniques was the best, in the remaining experiments we enabled all of them for \sysOpt.

\begin{table}[htbp]
\caption{Result of enabling optimizations (W3 with three edits). \textmd{ ``S'' indicates segmentation, ``P'' indicates pruning, and ``R'' indicates ranking. A \checkmark
means the optimization was enabled, a $\times$ means the optimization was disabled.The results are sorted based on the worst performance.} \label{tbl:drove-plus}}
\begin{adjustbox}{width=\linewidth,center}
\begin{tabular}{|c|c|c|r|r|r|r|} \hline
\textbf{S} & \textbf{P} &    \textbf{R} & \begin{tabular}[c]{@{}l@{}} \textbf{\# of decompositions} \\ \textbf{explored} 
\end{tabular} & \textbf{Exploration (s)} & \textbf{Calling EV (s)} & \textbf{Total time (s)} 
\\ 
\hline \hline
$\times$
&      $\times$        & $\times$   & $19,656$  & $1,629$ & $0.22$ & $1,629$                                                                              \\ \hline
$\times$
&      \checkmark        & $\times$   & $3,614$  & $111$ & $0.15$ & $111$                                                                              \\ \hline
\checkmark
&      \checkmark        & $\times$   & $430$  & $0.82$ & $0.20$ & $1.02$                                                                              \\ \hline
\checkmark
&      $\times$        & $\times$   & $430$  & $0.51$ & $0.18$ & $0.69$                                                                              \\ \hline
$\times$
&      \checkmark       & \checkmark  & $20$  & $0.39$ & $0.12$ & $0.52$                                                                              \\ \hline
\checkmark 
&      \checkmark        &  \checkmark   & $21$ & $0.20$ & $0.31$ & $0.51$                                                                               \\ \hline
$\times$ &    $\times$      & \checkmark & $20$ &  $0.07$ & $0.23$ & $0.30$                                                                                \\ \hline
\checkmark & $\times$         & \checkmark      & $21$ & $0.03$ & $0.21$ & $0.24$                                                               \\ \hline
\end{tabular}
\end{adjustbox}
\end{table}

\subsection{Verifying Two Versions with Multiple Edits}
We compared the performance of the baseline and \sysOpt. We want to know how much time each approach took to test the equivalence of the pair and how many decompositions each approach explored. We used dataflows $W1-W8$ with two edits. We used one equivalent pair and one inequivalent pair from each dataflow to evaluate the performance in these two cases.
Most dataflows in the experiment had one segment, except dataflows $W3$, $W5$, and $W6$, each of which has two segments. The overhead for each of the following steps, `is maximal' (line~\ref{algl:max}), `is valid' (line~\ref{algl:isvalid}), and `merge' (line~\ref{algl:merge}) in Algorithm~\ref{alg:decomp} was negligible, thus we only report the overhead of calling the EV.

\boldstart{Performance for verifying equivalent pairs.}
Figure~\ref{fig:decomposition-equiv} shows the number of decompositions explored by each approach. In general, the baseline explored more decompositions, with an average of $3,354$ compared to \sysOpt's average of $16$, which is less than 1\% of the baseline. The baseline was not able to finish testing the equivalence of $W3$ in less than an hour. The reason is because of the large number of neighboring windows that were caused by a large number of links in the dataflow.
\sysOpt was able to find a segmentation for $W3$ and $W6$. It was unable to discover a valid segmentation for $W5$ because all of its operators are supported by the EV, but we used the second approach of finding a segmentation as we discussed in Section~\ref{subsec:segment}. We note that the overhead of constructing a segmentation using the second approach was negligible. For dataflow $W7$, the size of the windows in a decomposition were small because the windows violated the restrictions of the used EV. Therefore, the ``expanding decompositions'' step stopped early and thus the search space (accordingly the running time) was small for both approaches. For dataflow $W8$, both \sysName and \sysOpt detected that the change was done on a non-supported operator (UDF) by the chosen EV (Equitas~\cite{journals/pvldb/ZhouANHX19}), thus the decomposition was not expanded to explore other ones and the algorithm terminated without verifying its equivalence. 

Figure~\ref{fig:performance-overhead} shows the running time for each approach to verify the equivalence. The baseline took $2$ seconds to verify the equivalence of $W1$, and $2$ minutes for verifying $W3$.
\sysOpt, on the other hand, had a running time of a sub-second in verifying the equivalence of all of the dataflows. \sysOpt tested 9 MCWs for a chance of pruning inequivalent decompositions when verifying $W6$. This caused the running time for verifying $W6$ to increase due to the overhead of calling the EV. In general, the overhead of calling the EV was about the same for both approaches. In particular, it took an average of $0.04$ and $0.10$ seconds for both the baseline and \sysOpt, respectively, to call the EV.

\begin{figure}[htb]
           \begin{subfigure}[t]{0.49\linewidth}
\includegraphics[width=\linewidth]{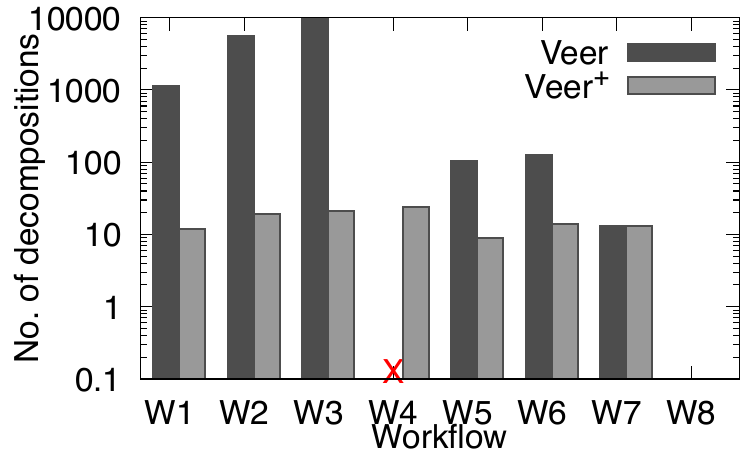}
         \caption{\# of explored decompositions.}
         \label{fig:decomposition-equiv}
         \end{subfigure}
         \hfill
         \begin{subfigure}[t]{0.49\linewidth}
\includegraphics[width=\linewidth]{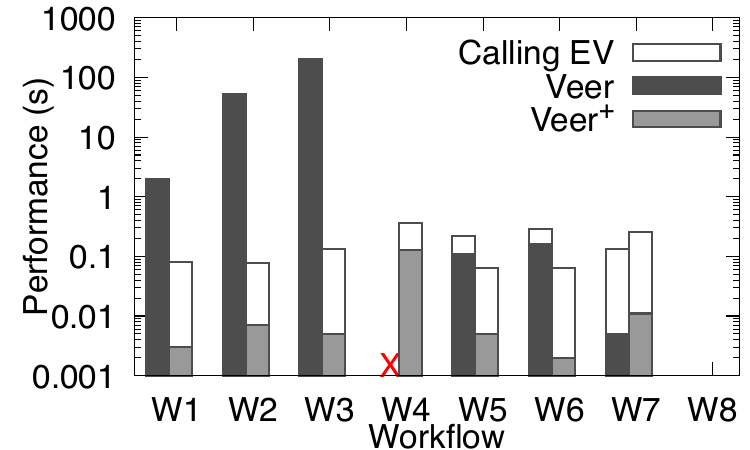}
         \caption{Running time.}
         \label{fig:performance-overhead}
\end{subfigure}
\caption{Comparison between the two algorithms for verifying equivalent pairs with two edits. \textmd{An ``$\times$'' means the algorithm was not able to finish running within one hour.}}
         \end{figure}

\boldstart{Performance of verifying inequivalent pairs.}
Figure~\ref{fig:decomposition-not} shows the number of decompositions explored by each approach. Since the pairs are not equivalent, \sysName almost exhaustively explored all of the possible decompositions, trying to find an equivalent one. \sysOpt explored fewer decompositions compared to the baseline when testing $W3$, thanks to the segmentation optimization. Both approaches were not able to finish testing $W4$ within one hour because of the large number of possible neighboring windows. \sysOpt was able to quickly detect the inequivalence of the pairs of dataflows $W5 \ldots W8$ thanks to the partial symbolic representation discussed in Scetion~\ref{subsec:inequiv}, resulting in \sysOpt not exploring any decompositions for these dataflows.

The result of the running time of each approach is shown in Figure~\ref{fig:performance-not}. \sysName's performance when verifying inequivalent pairs was the same as when verifying equivalent pairs because, in both cases, it explored the same number of decompositions. On the other hand, \sysOpt's running time was longer than when the pairs were equivalent for dataflows $W1 \ldots W4$. We observe that for $W1$, \sysOpt's running time was even longer than the baseline due to the overhead of calling the EV up to $130$, compared to only $4$ times for the baseline. \sysOpt called the EV more as it tried to continuously test MCWs when exploring a decomposition for a chance of pruning inequivalent decompositions. \sysOpt's performance on $W3$ was better than the baseline. The reason is that there were two segments, and each segment had a single change. We note that \sysOpt tested the equivalence of both segments, even though there could have been a chance of early termination if the inequivalent segment was tested first. The time it took \sysOpt to verify the inequivalence of the pairs in dataflows $W5 \ldots W8$ was negligible. The heuristic approach was not effective in detecting the inequivalence of the TPC-DS dataflows $W1 \ldots W4$. This limitation arises from the technique's reliance on identifying differences in the \textit{final} projected columns, which remained the same across all versions of these dataflows (due to the aggregation operator), with most changes occurring in the filtering conditions.

\begin{figure}[htb]
           \begin{subfigure}[t]{0.49\linewidth}
\includegraphics[width=\linewidth]{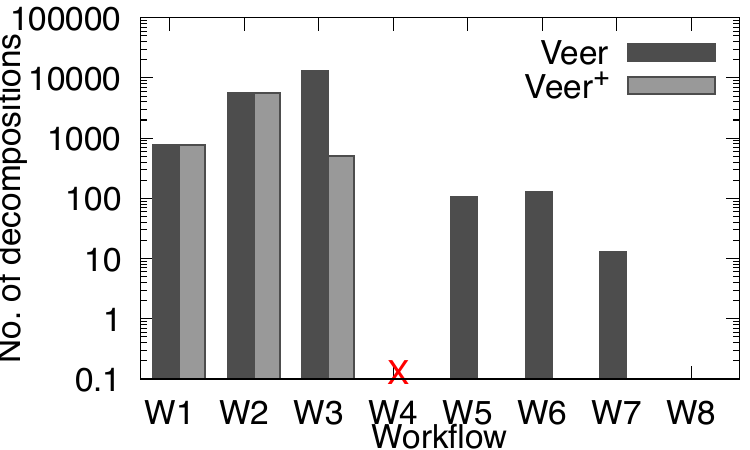}
         \caption{\# of explored decompositions.}
         \label{fig:decomposition-not}
         \end{subfigure}
         \hfill
         \begin{subfigure}[t]{0.49\linewidth}
\includegraphics[width=\linewidth]{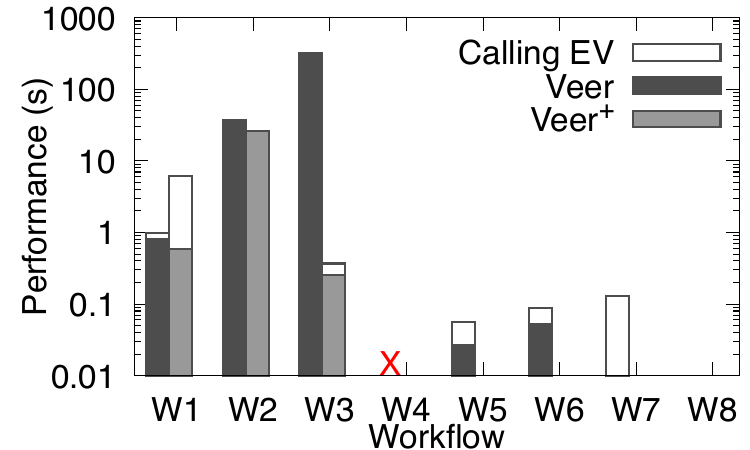}
         \caption{Running time.}
         \label{fig:performance-not}
\end{subfigure}
\caption{Comparison between the two algorithms for verifying inequivalent pairs with two edits. \textmd{An ``$\times$'' sign means the algorithm was not able to finish within an hour.}}
         \end{figure}

\subsection{Effect of the Distance Between Edits}
We evaluated the effect of the placement of changes on the performance of both approaches. We are particularly interested in how many decompositions would be explored and how long each approach would take if the changes were far apart or close together in the version DAG. We used $W2$ for the experiment with two edits. We use the `number of hops' to indicate how far apart the changes were from each other. A $0$ indicates that they were next to each other, and a $3$ indicates that they were separated by three operators between them. For a fair comparison, the operators that were separating the changes were one-to-one operators, i.e., operators with one input and one output links.

Figure~\ref{fig:hops-decomp} shows the number of decompositions explored by each approach. The baseline's number of decompositions increased from $2,770$ to $11,375$ as the number of hops increased. This is because it took longer for the two covering windows, one for each edit, to merge into a single one. Before the two covering windows merge, each one produces more decompositions to explore due to merging with its own neighbors. \sysOpt's number of explored decompositions remained the same at $21$ thanks to the ranking optimization, as once one covering window includes a neighboring window, its size is larger than the other covering window and would be explored first until both covering windows merge.

Figure~\ref{fig:hops-time} shows the time each approach took to verify the equivalence of a pair. The performance of each approach was proportional to the number of explored decompositions. The baseline took between $9.7$ seconds and $3$ minutes, while \sysOpt's performance remained in the sub-second range ($0.095$ seconds).

\boldstart{Effect of type of changed operators.} We note that when any of the changes were on an unsupported operator by the EV, then both \sysName and \sysOpt were not able to verify their equivalence. We also note that the running time to prove the pair's equivalence, was negligible because the exploration stops after detecting an `invalid' covering window.

\begin{figure}[htb]
         \begin{subfigure}[t]{0.49\linewidth}
\includegraphics[width=\linewidth]{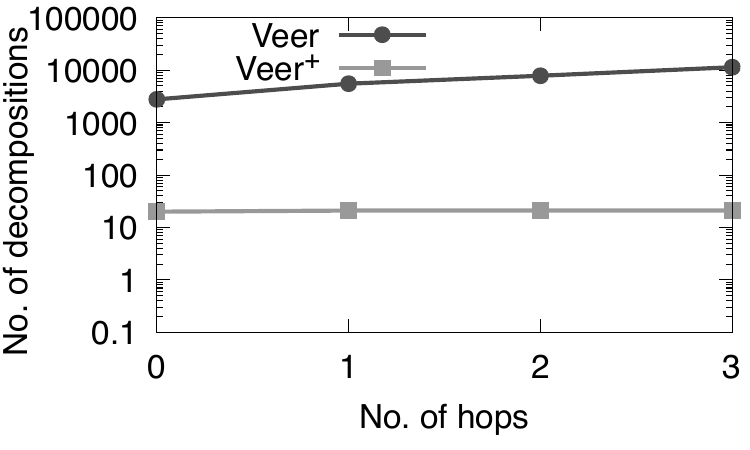}
         \caption{\# of explored decompositions.}
         \label{fig:hops-decomp}
\end{subfigure}
\hfill
           \begin{subfigure}[t]{0.49\linewidth}
\includegraphics[width=\linewidth]{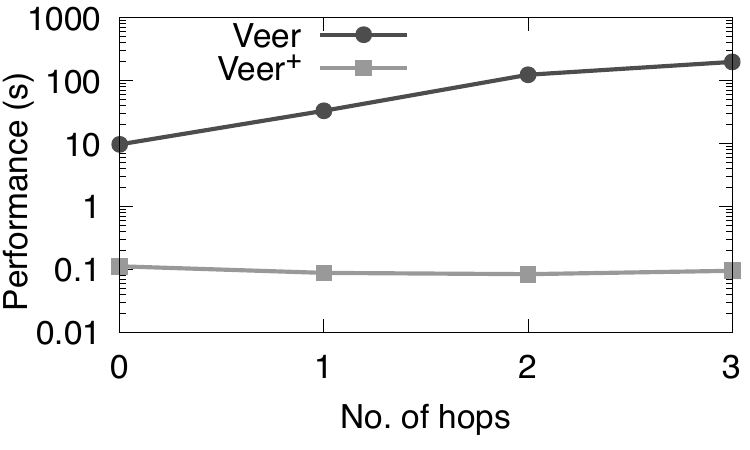}
         \caption{Running time.}
         \label{fig:hops-time}
         \end{subfigure}
         \caption{Effect of the distance between changes (W2 with two edits)}
         \end{figure}

\subsection{Effect of the Number of Changes}
In iterative data analytics, when the task is exploratory, there can be many changes between two consecutive versions. Once the analytical task is formulated, there are typically only minimal changes to refine some parameters~\cite{conf/sigmod/XuKAR22}.
We want to evaluate the effect of the number of changes on the number of decompositions and the time  each approach takes to verify a version pair. The number of changes, intuitively, increases the number of initial covering windows, and consequently, the possible different combinations of merging with neighboring windows increases. We used $W1$ in the experiment.

Figure~\ref{fig:changes-decomp} shows the number of decompositions explored by each approach and the total number of ``valid'' decompositions.The latter increased from $356$ to $11,448$ as we increased the number of changes from $1$ to $4$.
The baseline explored almost all those decompositions, with an average of $67\%$ of the total decompositions, in order to reach a maximal one. \sysOpt's number of explored decompositions, on the other hand, was not affected by the increase in the number of changes and remained the same at around $14$. The ranking optimization caused a larger window to be explored first, which sped up the merging of the separate covering windows, those that include the changes.

Figure~\ref{fig:changes} shows the time taken by each approach  to verify the equivalence of a pair. Both approaches' time was proportional to the number of explored decompositions. The baseline showed a performance of around $0.42$ seconds when there was a single change, up to slightly more than a minute at $75$ seconds when there were four changes. \sysOpt, on the other hand, maintained a sub-second performance with an average of $0.1$ seconds.

\begin{figure}[htb]
         \begin{subfigure}[t]{0.49\linewidth}
\includegraphics[width=\linewidth]{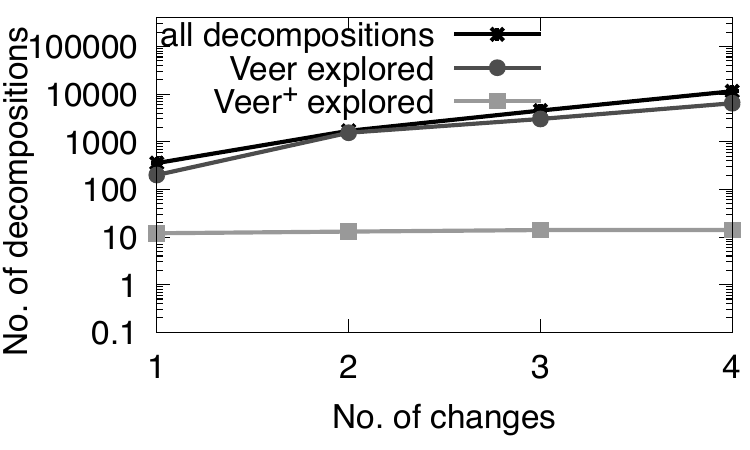}
         \caption{\# of explored decompositions.}
         \label{fig:changes-decomp}
\end{subfigure}
\hfill
           \begin{subfigure}[t]{0.49\linewidth}
\includegraphics[width=\linewidth]{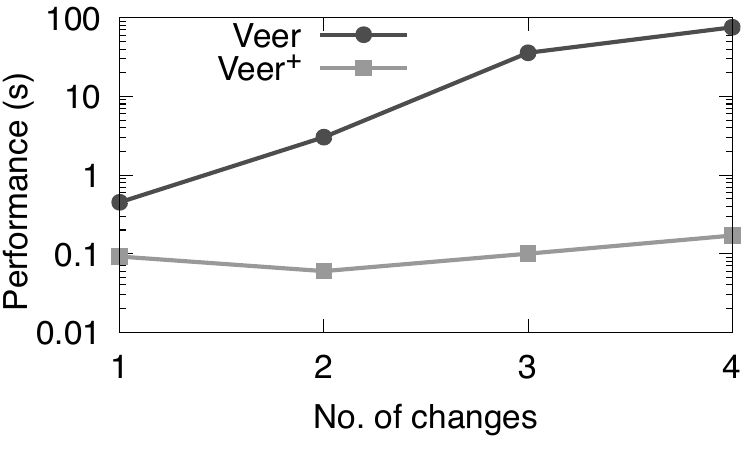}
         \caption{Running time.}
         \label{fig:changes}
         \end{subfigure}
         \caption{Effect of the number of changes (W1).}
         \end{figure}
         
 \subsection{Effect of the Number of Operators}
We evaluated the effect of the number of operators. We used $W2$ with two edits and varied the number of operators from $22$ to $25$. We varied the number of operators in two different ways.
One was varying the number of operators by including only those supported by the EV. These operators may be included in the covering windows, thus their neighbors would be considered during the decomposition exploration. The other type was varying the number of non-supported operators, as their inclusion in the dataflow DAG would not affect the performance of the algorithms. 

\boldstart{Varying the number of supported operators.}
Figure~\ref{fig:size-decomp} shows the number of explored decompositions. The baseline explored $6,650$ decompositions when there were $22$ operators, and $7,700$ decompositions when there were $25$ operators. \sysOpt had a linear increase in the number of explored decompositions from $21$ to $24$ when we increased the number of operators from $22$ to $25$.
Figure~\ref{fig:size-time} shows the results. We observed that the performance of \sysName was negatively affected due to the addition of possible decompositions from these operators' neighbors while the performance of \sysOpt remained the same. In particular, \sysName verified the pair from a minute up to $1.4$ minutes, while \sysOpt verified the pair in a sub-second.

\begin{figure}[htb]
           \begin{subfigure}[t]{0.49\linewidth}
\includegraphics[width=\linewidth]{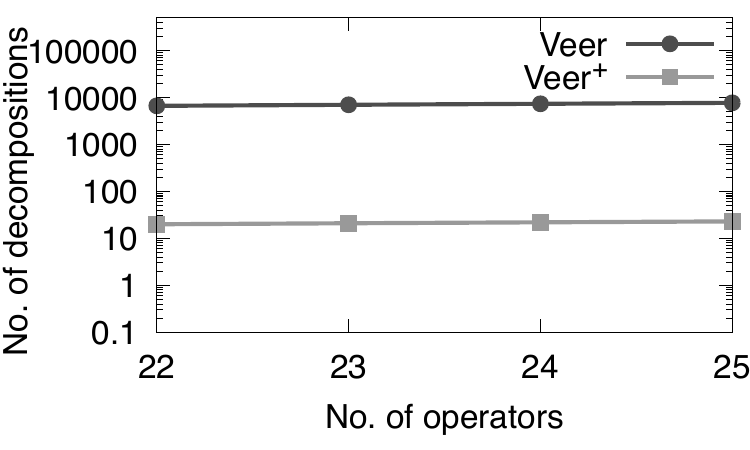}
         \caption{\# of explored decompositions.}
         \label{fig:size-decomp}
         \end{subfigure}
         \hfill
         \begin{subfigure}[t]{0.49\linewidth}
\includegraphics[width=\linewidth]{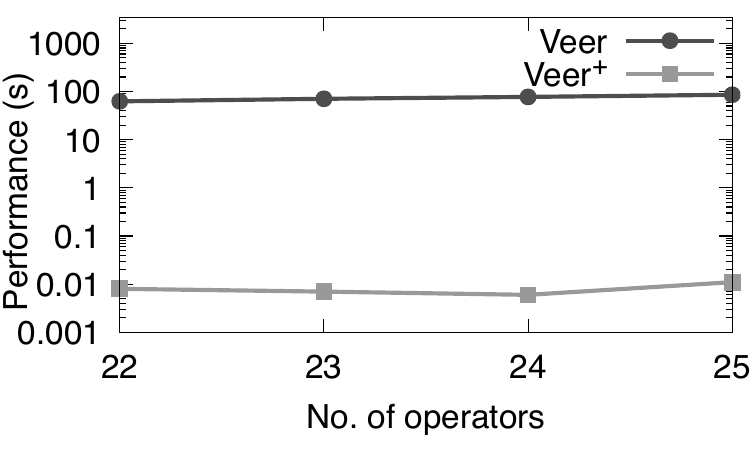}
         \caption{Running time.}
         \label{fig:size-time}
\end{subfigure}
\caption{Effect of the number of operators (W2 with two edits).}
         \end{figure}
                  
 \boldstart{Varying the number of unsupported operators.}
Both the baseline and \sysOpt were not affected by the increase in the number of unsupported operators as these operators were not included in the covering windows.

\sadeem{\subsection{Limitations of \sysName}
\sysName could verify complex dataflow version pairs that other verifiers were unable to verify. However, given the undecidability of the problem of determining the equivalence of two dataflow versions, there are cases where \sysName fails to verify due to the following reasons:

\textit{(1) Determinism and context:}
\sysName focuses on a small portion of the dataflow pair (windows) and ignores the context. It treats the input to the small windows as any instance of input sources. In some cases, these windows may be inequivalent but the entire pair is. Moreover, \sysName assumes that the input sources are not changing and that the operator functions are deterministic across different versions.

\textit{(2) Dependence on EV:}
\sysName is a general-purpose framework that internally depends on existing EVs. When the changes between the versions are performed in a UDF operator, then \sysName would need to rely on an EV that can reason about the semantics of a UDF.
We will address these limitations in a followup work~\cite{conf/hilda/AlsudaisK023}.
}
\section{Related Works}
\hspace{\parindent}

\emph{Equivalence verification.}
There are many studies to solve the problem of verifying the equivalence of two SQL queries under certain assumptions. These solutions were applicable to a small class of SQL queries, such as conjunctive queries~\cite{conf/stoc/ChandraM77,conf/edbt/AfratiLM04,journals/jacm/SagivY80,conf/pods/JayramKV06}. With the recent advancement of developing proof assists and solvers~\cite{conf/tacas/MouraB08,conf/cade/MouraKADR15}, there have been new solutions~\cite{journals/pvldb/ChuMRCS18,journals/pvldb/ZhouANHX19,journals/corr/abs-2004-00481}. UDP~\cite{journals/pvldb/ChuMRCS18} and WeTune's verifier~\cite{conf/sigmod/WangZYDHDT0022} use semirings to model the semantics of a pair of queries and use a proof assist, such as Lean~\cite{conf/cade/MouraKADR15} to prove if the expressions are equivalent. These two works support reasoning semantics of two queries with integrity constraints. Equitas~\cite{journals/pvldb/ZhouANHX19} and Spes~\cite{journals/corr/abs-2004-00481} model the semantics of the pair into a First-Order Logic (FOL) formula and push the formula to be solved by a solver such as SMT~\cite{conf/tacas/MouraB08}. These two works support queries with three-valued variables. Other works also use an SMT solver to verify the equivalence of a pair of Spark jobs~\cite{conf/cav/GrossmanCIRS17}. Our solution  uses them as black boxes to verify the equivalence of a version pair. 
The work in~\cite{journals/debu/Chandra022} finds a weighted edit distance based on the semantic equivalence of two queries to grade students queries.

\emph{Tracking dataflow executions.} There has been an increasing interest in enabling the reproducibility of data analytics pipelines. These tools track the evolution and versioning of datasets, models, and results.  At a high level they can be classified as two categories. The first includes those that track experiment results of different versions of  ML models and the corresponding hyper-parameters~\cite{conf/icse/ZhangXFWYX16, conf/sigmod/ChenCDD0HKMMNOP20, conf/sigmod/VartakSLVHMZ16, klaus_greff-proc-scipy-2017, conf/sigmod/GharibiWARL19, conf/icde/MiaoLDD17}.
The second includes solutions to track results of different versions of data processing dataflows~\cite{journals/debu/0001D18, conf/sc/WoodmanHWM11, databricks-tracking,conf/vldb/Alsudais22,journals/pvldb/PimentelMBF17,conf/icde/CallahanFFSSSV06}. 
These solutions are motivations for our work.

\emph{Materialization reuse and MQO.} 
There is a large body of work on answering data processing dataflows using views~\cite{journals/pvldb/RoyJGOGRMJ21,conf/cloud/RamjitIWN19,conf/sigmod/DursunBCK17,conf/sigmod/DerakhshanMKRM22,journals/pvldb/JindalKRP18}.
Some solutions~\cite{journals/pvldb/ElghandourA12} focus on deciding which results to store to maximize future reuse. Other solutions~\cite{conf/icde/NagelBV13, conf/sigmod/ZhouLFL07} focus on identifying materialization reuse opportunities by relying on finding an exact match of the dataflow's DAG. 
On the other hand, semantic query optimization works~\cite{journals/vldb/Halevy01, journals/tcs/FaginKMP05, conf/icdt/Schmidt0L10, journals/vldb/KossmannPN22,conf/edbt/ChaudharyZMK23} reason the semantics of the query to identify reuse opportunities that are not limited to structural matching. However, these solutions are applicable to a specific class of functions, such as user defined function (UDF)~\cite{conf/cloud/RamjitIWN19, conf/sigmod/XuKAR22,sparkudf:website}, and do not generalize to finding reuse opportunities by finding equivalence of any pair of dataflows.

\section{Conclusion}
In this paper, we studied the problem of verifying the equivalence of two dataflow versions. 
We presented a solution called ``\sysName,'' which leverages the fact that two workflow versions can be very similar except for a few changes. 
We analyzed the restrictions of existing EVs and presented a concept called a ``window'' to leverage the existing solutions for verifying the equivalence.
We proposed a solution using the windows to verify the equivalence of a version pair with a single edit.
We discussed the challenges of verifying a version pair with multiple edits and proposed a baseline algorithm.  We proposed optimization techniques to speed up the performance of the baseline.
We conducted a thorough experimental study and showed the high efficiency and effectiveness of the solution.

\balance

\begin{acks}
This work is supported by a graduate fellowship from King Saud
University and was supported by NSF award III 2107150.
\end{acks}


\bibliographystyle{ACM-Reference-Format}
\bibliography{localrefs,references}


\begin{thebibliography}{59}


\ifx \showCODEN    \undefined \def \showCODEN     #1{\unskip}     \fi
\ifx \showDOI      \undefined \def \showDOI       #1{#1}\fi
\ifx \showISBNx    \undefined \def \showISBNx     #1{\unskip}     \fi
\ifx \showISBNxiii \undefined \def \showISBNxiii  #1{\unskip}     \fi
\ifx \showISSN     \undefined \def \showISSN      #1{\unskip}     \fi
\ifx \showLCCN     \undefined \def \showLCCN      #1{\unskip}     \fi
\ifx \shownote     \undefined \def \shownote      #1{#1}          \fi
\ifx \showarticletitle \undefined \def \showarticletitle #1{#1}   \fi
\ifx \showURL      \undefined \def \showURL       {\relax}        \fi
\providecommand\bibfield[2]{#2}
\providecommand\bibinfo[2]{#2}
\providecommand\natexlab[1]{#1}
\providecommand\showeprint[2][]{arXiv:#2}

\bibitem[\protect\citeauthoryear{Abiteboul, Hull, and Vianu}{Abiteboul et~al\mbox{.}}{1995}]%
        {10.5555/551350}
\bibfield{author}{\bibinfo{person}{Serge Abiteboul}, \bibinfo{person}{Richard Hull}, {and} \bibinfo{person}{Victor Vianu}.} \bibinfo{year}{1995}\natexlab{}.
\newblock \bibinfo{booktitle}{\emph{Foundations of Databases: The Logical Level} (\bibinfo{edition}{1st} ed.)}.
\newblock \bibinfo{publisher}{Addison-Wesley Longman Publishing Co., Inc.}, \bibinfo{address}{USA}.
\newblock
\showISBNx{0201537710}


\bibitem[\protect\citeauthoryear{Afrati, Li, and Mitra}{Afrati et~al\mbox{.}}{2004}]%
        {conf/edbt/AfratiLM04}
\bibfield{author}{\bibinfo{person}{Foto~N. Afrati}, \bibinfo{person}{Chen Li}, {and} \bibinfo{person}{Prasenjit Mitra}.} \bibinfo{year}{2004}\natexlab{}.
\newblock \showarticletitle{On Containment of Conjunctive Queries with Arithmetic Comparisons}. In \bibinfo{booktitle}{\emph{EDBT}}. \bibinfo{pages}{459--476}.
\newblock


\bibitem[\protect\citeauthoryear{Ahmad, Kennedy, Koch, and Nikolic}{Ahmad et~al\mbox{.}}{2012}]%
        {journals/pvldb/AhmadKKN12}
\bibfield{author}{\bibinfo{person}{Yanif Ahmad}, \bibinfo{person}{Oliver Kennedy}, \bibinfo{person}{Christoph Koch}, {and} \bibinfo{person}{Milos Nikolic}.} \bibinfo{year}{2012}\natexlab{}.
\newblock \showarticletitle{DBToaster: Higher-order Delta Processing for Dynamic, Frequently Fresh Views}.
\newblock \bibinfo{journal}{\emph{Proc. {VLDB} Endow.}} \bibinfo{volume}{5}, \bibinfo{number}{10} (\bibinfo{year}{2012}), \bibinfo{pages}{968--979}.
\newblock
\urldef\tempurl%
\url{https://doi.org/10.14778/2336664.2336670}
\showDOI{\tempurl}


\bibitem[\protect\citeauthoryear{Alsudais}{Alsudais}{2022}]%
        {conf/vldb/Alsudais22}
\bibfield{author}{\bibinfo{person}{Sadeem Alsudais}.} \bibinfo{year}{2022}\natexlab{}.
\newblock \showarticletitle{Drove: Tracking Execution Results of Workflows on Large Data}. In \bibinfo{booktitle}{\emph{Proceedings of the {VLDB} 2022 PhD Workshop co-located with the 48th International Conference on Very Large Databases {(VLDB} 2022), Sydney, Australia, September 5, 2022}} \emph{(\bibinfo{series}{{CEUR} Workshop Proceedings})}, \bibfield{editor}{\bibinfo{person}{Zhifeng Bao} {and} \bibinfo{person}{Timos~K. Sellis}} (Eds.), Vol.~\bibinfo{volume}{3186}. \bibinfo{publisher}{CEUR-WS.org}.
\newblock
\urldef\tempurl%
\url{http://ceur-ws.org/Vol-3186/paper\_10.pdf}
\showURL{%
\tempurl}


\bibitem[\protect\citeauthoryear{Alsudais, Kumar, and Li}{Alsudais et~al\mbox{.}}{2023}]%
        {conf/hilda/AlsudaisK023}
\bibfield{author}{\bibinfo{person}{Sadeem Alsudais}, \bibinfo{person}{Avinash Kumar}, {and} \bibinfo{person}{Chen Li}.} \bibinfo{year}{2023}\natexlab{}.
\newblock \showarticletitle{Raven: Accelerating Execution of Iterative Data Analytics by Reusing Results of Previous Equivalent Versions}. In \bibinfo{booktitle}{\emph{Proceedings of the Workshop on Human-In-the-Loop Data Analytics, {HILDA} 2023, Seattle, WA, USA, 18 June 2023}}. \bibinfo{publisher}{{ACM}}, \bibinfo{pages}{3:1--3:7}.
\newblock
\urldef\tempurl%
\url{https://doi.org/10.1145/3597465.3605219}
\showDOI{\tempurl}


\bibitem[\protect\citeauthoryear{Alteryx}{Alteryx}{[n.d.]}]%
        {alteryx}

\newblock
\newblock
\shownote{Alteryx Website, \url{https://www.alteryx.com/}.}


\bibitem[\protect\citeauthoryear{APACHE Flink}{APACHE Flink}{[n.d.]}]%
        {misc/flink}

\newblock
\newblock
\shownote{Apache Flink http://flink.apache.org.}


\bibitem[\protect\citeauthoryear{Borralleras, Larraz, Rodr{\'{\i}}guez{-}Carbonell, Oliveras, and Rubio}{Borralleras et~al\mbox{.}}{2019}]%
        {journals/tocl/BorrallerasLROR19}
\bibfield{author}{\bibinfo{person}{Cristina Borralleras}, \bibinfo{person}{Daniel Larraz}, \bibinfo{person}{Enric Rodr{\'{\i}}guez{-}Carbonell}, \bibinfo{person}{Albert Oliveras}, {and} \bibinfo{person}{Albert Rubio}.} \bibinfo{year}{2019}\natexlab{}.
\newblock \showarticletitle{Incomplete {SMT} Techniques for Solving Non-Linear Formulas over the Integers}.
\newblock \bibinfo{journal}{\emph{{ACM} Trans. Comput. Log.}} \bibinfo{volume}{20}, \bibinfo{number}{4} (\bibinfo{year}{2019}), \bibinfo{pages}{25:1--25:36}.
\newblock
\urldef\tempurl%
\url{https://doi.org/10.1145/3340923}
\showDOI{\tempurl}


\bibitem[\protect\citeauthoryear{Calcite}{Calcite}{[n.d.]}]%
        {calcitebenchmark:website}

\newblock
\newblock
\shownote{Calcite benchmark, \url{https://github.com/uwdb/Cosette/tree/master/examples/calcite}.}


\bibitem[\protect\citeauthoryear{Callahan, Freire, Santos, Scheidegger, Silva, and Vo}{Callahan et~al\mbox{.}}{2006}]%
        {conf/icde/CallahanFFSSSV06}
\bibfield{author}{\bibinfo{person}{Steven~P. Callahan}, \bibinfo{person}{Juliana Freire}, \bibinfo{person}{Emanuele Santos}, \bibinfo{person}{Carlos~Eduardo Scheidegger}, \bibinfo{person}{Cl{\'{a}}udio~T. Silva}, {and} \bibinfo{person}{Huy~T. Vo}.} \bibinfo{year}{2006}\natexlab{}.
\newblock \showarticletitle{Managing the Evolution of Dataflows with VisTrails}. In \bibinfo{booktitle}{\emph{Proceedings of the 22nd International Conference on Data Engineering Workshops, {ICDE} 2006, 3-7 April 2006, Atlanta, GA, {USA}}}. \bibinfo{publisher}{{IEEE} Computer Society}, \bibinfo{pages}{71}.
\newblock
\urldef\tempurl%
\url{https://doi.org/10.1109/ICDEW.2006.75}
\showDOI{\tempurl}


\bibitem[\protect\citeauthoryear{Chandra and Merlin}{Chandra and Merlin}{1977}]%
        {conf/stoc/ChandraM77}
\bibfield{author}{\bibinfo{person}{Ashok~K. Chandra} {and} \bibinfo{person}{Philip~M. Merlin}.} \bibinfo{year}{1977}\natexlab{}.
\newblock \showarticletitle{Optimal Implementation of Conjunctive Queries in Relational Data Bases}. In \bibinfo{booktitle}{\emph{Proceedings of the 9th Annual {ACM} Symposium on Theory of Computing, May 4-6, 1977, Boulder, Colorado, {USA}}}, \bibfield{editor}{\bibinfo{person}{John~E. Hopcroft}, \bibinfo{person}{Emily~P. Friedman}, {and} \bibinfo{person}{Michael~A. Harrison}} (Eds.). \bibinfo{publisher}{{ACM}}, \bibinfo{pages}{77--90}.
\newblock
\urldef\tempurl%
\url{https://doi.org/10.1145/800105.803397}
\showDOI{\tempurl}


\bibitem[\protect\citeauthoryear{Chandra and Sudarshan}{Chandra and Sudarshan}{2022}]%
        {journals/debu/Chandra022}
\bibfield{author}{\bibinfo{person}{Bikash Chandra} {and} \bibinfo{person}{S. Sudarshan}.} \bibinfo{year}{2022}\natexlab{}.
\newblock \showarticletitle{Automated Grading of {SQL} Queries}.
\newblock \bibinfo{journal}{\emph{{IEEE} Data Eng. Bull.}} \bibinfo{volume}{45}, \bibinfo{number}{3} (\bibinfo{year}{2022}), \bibinfo{pages}{17--28}.
\newblock
\urldef\tempurl%
\url{http://sites.computer.org/debull/A22sept/p17.pdf}
\showURL{%
\tempurl}


\bibitem[\protect\citeauthoryear{Chaudhary, Zeuch, Markl, and Karimov}{Chaudhary et~al\mbox{.}}{2023}]%
        {conf/edbt/ChaudharyZMK23}
\bibfield{author}{\bibinfo{person}{Ankit Chaudhary}, \bibinfo{person}{Steffen Zeuch}, \bibinfo{person}{Volker Markl}, {and} \bibinfo{person}{Jeyhun Karimov}.} \bibinfo{year}{2023}\natexlab{}.
\newblock \showarticletitle{Incremental Stream Query Merging}. In \bibinfo{booktitle}{\emph{Proceedings 26th International Conference on Extending Database Technology, {EDBT} 2023, Ioannina, Greece, March 28-31, 2023}}. \bibinfo{publisher}{OpenProceedings.org}, \bibinfo{pages}{604--617}.
\newblock
\urldef\tempurl%
\url{https://doi.org/10.48786/EDBT.2023.51}
\showDOI{\tempurl}


\bibitem[\protect\citeauthoryear{Chen, Chow, Davidson, DCunha, Ghodsi, Hong, Konwinski, Mewald, Murching, Nykodym, Ogilvie, Parkhe, Singh, Xie, Zaharia, Zang, Zheng, and Zumar}{Chen et~al\mbox{.}}{2020}]%
        {conf/sigmod/ChenCDD0HKMMNOP20}
\bibfield{author}{\bibinfo{person}{Andrew Chen}, \bibinfo{person}{Andy Chow}, \bibinfo{person}{Aaron Davidson}, \bibinfo{person}{Arjun DCunha}, \bibinfo{person}{Ali Ghodsi}, \bibinfo{person}{Sue~Ann Hong}, \bibinfo{person}{Andy Konwinski}, \bibinfo{person}{Clemens Mewald}, \bibinfo{person}{Siddharth Murching}, \bibinfo{person}{Tomas Nykodym}, \bibinfo{person}{Paul Ogilvie}, \bibinfo{person}{Mani Parkhe}, \bibinfo{person}{Avesh Singh}, \bibinfo{person}{Fen Xie}, \bibinfo{person}{Matei Zaharia}, \bibinfo{person}{Richard Zang}, \bibinfo{person}{Juntai Zheng}, {and} \bibinfo{person}{Corey Zumar}.} \bibinfo{year}{2020}\natexlab{}.
\newblock \showarticletitle{Developments in MLflow: {A} System to Accelerate the Machine Learning Lifecycle}. In \bibinfo{booktitle}{\emph{DEEM@SIGMOD'20}}.
\newblock


\bibitem[\protect\citeauthoryear{Chu, Murphy, Roesch, Cheung, and Suciu}{Chu et~al\mbox{.}}{2018}]%
        {journals/pvldb/ChuMRCS18}
\bibfield{author}{\bibinfo{person}{Shumo Chu}, \bibinfo{person}{Brendan Murphy}, \bibinfo{person}{Jared Roesch}, \bibinfo{person}{Alvin Cheung}, {and} \bibinfo{person}{Dan Suciu}.} \bibinfo{year}{2018}\natexlab{}.
\newblock \showarticletitle{Axiomatic Foundations and Algorithms for Deciding Semantic Equivalences of {SQL} Queries}.
\newblock \bibinfo{journal}{\emph{{VLDB}'18}} (\bibinfo{year}{2018}).
\newblock


\bibitem[\protect\citeauthoryear{Chu, Wang, Weitz, and Cheung}{Chu et~al\mbox{.}}{2017}]%
        {conf/cidr/ChuWWC17}
\bibfield{author}{\bibinfo{person}{Shumo Chu}, \bibinfo{person}{Chenglong Wang}, \bibinfo{person}{Konstantin Weitz}, {and} \bibinfo{person}{Alvin Cheung}.} \bibinfo{year}{2017}\natexlab{}.
\newblock \showarticletitle{Cosette: An Automated Prover for {SQL}}. In \bibinfo{booktitle}{\emph{8th Biennial Conference on Innovative Data Systems Research, {CIDR} 2017, Chaminade, CA, USA, January 8-11, 2017, Online Proceedings}}. \bibinfo{publisher}{www.cidrdb.org}.
\newblock
\urldef\tempurl%
\url{http://cidrdb.org/cidr2017/papers/p51-chu-cidr17.pdf}
\showURL{%
\tempurl}


\bibitem[\protect\citeauthoryear{databricks}{databricks}{[n.d.]}]%
        {databricks-tracking}

\newblock
\newblock
\shownote{Databricks Data Science Website, \url{https://www.databricks.com/product/data-science}.}


\bibitem[\protect\citeauthoryear{de~Moura and Bj{\o}rner}{de~Moura and Bj{\o}rner}{2008}]%
        {conf/tacas/MouraB08}
\bibfield{author}{\bibinfo{person}{Leonardo~Mendon{\c{c}}a de Moura} {and} \bibinfo{person}{Nikolaj~S. Bj{\o}rner}.} \bibinfo{year}{2008}\natexlab{}.
\newblock \showarticletitle{{Z3:} An Efficient {SMT} Solver}. In \bibinfo{booktitle}{\emph{{TACAS}'08}}.
\newblock


\bibitem[\protect\citeauthoryear{de~Moura, Kong, Avigad, van Doorn, and von Raumer}{de~Moura et~al\mbox{.}}{2015}]%
        {conf/cade/MouraKADR15}
\bibfield{author}{\bibinfo{person}{Leonardo~Mendon{\c{c}}a de Moura}, \bibinfo{person}{Soonho Kong}, \bibinfo{person}{Jeremy Avigad}, \bibinfo{person}{Floris van Doorn}, {and} \bibinfo{person}{Jakob von Raumer}.} \bibinfo{year}{2015}\natexlab{}.
\newblock \showarticletitle{The Lean Theorem Prover (System Description)}. In \bibinfo{booktitle}{\emph{Automated Deduction - {CADE-25} - 25th International Conference on Automated Deduction, Berlin, Germany, August 1-7, 2015, Proceedings}} \emph{(\bibinfo{series}{Lecture Notes in Computer Science})}, \bibfield{editor}{\bibinfo{person}{Amy~P. Felty} {and} \bibinfo{person}{Aart Middeldorp}} (Eds.), Vol.~\bibinfo{volume}{9195}. \bibinfo{publisher}{Springer}, \bibinfo{pages}{378--388}.
\newblock
\urldef\tempurl%
\url{https://doi.org/10.1007/978-3-319-21401-6\_26}
\showDOI{\tempurl}


\bibitem[\protect\citeauthoryear{Derakhshan, Mahdiraji, Kaoudi, Rabl, and Markl}{Derakhshan et~al\mbox{.}}{2022}]%
        {conf/sigmod/DerakhshanMKRM22}
\bibfield{author}{\bibinfo{person}{Behrouz Derakhshan}, \bibinfo{person}{Alireza~Rezaei Mahdiraji}, \bibinfo{person}{Zoi Kaoudi}, \bibinfo{person}{Tilmann Rabl}, {and} \bibinfo{person}{Volker Markl}.} \bibinfo{year}{2022}\natexlab{}.
\newblock \showarticletitle{Materialization and Reuse Optimizations for Production Data Science Pipelines}. In \bibinfo{booktitle}{\emph{{SIGMOD} '22: International Conference on Management of Data, Philadelphia, PA, USA, June 12 - 17, 2022}}. \bibinfo{publisher}{{ACM}}, \bibinfo{pages}{1962--1976}.
\newblock
\urldef\tempurl%
\url{https://doi.org/10.1145/3514221.3526186}
\showDOI{\tempurl}


\bibitem[\protect\citeauthoryear{Dursun, Binnig, {\c{C}}etintemel, and Kraska}{Dursun et~al\mbox{.}}{2017}]%
        {conf/sigmod/DursunBCK17}
\bibfield{author}{\bibinfo{person}{Kayhan Dursun}, \bibinfo{person}{Carsten Binnig}, \bibinfo{person}{Ugur {\c{C}}etintemel}, {and} \bibinfo{person}{Tim Kraska}.} \bibinfo{year}{2017}\natexlab{}.
\newblock \showarticletitle{Revisiting Reuse in Main Memory Database Systems}. In \bibinfo{booktitle}{\emph{{SIGMOD}'17}}.
\newblock


\bibitem[\protect\citeauthoryear{Elghandour and Aboulnaga}{Elghandour and Aboulnaga}{2012}]%
        {journals/pvldb/ElghandourA12}
\bibfield{author}{\bibinfo{person}{Iman Elghandour} {and} \bibinfo{person}{Ashraf Aboulnaga}.} \bibinfo{year}{2012}\natexlab{}.
\newblock \showarticletitle{ReStore: Reusing Results of MapReduce Jobs}.
\newblock \bibinfo{journal}{\emph{{VLDB}'12}} (\bibinfo{year}{2012}).
\newblock


\bibitem[\protect\citeauthoryear{Fagin, Kolaitis, Miller, and Popa}{Fagin et~al\mbox{.}}{2005}]%
        {journals/tcs/FaginKMP05}
\bibfield{author}{\bibinfo{person}{Ronald Fagin}, \bibinfo{person}{Phokion~G. Kolaitis}, \bibinfo{person}{Ren{\'{e}}e~J. Miller}, {and} \bibinfo{person}{Lucian Popa}.} \bibinfo{year}{2005}\natexlab{}.
\newblock \showarticletitle{Data exchange: semantics and query answering}.
\newblock \bibinfo{journal}{\emph{Theor. Comput. Sci.}} \bibinfo{volume}{336}, \bibinfo{number}{1} (\bibinfo{year}{2005}), \bibinfo{pages}{89--124}.
\newblock
\urldef\tempurl%
\url{https://doi.org/10.1016/j.tcs.2004.10.033}
\showDOI{\tempurl}


\bibitem[\protect\citeauthoryear{Gharibi, Walunj, Alanazi, Rella, and Lee}{Gharibi et~al\mbox{.}}{2019}]%
        {conf/sigmod/GharibiWARL19}
\bibfield{author}{\bibinfo{person}{Gharib Gharibi}, \bibinfo{person}{Vijay Walunj}, \bibinfo{person}{Rakan Alanazi}, \bibinfo{person}{Sirisha Rella}, {and} \bibinfo{person}{Yugyung Lee}.} \bibinfo{year}{2019}\natexlab{}.
\newblock \showarticletitle{Automated Management of Deep Learning Experiments}. In \bibinfo{booktitle}{\emph{DEEM@SIGMOD'19}}.
\newblock


\bibitem[\protect\citeauthoryear{Grossman, Cohen, Itzhaky, Rinetzky, and Sagiv}{Grossman et~al\mbox{.}}{2017}]%
        {conf/cav/GrossmanCIRS17}
\bibfield{author}{\bibinfo{person}{Shelly Grossman}, \bibinfo{person}{Sara Cohen}, \bibinfo{person}{Shachar Itzhaky}, \bibinfo{person}{Noam Rinetzky}, {and} \bibinfo{person}{Mooly Sagiv}.} \bibinfo{year}{2017}\natexlab{}.
\newblock \showarticletitle{Verifying Equivalence of Spark Programs}. In \bibinfo{booktitle}{\emph{{CAV}'17}}.
\newblock


\bibitem[\protect\citeauthoryear{Halevy}{Halevy}{2001}]%
        {journals/vldb/Halevy01}
\bibfield{author}{\bibinfo{person}{Alon~Y. Halevy}.} \bibinfo{year}{2001}\natexlab{}.
\newblock \showarticletitle{Answering Queries Using Views: A Survey}.
\newblock \bibinfo{journal}{\emph{The VLDB Journal}} \bibinfo{volume}{10}, \bibinfo{number}{4} (\bibinfo{date}{Dec.} \bibinfo{year}{2001}), \bibinfo{pages}{270--294}.
\newblock
\showISSN{1066-8888}
\urldef\tempurl%
\url{https://doi.org/10.1007/s007780100054}
\showDOI{\tempurl}


\bibitem[\protect\citeauthoryear{IMDB Datasets Website}{IMDB Datasets Website}{[n.d.]}]%
        {imdbdataset:website}
\bibinfo{title}{IMDB Datasets Website}.
\newblock
\newblock
\urldef\tempurl%
\url{https://www.imdb.com/interfaces/}
\showURL{%
\tempurl}


\bibitem[\protect\citeauthoryear{IMDB Workload Website}{IMDB Workload Website}{[n.d.]}]%
        {imdbload:website}
\bibinfo{title}{IMDB Workload Website}.
\newblock
\newblock
\urldef\tempurl%
\url{https://github.com/juanmanubens/SQL-Advanced-Queries/blob/master/imdb.sql}
\showURL{%
\tempurl}


\bibitem[\protect\citeauthoryear{Jayram, Kolaitis, and Vee}{Jayram et~al\mbox{.}}{2006}]%
        {conf/pods/JayramKV06}
\bibfield{author}{\bibinfo{person}{T.~S. Jayram}, \bibinfo{person}{Phokion~G. Kolaitis}, {and} \bibinfo{person}{Erik Vee}.} \bibinfo{year}{2006}\natexlab{}.
\newblock \showarticletitle{The containment problem for {REAL} conjunctive queries with inequalities}. In \bibinfo{booktitle}{\emph{Proceedings of the Twenty-Fifth {ACM} {SIGACT-SIGMOD-SIGART} Symposium on Principles of Database Systems, June 26-28, 2006, Chicago, Illinois, {USA}}}, \bibfield{editor}{\bibinfo{person}{Stijn Vansummeren}} (Ed.). \bibinfo{publisher}{{ACM}}, \bibinfo{pages}{80--89}.
\newblock
\urldef\tempurl%
\url{https://doi.org/10.1145/1142351.1142363}
\showDOI{\tempurl}


\bibitem[\protect\citeauthoryear{Jindal, Karanasos, Rao, and Patel}{Jindal et~al\mbox{.}}{2018}]%
        {journals/pvldb/JindalKRP18}
\bibfield{author}{\bibinfo{person}{Alekh Jindal}, \bibinfo{person}{Konstantinos Karanasos}, \bibinfo{person}{Sriram Rao}, {and} \bibinfo{person}{Hiren Patel}.} \bibinfo{year}{2018}\natexlab{}.
\newblock \showarticletitle{Selecting Subexpressions to Materialize at Datacenter Scale}.
\newblock \bibinfo{journal}{\emph{Proc. {VLDB} Endow.}} \bibinfo{volume}{11}, \bibinfo{number}{7} (\bibinfo{year}{2018}), \bibinfo{pages}{800--812}.
\newblock
\urldef\tempurl%
\url{https://doi.org/10.14778/3192965.3192971}
\showDOI{\tempurl}


\bibitem[\protect\citeauthoryear{{K}laus {G}reff, {A}aron {K}lein, {M}artin {C}hovanec, {F}rank {H}utter, and {J}\"urgen {S}chmidhuber}{{K}laus {G}reff et~al\mbox{.}}{2017}]%
        {klaus_greff-proc-scipy-2017}
\bibfield{author}{\bibinfo{person}{{K}laus {G}reff}, \bibinfo{person}{{A}aron {K}lein}, \bibinfo{person}{{M}artin {C}hovanec}, \bibinfo{person}{{F}rank {H}utter}, {and} \bibinfo{person}{{J}\"urgen {S}chmidhuber}.} \bibinfo{year}{2017}\natexlab{}.
\newblock \showarticletitle{{T}he {S}acred {I}nfrastructure for {C}omputational {R}esearch}. In \bibinfo{booktitle}{\emph{SciPy'17}}.
\newblock


\bibitem[\protect\citeauthoryear{Knime Workflows Website}{Knime Workflows Website}{[n.d.]}]%
        {knimeworkflows:website}
\bibinfo{title}{Knime Workflows Website}.
\newblock
\newblock
\urldef\tempurl%
\url{https://hub.knime.com/search?type=Workflow&sort=maxKudos}
\showURL{%
\tempurl}


\bibitem[\protect\citeauthoryear{Kossmann, Papenbrock, and Naumann}{Kossmann et~al\mbox{.}}{2022}]%
        {journals/vldb/KossmannPN22}
\bibfield{author}{\bibinfo{person}{Jan Kossmann}, \bibinfo{person}{Thorsten Papenbrock}, {and} \bibinfo{person}{Felix Naumann}.} \bibinfo{year}{2022}\natexlab{}.
\newblock \showarticletitle{Data dependencies for query optimization: a survey}.
\newblock \bibinfo{journal}{\emph{{VLDB} J.}} \bibinfo{volume}{31}, \bibinfo{number}{1} (\bibinfo{year}{2022}), \bibinfo{pages}{1--22}.
\newblock
\urldef\tempurl%
\url{https://doi.org/10.1007/s00778-021-00676-3}
\showDOI{\tempurl}


\bibitem[\protect\citeauthoryear{Kumar, Wang, Ni, and Li}{Kumar et~al\mbox{.}}{2020}]%
        {journals/pvldb/KumarWNL20}
\bibfield{author}{\bibinfo{person}{Avinash Kumar}, \bibinfo{person}{Zuozhi Wang}, \bibinfo{person}{Shengquan Ni}, {and} \bibinfo{person}{Chen Li}.} \bibinfo{year}{2020}\natexlab{}.
\newblock \showarticletitle{Amber: {A} Debuggable Dataflow System Based on the Actor Model}.
\newblock \bibinfo{journal}{\emph{Proc. {VLDB} Endow.}} \bibinfo{volume}{13}, \bibinfo{number}{5} (\bibinfo{year}{2020}), \bibinfo{pages}{740--753}.
\newblock
\urldef\tempurl%
\url{https://doi.org/10.14778/3377369.3377381}
\showDOI{\tempurl}


\bibitem[\protect\citeauthoryear{LeFevre, Sankaranarayanan, Hacig{\"{u}}m{\"{u}}s, Tatemura, Polyzotis, and Carey}{LeFevre et~al\mbox{.}}{2014}]%
        {conf/sigmod/LeFevreSHTPC14}
\bibfield{author}{\bibinfo{person}{Jeff LeFevre}, \bibinfo{person}{Jagan Sankaranarayanan}, \bibinfo{person}{Hakan Hacig{\"{u}}m{\"{u}}s}, \bibinfo{person}{Jun'ichi Tatemura}, \bibinfo{person}{Neoklis Polyzotis}, {and} \bibinfo{person}{Michael~J. Carey}.} \bibinfo{year}{2014}\natexlab{}.
\newblock \showarticletitle{Opportunistic physical design for big data analytics}. In \bibinfo{booktitle}{\emph{International Conference on Management of Data, {SIGMOD} 2014, Snowbird, UT, USA, June 22-27, 2014}}, \bibfield{editor}{\bibinfo{person}{Curtis~E. Dyreson}, \bibinfo{person}{Feifei Li}, {and} \bibinfo{person}{M.~Tamer {\"{O}}zsu}} (Eds.). \bibinfo{publisher}{{ACM}}, \bibinfo{pages}{851--862}.
\newblock
\urldef\tempurl%
\url{https://doi.org/10.1145/2588555.2610512}
\showDOI{\tempurl}


\bibitem[\protect\citeauthoryear{Miao and Deshpande}{Miao and Deshpande}{2018}]%
        {journals/debu/0001D18}
\bibfield{author}{\bibinfo{person}{Hui Miao} {and} \bibinfo{person}{Amol Deshpande}.} \bibinfo{year}{2018}\natexlab{}.
\newblock \showarticletitle{ProvDB: Provenance-enabled Lifecycle Management of Collaborative Data Analysis Workflows}.
\newblock \bibinfo{journal}{\emph{{IEEE} Data Eng. Bull.}} (\bibinfo{year}{2018}).
\newblock


\bibitem[\protect\citeauthoryear{Miao, Li, Davis, and Deshpande}{Miao et~al\mbox{.}}{2017}]%
        {conf/icde/MiaoLDD17}
\bibfield{author}{\bibinfo{person}{Hui Miao}, \bibinfo{person}{Ang Li}, \bibinfo{person}{Larry~S. Davis}, {and} \bibinfo{person}{Amol Deshpande}.} \bibinfo{year}{2017}\natexlab{}.
\newblock \showarticletitle{Towards Unified Data and Lifecycle Management for Deep Learning}. In \bibinfo{booktitle}{\emph{ICDE'17}}.
\newblock


\bibitem[\protect\citeauthoryear{Nagel, Boncz, and Viglas}{Nagel et~al\mbox{.}}{2013}]%
        {conf/icde/NagelBV13}
\bibfield{author}{\bibinfo{person}{Fabian Nagel}, \bibinfo{person}{Peter~A. Boncz}, {and} \bibinfo{person}{Stratis Viglas}.} \bibinfo{year}{2013}\natexlab{}.
\newblock \showarticletitle{Recycling in pipelined query evaluation}. In \bibinfo{booktitle}{\emph{{ICDE}'13}}.
\newblock


\bibitem[\protect\citeauthoryear{{{NYC Taxi Data}}}{{{NYC Taxi Data}}}{[n.d.]}]%
        {nyc-taxi-data:website}

\newblock
\newblock
\shownote{\url{https://www1.nyc.gov/site/tlc/about/tlc-trip-record-data.page}.}


\bibitem[\protect\citeauthoryear{Optimizing Apache Spark UDFs Website}{Optimizing Apache Spark UDFs Website}{[n.d.]}]%
        {sparkudf:website}
\bibinfo{title}{Optimizing Apache Spark UDFs Website}.
\newblock
\newblock
\urldef\tempurl%
\url{https://www.databricks.com/session_eu20/optimizing-apache-spark-udfs}
\showURL{%
\tempurl}


\bibitem[\protect\citeauthoryear{Orange Data Mining Workflows}{Orange Data Mining Workflows}{[n.d.]}]%
        {orangeworkflows:website}
\bibinfo{title}{Orange Data Mining Workflows}.
\newblock
\newblock
\urldef\tempurl%
\url{https://orangedatamining.com/workflows/}
\showURL{%
\tempurl}


\bibitem[\protect\citeauthoryear{Perez and Jermaine}{Perez and Jermaine}{2014}]%
        {conf/icde/PerezJ14}
\bibfield{author}{\bibinfo{person}{Luis~Leopoldo Perez} {and} \bibinfo{person}{Christopher~M. Jermaine}.} \bibinfo{year}{2014}\natexlab{}.
\newblock \showarticletitle{History-aware query optimization with materialized intermediate views}. In \bibinfo{booktitle}{\emph{{IEEE} 30th International Conference on Data Engineering, Chicago, {ICDE} 2014, IL, USA, March 31 - April 4, 2014}}, \bibfield{editor}{\bibinfo{person}{Isabel~F. Cruz}, \bibinfo{person}{Elena Ferrari}, \bibinfo{person}{Yufei Tao}, \bibinfo{person}{Elisa Bertino}, {and} \bibinfo{person}{Goce Trajcevski}} (Eds.). \bibinfo{publisher}{{IEEE} Computer Society}, \bibinfo{pages}{520--531}.
\newblock
\urldef\tempurl%
\url{https://doi.org/10.1109/ICDE.2014.6816678}
\showDOI{\tempurl}


\bibitem[\protect\citeauthoryear{Pimentel, Murta, Braganholo, and Freire}{Pimentel et~al\mbox{.}}{2017}]%
        {journals/pvldb/PimentelMBF17}
\bibfield{author}{\bibinfo{person}{Jo{\~{a}}o~Felipe Pimentel}, \bibinfo{person}{Leonardo Murta}, \bibinfo{person}{Vanessa Braganholo}, {and} \bibinfo{person}{Juliana Freire}.} \bibinfo{year}{2017}\natexlab{}.
\newblock \showarticletitle{noWorkflow: a Tool for Collecting, Analyzing, and Managing Provenance from Python Scripts}.
\newblock \bibinfo{journal}{\emph{VLDB}} (\bibinfo{year}{2017}).
\newblock


\bibitem[\protect\citeauthoryear{Ramjit, Interlandi, Wu, and Netravali}{Ramjit et~al\mbox{.}}{2019}]%
        {conf/cloud/RamjitIWN19}
\bibfield{author}{\bibinfo{person}{Lana Ramjit}, \bibinfo{person}{Matteo Interlandi}, \bibinfo{person}{Eugene Wu}, {and} \bibinfo{person}{Ravi Netravali}.} \bibinfo{year}{2019}\natexlab{}.
\newblock \showarticletitle{Acorn: Aggressive Result Caching in Distributed Data Processing Frameworks}. In \bibinfo{booktitle}{\emph{Proceedings of the {ACM} Symposium on Cloud Computing, SoCC 2019, Santa Cruz, CA, USA, November 20-23, 2019}}. \bibinfo{publisher}{{ACM}}, \bibinfo{pages}{206--219}.
\newblock
\urldef\tempurl%
\url{https://doi.org/10.1145/3357223.3362702}
\showDOI{\tempurl}


\bibitem[\protect\citeauthoryear{Riesen, Emmenegger, and Bunke}{Riesen et~al\mbox{.}}{2013}]%
        {conf/gbrpr/RiesenEB13}
\bibfield{author}{\bibinfo{person}{Kaspar Riesen}, \bibinfo{person}{Sandro Emmenegger}, {and} \bibinfo{person}{Horst Bunke}.} \bibinfo{year}{2013}\natexlab{}.
\newblock \showarticletitle{A Novel Software Toolkit for Graph Edit Distance Computation}. In \bibinfo{booktitle}{\emph{Graph-Based Representations in Pattern Recognition - 9th {IAPR-TC-15} International Workshop, GbRPR 2013, Vienna, Austria, May 15-17, 2013. Proceedings}} \emph{(\bibinfo{series}{Lecture Notes in Computer Science})}, \bibfield{editor}{\bibinfo{person}{Walter~G. Kropatsch}, \bibinfo{person}{Nicole~M. Artner}, \bibinfo{person}{Yll Haxhimusa}, {and} \bibinfo{person}{Xiaoyi Jiang}} (Eds.), Vol.~\bibinfo{volume}{7877}. \bibinfo{publisher}{Springer}, \bibinfo{pages}{142--151}.
\newblock
\urldef\tempurl%
\url{https://doi.org/10.1007/978-3-642-38221-5\_15}
\showDOI{\tempurl}


\bibitem[\protect\citeauthoryear{Roy, Jindal, Gomatam, Ouyang, Gosalia, Ravi, Mann, and Jain}{Roy et~al\mbox{.}}{2021}]%
        {journals/pvldb/RoyJGOGRMJ21}
\bibfield{author}{\bibinfo{person}{Abhishek Roy}, \bibinfo{person}{Alekh Jindal}, \bibinfo{person}{Priyanka Gomatam}, \bibinfo{person}{Xiating Ouyang}, \bibinfo{person}{Ashit Gosalia}, \bibinfo{person}{Nishkam Ravi}, \bibinfo{person}{Swinky Mann}, {and} \bibinfo{person}{Prakhar Jain}.} \bibinfo{year}{2021}\natexlab{}.
\newblock \showarticletitle{SparkCruise: Workload Optimization in Managed Spark Clusters at Microsoft}.
\newblock \bibinfo{journal}{\emph{Proc. {VLDB} Endow.}} \bibinfo{volume}{14}, \bibinfo{number}{12} (\bibinfo{year}{2021}), \bibinfo{pages}{3122--3134}.
\newblock
\urldef\tempurl%
\url{https://doi.org/10.14778/3476311.3476388}
\showDOI{\tempurl}


\bibitem[\protect\citeauthoryear{Sagiv and Yannakakis}{Sagiv and Yannakakis}{1980}]%
        {journals/jacm/SagivY80}
\bibfield{author}{\bibinfo{person}{Yehoshua Sagiv} {and} \bibinfo{person}{Mihalis Yannakakis}.} \bibinfo{year}{1980}\natexlab{}.
\newblock \showarticletitle{Equivalences Among Relational Expressions with the Union and Difference Operators}.
\newblock \bibinfo{journal}{\emph{J. {ACM}}} \bibinfo{volume}{27}, \bibinfo{number}{4} (\bibinfo{year}{1980}), \bibinfo{pages}{633--655}.
\newblock
\urldef\tempurl%
\url{https://doi.org/10.1145/322217.322221}
\showDOI{\tempurl}


\bibitem[\protect\citeauthoryear{Schmidt, Meier, and Lausen}{Schmidt et~al\mbox{.}}{2010}]%
        {conf/icdt/Schmidt0L10}
\bibfield{author}{\bibinfo{person}{Michael Schmidt}, \bibinfo{person}{Michael Meier}, {and} \bibinfo{person}{Georg Lausen}.} \bibinfo{year}{2010}\natexlab{}.
\newblock \showarticletitle{Foundations of {SPARQL} query optimization}. In \bibinfo{booktitle}{\emph{Database Theory - {ICDT} 2010, 13th International Conference, Lausanne, Switzerland, March 23-25, 2010, Proceedings}} \emph{(\bibinfo{series}{{ACM} International Conference Proceeding Series})}, \bibfield{editor}{\bibinfo{person}{Luc Segoufin}} (Ed.). \bibinfo{publisher}{{ACM}}, \bibinfo{pages}{4--33}.
\newblock
\urldef\tempurl%
\url{https://doi.org/10.1145/1804669.1804675}
\showDOI{\tempurl}


\bibitem[\protect\citeauthoryear{Texera}{Texera}{[n.d.]}]%
        {texera}

\newblock
\newblock
\shownote{Texera Website, \url{https://github.com/Texera/texera}.}


\bibitem[\protect\citeauthoryear{TPC-DS}{TPC-DS}{[n.d.]}]%
        {misc/tpcds}

\newblock
\newblock
\shownote{TPC-DS http://www.tpc.org/tpcds/.}


\bibitem[\protect\citeauthoryear{Twitter API v1.1}{Twitter API v1.1}{[n.d.]}]%
        {twitter-api}
\bibinfo{title}{Twitter API v1.1}.
\newblock
\newblock
\urldef\tempurl%
\url{https://developer.twitter.com/en/docs/twitter-api/v1/tweets/filter-realtime/overview}
\showURL{%
\tempurl}


\bibitem[\protect\citeauthoryear{Vartak, Subramanyam, Lee, Viswanathan, Husnoo, Madden, and Zaharia}{Vartak et~al\mbox{.}}{2016}]%
        {conf/sigmod/VartakSLVHMZ16}
\bibfield{author}{\bibinfo{person}{Manasi Vartak}, \bibinfo{person}{Harihar Subramanyam}, \bibinfo{person}{Wei{-}En Lee}, \bibinfo{person}{Srinidhi Viswanathan}, \bibinfo{person}{Saadiyah Husnoo}, \bibinfo{person}{Samuel Madden}, {and} \bibinfo{person}{Matei Zaharia}.} \bibinfo{year}{2016}\natexlab{}.
\newblock \showarticletitle{ModelDB: a system for machine learning model management}. In \bibinfo{booktitle}{\emph{HILDA@SIGMOD'16}}.
\newblock


\bibitem[\protect\citeauthoryear{Wang, Zhou, Yang, Ding, Hu, Ding, Tang, Chen, and Li}{Wang et~al\mbox{.}}{2022}]%
        {conf/sigmod/WangZYDHDT0022}
\bibfield{author}{\bibinfo{person}{Zhaoguo Wang}, \bibinfo{person}{Zhou Zhou}, \bibinfo{person}{Yicun Yang}, \bibinfo{person}{Haoran Ding}, \bibinfo{person}{Gansen Hu}, \bibinfo{person}{Ding Ding}, \bibinfo{person}{Chuzhe Tang}, \bibinfo{person}{Haibo Chen}, {and} \bibinfo{person}{Jinyang Li}.} \bibinfo{year}{2022}\natexlab{}.
\newblock \showarticletitle{WeTune: Automatic Discovery and Verification of Query Rewrite Rules}. In \bibinfo{booktitle}{\emph{{SIGMOD} '22: International Conference on Management of Data, Philadelphia, PA, USA, June 12 - 17, 2022}}. \bibinfo{publisher}{{ACM}}, \bibinfo{pages}{94--107}.
\newblock
\urldef\tempurl%
\url{https://doi.org/10.1145/3514221.3526125}
\showDOI{\tempurl}


\bibitem[\protect\citeauthoryear{Woodman, Hiden, Watson, and Missier}{Woodman et~al\mbox{.}}{2011}]%
        {conf/sc/WoodmanHWM11}
\bibfield{author}{\bibinfo{person}{Simon Woodman}, \bibinfo{person}{Hugo Hiden}, \bibinfo{person}{Paul Watson}, {and} \bibinfo{person}{Paolo Missier}.} \bibinfo{year}{2011}\natexlab{}.
\newblock \showarticletitle{Achieving reproducibility by combining provenance with service and workflow versioning}. In \bibinfo{booktitle}{\emph{WORKS'11}}.
\newblock


\bibitem[\protect\citeauthoryear{Xu, Kakkar, Arulraj, and Ramachandran}{Xu et~al\mbox{.}}{2022}]%
        {conf/sigmod/XuKAR22}
\bibfield{author}{\bibinfo{person}{Zhuangdi Xu}, \bibinfo{person}{Gaurav~Tarlok Kakkar}, \bibinfo{person}{Joy Arulraj}, {and} \bibinfo{person}{Umakishore Ramachandran}.} \bibinfo{year}{2022}\natexlab{}.
\newblock \showarticletitle{{EVA:} {A} Symbolic Approach to Accelerating Exploratory Video Analytics with Materialized Views}. In \bibinfo{booktitle}{\emph{{SIGMOD} '22: International Conference on Management of Data, Philadelphia, PA, USA, June 12 - 17, 2022}}, \bibfield{editor}{\bibinfo{person}{Zachary Ives}, \bibinfo{person}{Angela Bonifati}, {and} \bibinfo{person}{Amr~El Abbadi}} (Eds.). \bibinfo{publisher}{{ACM}}, \bibinfo{pages}{602--616}.
\newblock
\urldef\tempurl%
\url{https://doi.org/10.1145/3514221.3526142}
\showDOI{\tempurl}


\bibitem[\protect\citeauthoryear{Zhang, Xu, Frise, Wu, Yu, and Xu}{Zhang et~al\mbox{.}}{2016}]%
        {conf/icse/ZhangXFWYX16}
\bibfield{author}{\bibinfo{person}{Yang Zhang}, \bibinfo{person}{Fangzhou Xu}, \bibinfo{person}{Erwin Frise}, \bibinfo{person}{Siqi Wu}, \bibinfo{person}{Bin Yu}, {and} \bibinfo{person}{Wei Xu}.} \bibinfo{year}{2016}\natexlab{}.
\newblock \showarticletitle{DataLab: a version data management and analytics system}. In \bibinfo{booktitle}{\emph{BIGDSE@ICSE'16}}.
\newblock


\bibitem[\protect\citeauthoryear{Zhou, Larson, Freytag, and Lehner}{Zhou et~al\mbox{.}}{2007}]%
        {conf/sigmod/ZhouLFL07}
\bibfield{author}{\bibinfo{person}{Jingren Zhou}, \bibinfo{person}{Per{-}{\AA}ke Larson}, \bibinfo{person}{Johann~Christoph Freytag}, {and} \bibinfo{person}{Wolfgang Lehner}.} \bibinfo{year}{2007}\natexlab{}.
\newblock \showarticletitle{Efficient exploitation of similar subexpressions for query processing}. In \bibinfo{booktitle}{\emph{{SIGMOD}'07}}.
\newblock


\bibitem[\protect\citeauthoryear{Zhou, Arulraj, Navathe, Harris, and Wu}{Zhou et~al\mbox{.}}{2022}]%
        {journals/corr/abs-2004-00481}
\bibfield{author}{\bibinfo{person}{Qi Zhou}, \bibinfo{person}{Joy Arulraj}, \bibinfo{person}{Shamkant~B. Navathe}, \bibinfo{person}{William Harris}, {and} \bibinfo{person}{Jinpeng Wu}.} \bibinfo{year}{2022}\natexlab{}.
\newblock \showarticletitle{{SPES:} {A} Symbolic Approach to Proving Query Equivalence Under Bag Semantics}.
\newblock  (\bibinfo{year}{2022}), \bibinfo{pages}{2735--2748}.
\newblock
\urldef\tempurl%
\url{https://doi.org/10.1109/ICDE53745.2022.00250}
\showDOI{\tempurl}


\bibitem[\protect\citeauthoryear{Zhou, Arulraj, Navathe, Harris, and Xu}{Zhou et~al\mbox{.}}{2019}]%
        {journals/pvldb/ZhouANHX19}
\bibfield{author}{\bibinfo{person}{Qi Zhou}, \bibinfo{person}{Joy Arulraj}, \bibinfo{person}{Shamkant~B. Navathe}, \bibinfo{person}{William Harris}, {and} \bibinfo{person}{Dong Xu}.} \bibinfo{year}{2019}\natexlab{}.
\newblock \showarticletitle{Automated Verification of Query Equivalence Using Satisfiability Modulo Theories}.
\newblock \bibinfo{journal}{\emph{{VLDB}'19}} (\bibinfo{year}{2019}).
\newblock


\end{thebibliography}

\appendix

\section{Sample of the Workflows used in the Experiments} \label{sec:appendix-b}
The following Figure~\ref{fig:workflow-stats} shows the details of a real workload collected from one deployment of Texera~\cite{texera}.

 \begin{figure*}[htbp]
         \centering
\includegraphics[width=\linewidth]{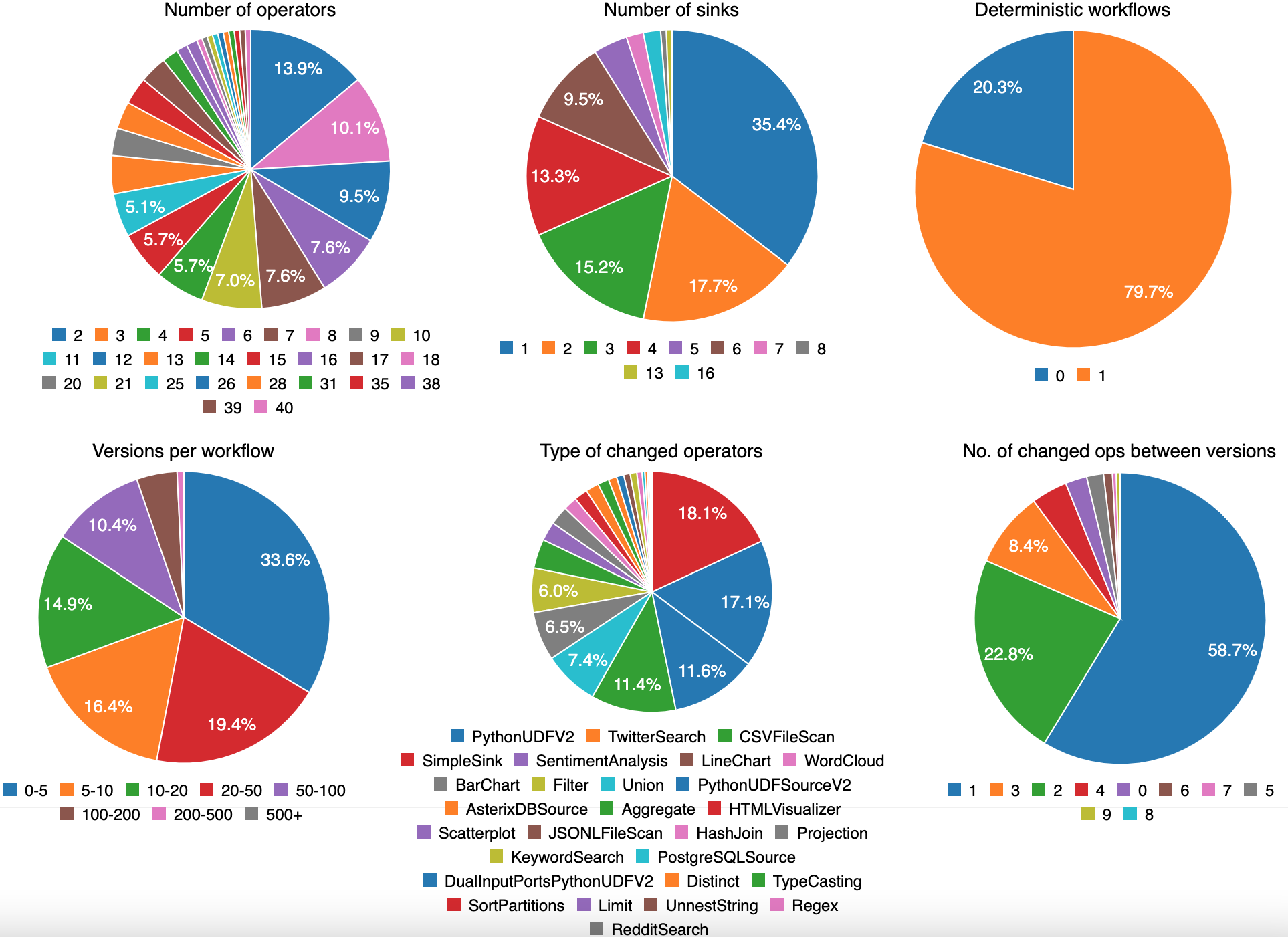}
  \caption{Details of workflow from one real workload}
           \label{fig:workflow-stats}
\end{figure*}

Figures~\ref{fig:wkflow}-~\ref{fig:wkflow6} show samples of the initial version of the workflows used in the experiments. The UDF operator included in the transformed TPC-DS queries to Texera workflows, is to include the logic of Order By, because Sort operator was not part of the Texera system at the time the workflow was constructed. The workflows include in some cases a sequence of filter operators, because at the time the workflows were constructed, Texera did not support the {\sf AND} operation to join multiple predicate conditions. 

   \begin{figure*}[htbp]
         \centering
\includegraphics[width=\linewidth]{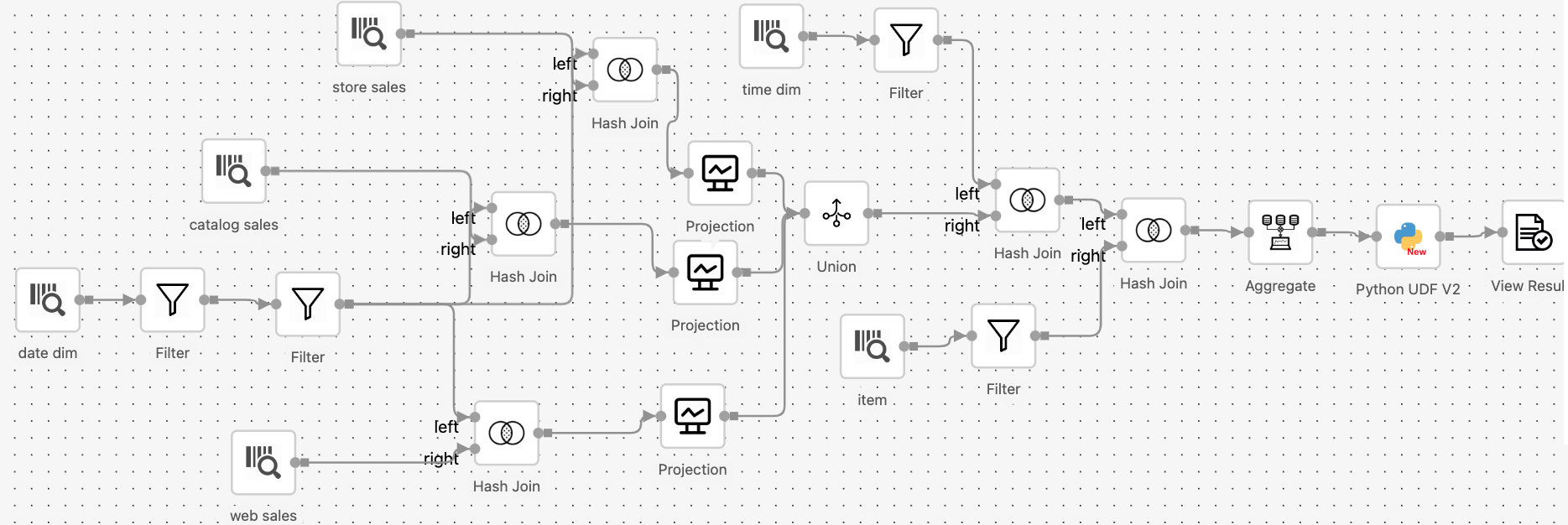}
  \caption{Sample workflow of TPC-DS Q71 in Texera.}
           \label{fig:wkflow}
\end{figure*}

   \begin{figure*}[htbp]
         \centering
\includegraphics[width=\linewidth]{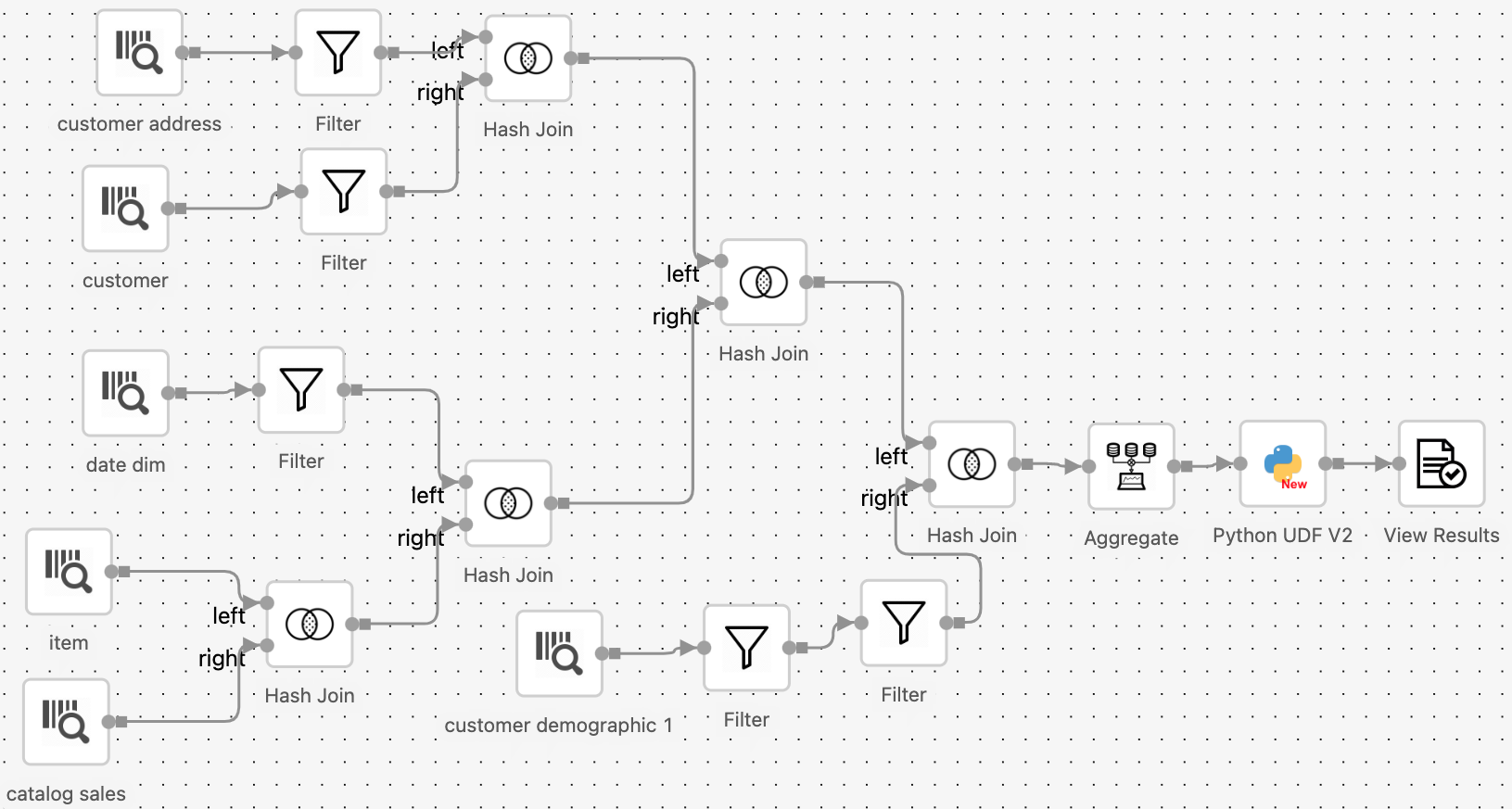}
  \caption{Sample workflow of TPC-DS Q18 in Texera.}
           \label{fig:wkflow1}
\end{figure*}

   \begin{figure*}[htbp]
         \centering
\includegraphics[width=\linewidth]{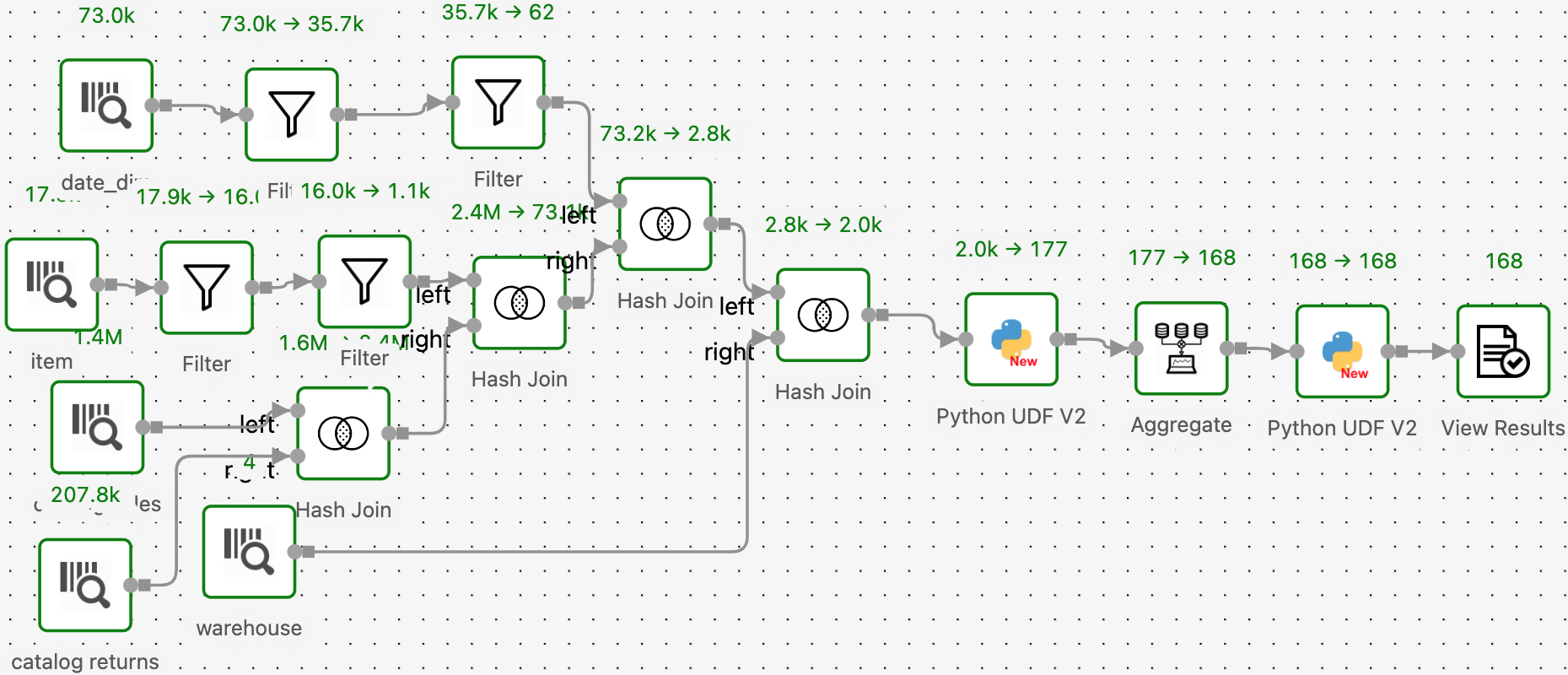}
  \caption{Sample workflow of TPC-DS Q40 in Texera.}
           \label{fig:wkflow2}
\end{figure*}

   \begin{figure*}[htbp]
         \centering
\includegraphics[width=\linewidth]{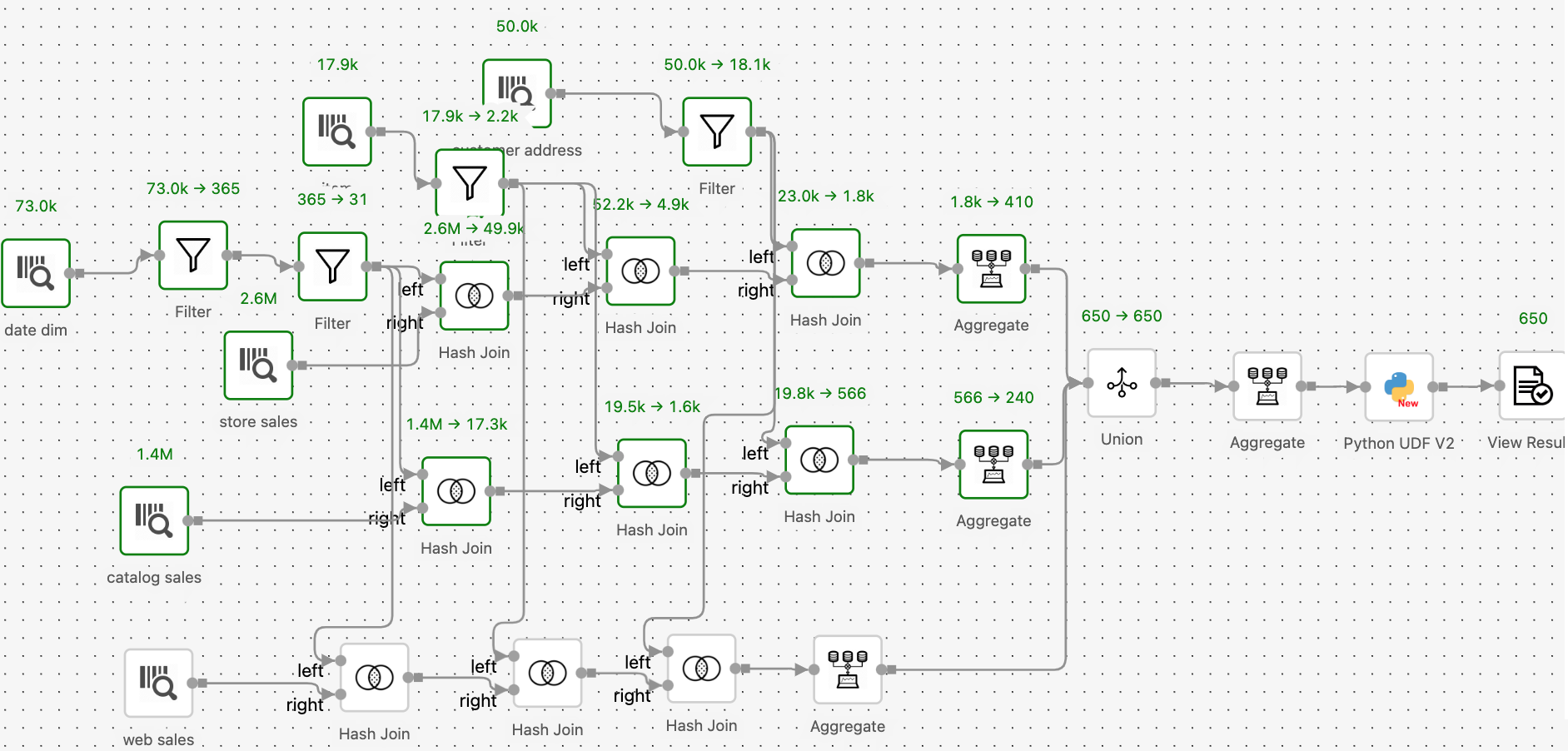}
  \caption{Sample workflow of TPC-DS Q33 in Texera.}
           \label{fig:wkflow3}
\end{figure*}

   \begin{figure*}[htbp]
         \centering
\includegraphics[width=\linewidth]{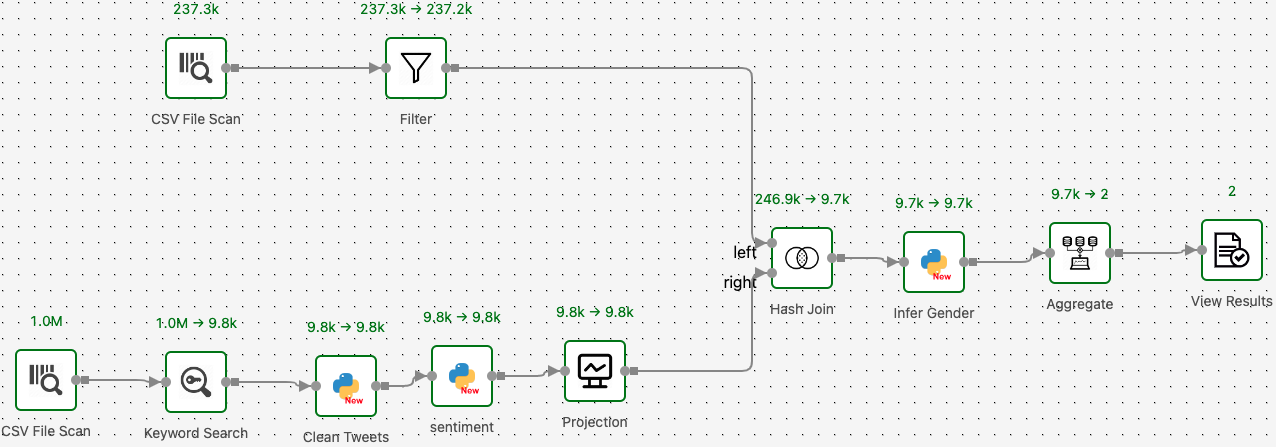}
  \caption{Sample workflow of Tweets to infer the gender of the Tweeter.}
           \label{fig:wkflow4}
\end{figure*}

   \begin{figure*}[htbp]
         \centering
\includegraphics[width=\linewidth]{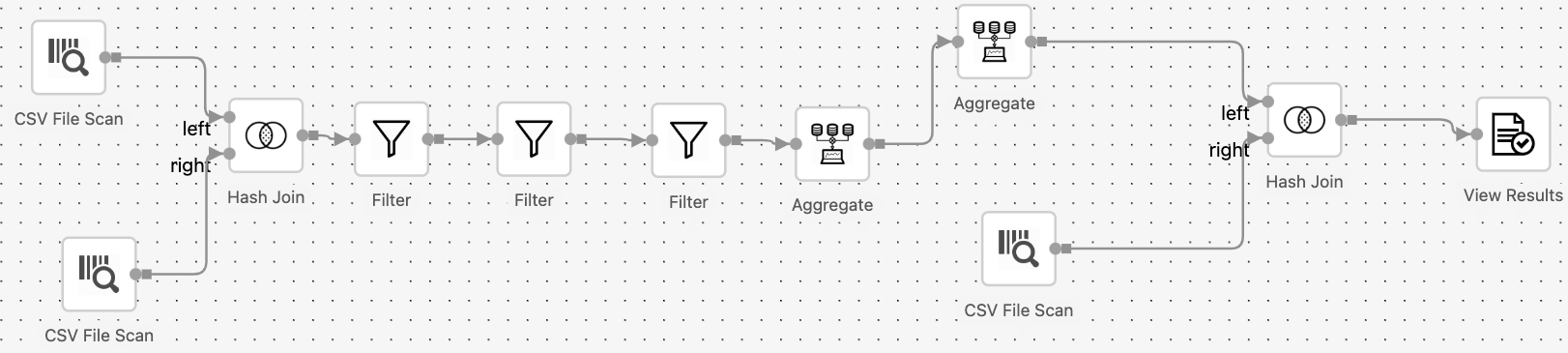}
  \caption{Sample workflow of IMDB movies to calculate the ratio of original to non-original movies.}
           \label{fig:wkflow5}
\end{figure*}

   \begin{figure*}[htbp]
         \centering
\includegraphics[width=\linewidth]{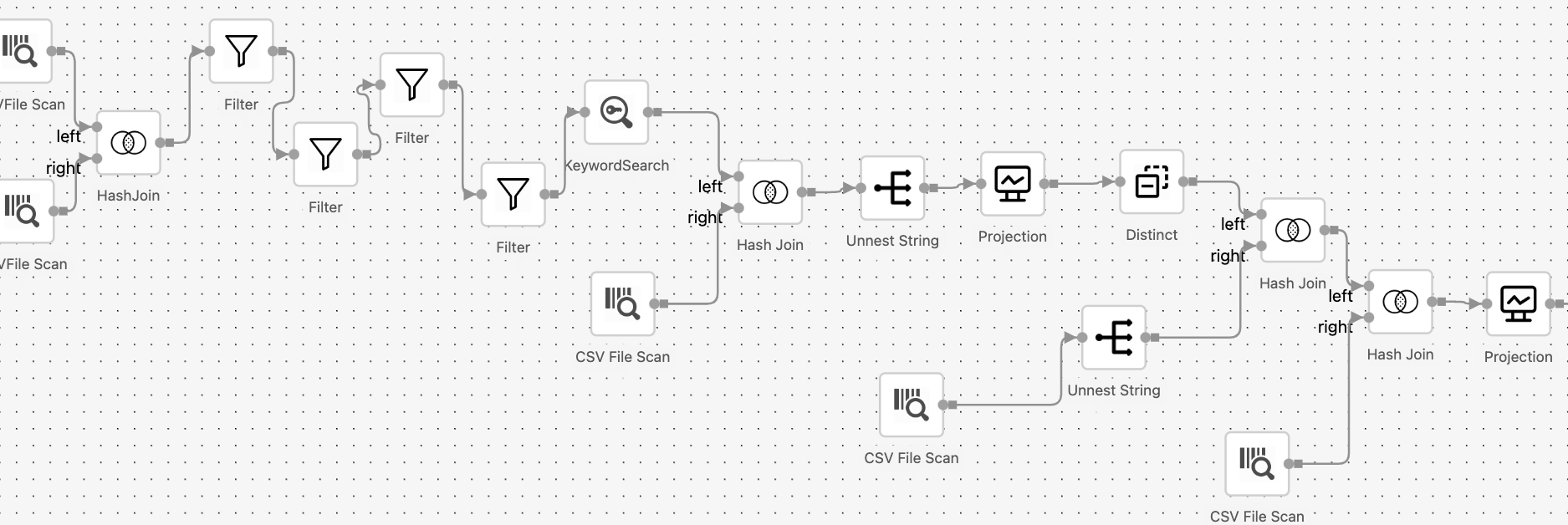}
  \caption{A portion of a workflow to get all movies of directors that have certain criteria.}
           \label{fig:wkflow6}
\end{figure*}

\section{Details of the Workflows Used to Analyze their Complexity} \label{sec:appendix-c}

Table~\ref{tbl:workflows-alteryx} details the Alteryx~\cite{alteryx} workflows that were collected to evaluate their complexity. Alteryx has many sample workflows when installing the desktop version. We chose some of the workflows that are publicly available to give examples to the readers.

\begin{table*}[tbp]
\caption{Sample Alteryx workflows. \label{tbl:workflows-alteryx}}
\begin{adjustbox}{width=\linewidth,center}
\begin{tabular}{|p{6.7cm}|p{9.7cm}|} \hline
 \textbf{Use case name} & \textbf{Reference} \\
\hline \hline
 Establishing a Reusable ETL Workflow for Non-Profits & \url{https://community.alteryx.com/t5/Maveryx-Success-Stories/Establishing-a-Reusable-ETL-Workflow-for-Non-Profits/ta-p/182871} \\ \hline
  Basic In-DB Configuration for Impala and Spark & \url{https://community.alteryx.com/t5/Engine-Works/Tackle-your-Big-Data-Use-Cases-with-Alteryx-in-Database/ba-p/4527} \\  \hline
 Automated Billing Reconciliation Using the Image to Text Tool & \url{https://community.alteryx.com/t5/Data-Science/Automated-Billing-Reconciliation-Using-the-Image-to-Text-Tool/ba-p/1051772} \\ \hline
 UK Tool Centre Modernizes Multiple Business Departments & \url{https://community.alteryx.com/t5/Maveryx-Success-Stories/UK-Tool-Centre-Modernizes-Multiple-Business-Departments/ta-p/433268} \\ \hline
Introducing the Workflow Summary Tool, Powered by Generative AI & \url{https://community.alteryx.com/t5/Data-Science/Introducing-the-Workflow-Summary-Tool-Powered-by-Generative-AI/ba-p/1122275} \\ \hline
Sales Pipeline Designer Workflow  & \url{https://www.alteryx.com/resources/use-case/sales-pipeline-consolidation-and-reporting} \\  \hline
Applicant Reporting Live & \url{https://hkrtrainings.com/alteryx-workflow} \\ \hline
 Office of Finance & \url{https://www.alteryx.com/resources/use-case/reconciliation-of-financial-systems} \\  \hline
\end{tabular}
\end{adjustbox}
\end{table*}

\section{Workflows Used in Evaluation of Existing EVs} \label{sec:appendix-a}
The workload of Texera~\cite{texera} belongs to real customers and cannot be made publicly available. Table~\ref{tbl:workflows} details the Knime~\cite{knimeworkflows:website} workflows that were collected to evaluate the support of existing EVs in verifying the equivalence of these workflows. Knime had 18,663 workflows at the time of collection. We examined the top 37 workflows.

\begin{table*}[b]
\caption{Knime workflows used in the evaluation. \label{tbl:workflows}}
\begin{adjustbox}{width=\linewidth,center}
\begin{tabular}{|p{6.7cm}|p{9.7cm}|} \hline
 \textbf{Workflow name} & \textbf{Reference} \\
\hline \hline
 The Machine Learning Canvas in KNIME & \url{https://hub.knime.com/ali_alkan/spaces/Data%20and%20Analytics%20Strategy%20for%20Business/Machine%20Learning%20Canvas%20in%20KNIME~e_J8HACxUdPLXaEY/current-state} \\ \hline
  REST API for Sentiment Analysis & \url{https://hub.knime.com/jtyler/spaces/Public/Sentiment_Predictor~gXyySX5ZAa9QzDnv/current-state} \\  \hline
 Automated Reporting of Receivables & \url{https://hub.knime.com/knime/spaces/Finance,%20Accounting,%20and%20Audit/Automated%20Reporting%20of%20Receivables~Ptlrd84tYrTMNtud/current-state} \\ \hline
 TweetKollidR & \url{https://hub.knime.com/angusveitch/spaces/Public/TweetKollidR~vy4F4J-RKsG_bjDR/current-state} \\ \hline
TextKleaner & \url{https://hub.knime.com/angusveitch/spaces/Public/TextKleaner~vw0_xvNhfjw3hPM-/current-state} \\ \hline
Convert XLS to SDF  & \url{https://hub.knime.com/swebb/spaces/Public/Demos/Convert%20XLS%20to%20SDF~ak1sOw7TwRPWWBB0/current-state} \\  \hline
Reproducible (Minimal) Workflow Example Template & \url{https://hub.knime.com/ipazin/spaces/Public/Reproducible%20(Minimal)%20Workflow%20Example%20Template~9a4RPB0FMAVuN7OQ/current-state} \\ \hline
 AutoML Component via Interactive Views & \url{https://hub.knime.com/knime/spaces/Examples/09_Enterprise/04_Integrated_Deployment/03_AutoML_Deployment/01_AutoML_Component_via_Interactive_Views~ZBbj2fObfRwrp8ht/most-recent} \\  \hline
COVID-19 Live Visualization using Guided Analytics & \url{https://hub.knime.com/paolotamag/spaces/Public/COVID-19_Live_Visualization~SjvvBM2fXG3APLP-/most-recent} \\  \hline
 Read Data from Google Sheets and Google Drive & \url{https://hub.knime.com/knime/spaces/Beginners%20Space/01_Read/03_Read_Data_from_Google_Drive~JKP_DPzsa9BQ0k-9/current-state} \\  \hline
 Sentiment Analysis with BERT & \url{https://hub.knime.com/knime/spaces/Examples/04_Analytics/14_Deep_Learning/04_TensorFlow2/01_BERT_Sentiment_Analysis~CKMYTMCoD5OJ0LH_/most-recent} \\ \hline
Transform Data using GroupBy and Joiner nodes & \url{https://hub.knime.com/knime/spaces/Beginners%20Space/03_Transform/01_Transform_Using_Group_By_and_Joiner_nodes~4LoOhUEVLc4PqLfI/current-state} \\  \hline
 Read Data from Amazon S3 & \url{https://hub.knime.com/knime/spaces/Beginners%20Space/01_Read/04_Read_Data_from_Amazon_S3~3QR26_e1-7sXIliB/current-state} \\ \hline 
Read Data from Microsoft Azure Cloud & \url{https://hub.knime.com/knime/spaces/Beginners%20Space/01_Read/05_Read_Data_from_Microsoft_Azure_Cloud~7a381TgHiTEVMe52/current-state}
 \\  \hline
 TeachOpenCADD - a teaching platform for computer-aided drug design using KNIME & \url{https://hub.knime.com/volkamerlab/spaces/Public/TeachOpenCADD/TeachOpenCADD~xYhrR1mfFcGNxz7I/current-state} \\ \hline
 Transform Data using Pivoting node & \url{https://hub.knime.com/knime/spaces/Beginners%20Space/03_Transform/02_Transform_Using_the_Pivoting_Node~XZ430qpw8_PFIPuS/current-state} \\  \hline
Transform Data using Rule Engine and String Manipulation nodes & \url{https://hub.knime.com/knime/spaces/Beginners%20Space/03_Transform/03_Transform_Using_Rule_Engine_and_String_Manipulation_Node~LKvg8ff1sJAU-p14/current-state} \\ \hline
 Transform Data using Math Formula and Cell Splitter nodes & \url{https://hub.knime.com/knime/spaces/Beginners%20Space/03_Transform/04_Transform_Using_Math_formula_and_Cell_Splitter_node~G3KROtVim9UnBVW2/current-state} \\  \hline
ChatGPT as a KNIME chat dashboard & \url{https://hub.knime.com/roberto_cadili/spaces/Public/ChatGPT%20as%20a%20KNIME%20chat%20dashboard~L1Wc6IcyxchsP1SB/current-state} \\  \hline
Various Examples of Regex in KNIME & \url{https://hub.knime.com/victor_palacios/spaces/Public/Various%20Examples%20of%20Regex%20in%20KNIME~J1anMyaS7MMiRV29/current-state} \\ \hline
 Budget Monitoring Report & \url{https://hub.knime.com/knime/spaces/Finance,%20Accounting,%20and%20Audit/Budget%20Monitoring%20Report~_jd6zP-aXbDUQZ8q/current-state} \\ \hline
\end{tabular}
\end{adjustbox}
\end{table*}

\begin{table*}[b]
\begin{adjustbox}{width=\linewidth,center}
\begin{tabular}{|p{6.7cm}|p{9.7cm}|} \hline
Analyze Data by Training a k-Means Clustering on Location Data & \url{https://hub.knime.com/knime/spaces/Beginners%20Space/04_Analyze/03_Analyze_Clustering_location_data~gZluj12KJvYshWKz/current-state} \\  \hline
Looping over all columns and manipulation of each & \url{https://hub.knime.com/knime/spaces/Examples/06_Control_Structures/04_Loops/03_Looping_over_all_columns_and_manipulation_of_each~Ck4Q3ZW7PEUAyl4L/most-recent} \\  \hline
 Multivariate Time Series Analysis with an RNN - Training & \url{https://hub.knime.com/kathrin/spaces/Multivariate%20Times%20Series%20with%20RNN/Multivariate_Time_Series_RNN_Keras_Training~B45XEOAuWeQBzO9b/current-state} \\  \hline
Sentiment Analysis (Classification) of Documents & \url{https://hub.knime.com/knime/spaces/Examples/08_Other_Analytics_Types/01_Text_Processing/03_Sentiment_Classification~ZHAExldZ5M7q6hdG/most-recent} \\  \hline
 Explore Sunburst and Stacked Area Chart & \url{https://hub.knime.com/knime/spaces/Beginners%20Space/02_Explore/04_Explore_Sunburst_and_Stacked_Area_Chart~Ywx0qSLVIymG5M9d/current-state} \\  \hline
 Explore Scatter Plot and Box Plot & \url{https://hub.knime.com/knime/spaces/Beginners%20Space/02_Explore/01_Explore_Scatter_Plot_and_Box_Plot~7W5Q1ZAUwH9nyI2o/current-state} \\ \hline
 Explore Donut Chart and Data Explorer & \url{https://hub.knime.com/knime/spaces/Beginners%20Space/02_Explore/02_Explore_Donut_Chart_and%20_Data_Explorer~V04Ul-Nb7HAaBEEO/current-state} \\ \hline
Explore Bar Chart and Line Plot & \url{https://hub.knime.com/knime/spaces/Beginners%20Space/02_Explore/03_Explore_Bar_Chart_and_Line_Plot~bfRWPtdBOMx5SHea/current-state} \\ \hline
 ML Prototyping for Bioactivity Data & \url{https://hub.knime.com/knime/spaces/Life%20Sciences/Events/2021_03_03_Cheminformatics_with_KNIME_Webinar/ML%20Prototyping%20for%20Bioactivity%20Data~ztcsrat0YL_Pgz_q/current-state} \\ \hline
Performing a k-Means clustering & \url{https://hub.knime.com/knime/spaces/Examples/04_Analytics/03_Clustering/01_Performing_a_k-Means_Clustering~QqKlIQmDnCAg0lEQ/most-recent} \\  \hline
 Recursive replacement in strings based on a replacement dictionary & \url{https://hub.knime.com/knime/spaces/Examples/06_Control_Structures/04_Loops/08_Example_for_Recursive_Replacement_of_Strings~MjiD7VtF32L8KI8X/current-state} \\  \hline
Challenge 1 - Keep Marge Simpson & \url{https://hub.knime.com/rs1/spaces/Public/weekly_challenges/Challenge1-KeepSimpsonMarge~at1e_xXCV22tmTW3/current-state} \\  \hline
 Guided Labeling for Document Classification & \url{https://hub.knime.com/knime/spaces/Examples/50_Applications/57_Guided_Labeling/01_Guided_Labeling_for_Document_Classification~y5nhpbd1PP5F4WKH/current-state} \\ \hline
 Guided Automation & \url{https://hub.knime.com/knime/spaces/Examples/50_Applications/36_Guided_Analytics_for_ML_Automation/01_Guided_Analytics_for_ML_Automation/01_Guided_Analytics_for_ML_Automation~eAGfGtEAIr-1iYR-/current-state} \\ \hline
\end{tabular}
\end{adjustbox}
\end{table*}


\end{document}